\title{Lifted Projective Reed-Solomon Codes}
\date{}
\author{Julien \textsc{Lavauzelle}\\
  Laboratoire LIX, École Polytechnique, Inria \& CNRS UMR 7161\\
  Université Paris-Saclay\\
  \href{mailto:julien.lavauzelle@inria.fr}{julien.lavauzelle@inria.fr}
}
\newcommand\FF{\mathbb{F}}
\newcommand\NN{\mathbb{N}}
\newcommand\PP{\mathbb{P}}
\renewcommand\AA{\mathbb{A}}
\newcommand\EE{\mathbb{E}}
\newcommand\Deg{\mathrm{Deg}}
\newcommand\Lift{\mathrm{Lift}}
\newcommand\PLift{\mathrm{PLift}}
\newcommand\Diag{\mathrm{Diag}}
\newcommand\Hom{\mathrm{Hom}}
\newcommand\Emb{\mathrm{Emb}}
\newcommand\Poly{\mathrm{Poly}}
\newcommand\RM{\mathrm{RM}}
\newcommand\RS{\mathrm{RS}}
\newcommand\PRM{\mathrm{PRM}}
\newcommand\PRS{\mathrm{PRS}}
\newcommand\Aut{\mathrm{Aut}}
\newcommand\Perm{\mathrm{Perm}}
\newcommand\Aff{\mathrm{Aff}}
\newcommand\Proj{\mathrm{Proj}}
\newcommand\Iso{\mathrm{Iso}}
\newcommand\GL{\mathrm{GL}}
\newcommand\PDeg{\mathrm{PDeg}}
\newcommand\ADeg{\mathrm{ADeg}}
\DeclareMathOperator{\rank}{rank}
\DeclareMathOperator{\ev}{ev}
\newcommand\calC{\mathcal{C}}
\newcommand\calD{\mathcal{D}}
\newcommand\calF{\mathcal{F}}
\newcommand\calO{\mathcal{O}}
\newcommand\calP{\mathcal{P}}
\newcommand\calR{\mathcal{R}}
\newcommand\Corr{\mathsf{Corr}}
\newcommand\mydef{\coloneqq} 
\theoremstyle{plain}
\newtheorem{theorem}{Theorem}
\newtheorem*{theorem*}{Theorem}
\newtheorem{proposition}[theorem]{Proposition}
\newtheorem*{proposition*}{Proposition}
\newtheorem{corollary}[theorem]{Corollary}
\newtheorem*{corollary*}{Corollary}
\newtheorem{lemma}[theorem]{Lemma}
\newtheorem*{lemma*}{Lemma}
\theoremstyle{definition}
\newtheorem{definition}[theorem]{Definition}
\newtheorem{remark}[theorem]{Remark}
\newtheorem{example}[theorem]{Example}
\newtheorem*{example*}{Example}
\begin{document}

\maketitle

\begin{abstract}
  Lifted Reed-Solomon codes, introduced by Guo, Kopparty and Sudan in 2013, are known as one of the few families of high-rate locally correctable codes. They are built through the evaluation over the affine space of multivariate polynomials whose restriction along any affine line can be interpolated as a low degree univariate polynomial.

  In this work, we give a formal definition of their analogues over projective spaces, and we study some of their parameters and features. Local correcting algorithms are first derived from the very nature of these codes, generalizing the well-known local correcting algorithms for Reed-Muller codes. We also prove that the lifting of both Reed-Solomon and projective Reed-Solomon codes are deeply linked through shortening and puncturing operations. It leads to recursive formulae on their dimension and their monomial bases. We finally emphasize the practicality of lifted projective Reed-Solomon codes by computing their information sets and by providing an implementation of the codes and their local correcting algorithms.
\end{abstract}



\section{Introduction}

\paragraph{Motivation and previous works.} Locally decodable codes (LDC) and locally correctable codes (LCC) are codes equipped with a probabilistic algorithm which can efficiently decode or correct a single symbol of a noisy codeword, by querying only a few of its symbols. Low degree Reed-Muller codes define a well-known family of LDCs/LCCs with reasonable rate. Indeed, when restricted to an affine line, a sufficiently low-degree multivariate polynomial can be interpolated by a low-degree univariate polynomial. However, the rate $R$ of such Reed-Muller codes stays stuck below $1/2$. Multiplicity codes~\cite{KoppartySY14} were the first family of codes breaking the $R=1/2$ barrier for correcting a constant fraction of errors. The construction was based on a generalization of Reed-Muller codes which introduce multiplicities in the evaluation map. Shortly after the \emph{multiplicity codes} breakthrough, Guo, Kopparty and Sudan~\cite{GuoKS13} proposed another generalization of Reed-Muller codes and considered \emph{all} the multivariate polynomials (\emph{i.e.} not only the low-degree ones) which can be interpolated as low-degree univariate polynomials when restricted to a line. Surprisingly, it sometimes appears that much more polynomials satisfy this property than the low-degree ones lying in Reed-Muller codes. Resulting codes are named \emph{lifted Reed-Solomon codes}, and in this work, more shortly referred to as \emph{affine lifted codes}.

\paragraph{Organisation.} In this work, we show how to build analogues of these codes in projective spaces, that we call \emph{projective lifted codes}. Our construction relies on the notion of degree sets which also appears in~\cite{GuoKS13} and helps us to exhibit relations between affine and projective lifted codes. Section~\ref{sec:notation} introduce tools necessary to our construction. Affine and projective lifted codes are built in Section~\ref{sec:lifted-codes}, where we also prove main properties of projective lifted codes, notably their monomiality and the structure of their degree set. In Section~\ref{sec:local-correction} we present a family of local correcting algorithms for projective lifted codes, whose locality depends on the number of admissible errors on the queried line. Section~\ref{sec:relations} is devoted to the links between affine and projective lifted codes, through puncturing and shortening. Finally, we show miscellaneous properties of projective lifted codes in Section~\ref{sec:miscellaneous} which emphasize their explicitness and practicality: we present explicit information sets, we bound on their minimum distance and we prove their (quasi-)cyclicity under certain conditions.

We emphasize the practicality of our construction by presenting tables of parameters of projective lifted codes in Appendix~\ref{app:computation}. A basic implementation of affine and projective lifted codes in the open-source software SageMath~\cite{SageMath} is also made available\footnote{see \url{https://bitbucket.org/jlavauzelle/lifted_codes}}.

\section{Notation and preliminaries}
\label{sec:notation}

This section is devoted to introducing the algebraic background for the definition of affine and projective lifted codes.

\subsection{Geometry, polynomials and  evaluation maps}

We denote by $\FF_q$ the finite field with $q$ elements, and by $\FF_q^\times$ its non-zero elements. For $m \ge 1$, the \emph{affine space} of dimension $m$ is the set of $m$-tuples with coordinates in $\FF_q$, and is denoted $\AA^m$. We also define the \emph{projective space} of dimension $m$ as
\[
\PP^m \mydef (\AA^{m+1} \setminus \{ 0 \}) / \sim\,,
\]
where for $\mathbf{a}, \mathbf{b} \in \AA^{m+1} \setminus \{ 0 \}$, the relation $\sim$ is given by
\[
\mathbf{a}  \sim \mathbf{b} \iff \exists \lambda \in \FF_q^{\times}, \mathbf{a} = \lambda \mathbf{b} \,.
\]
A projective point will be denoted $\mathbf{a} = (a_0 : \dots : a_m) \in \PP^m$. It has $(q-1)$ different representatives, and we call \emph{standard representative} the only one such that  $\forall j < i, a_j = 0$ and $a_i = 1$. The projective space $\PP^m$ contains $\theta_{m,q} \mydef \frac{q^{m+1}-1}{q-1}$ distinct points.

The hyperplane at infinity $\Pi_{\infty} \mydef \{ \mathbf{a} \in \PP^m, a_0 = 0 \}$ is isomorphic to $\PP^{m-1}$, and the bijective map $(a_1, \dots, a_m) \mapsto (1 : a_1 : \dots : a_m)$ embeds $\AA^m$ into $\PP^m$. A \emph{projective line} is a $(q+1)$-subset of $\PP^m$ of the form
\[
L_{\mathbf{a}, \mathbf{b}} \mydef \{ x \mathbf{a} + y \mathbf{b}, (x : y) \in \PP^1 \} 
\]
for some distinct points $\mathbf{a}, \mathbf{b} \in \PP^m$. The line $L_{\mathbf{a}, \mathbf{b}}$ is the only one containing both $\mathbf{a}$ and $\mathbf{b}$, and there are exactly $\theta_{m-1,q} = |\PP^{m-1}| = \frac{q^m-1}{q-1}$ projective lines on which a given point $\mathbf{a} \in \PP^m$ lies.

\paragraph{Polynomials and degrees.} We denote by $\FF_q[\mathbf{X}] \mydef \FF_q[X_1, \dots, X_m]$ the ring of $m$-variate polynomials over $\FF_q$. Following the terminology given in~\cite{GuoKS13}, for $f = \sum_{\mathbf{d}} f_{\mathbf{d}} \mathbf{X^d} \in \FF_q[\mathbf{X}]$, the set $\{ \mathbf{d} \in \NN^m,  f_{\mathbf{d}} \ne 0\}$ is called the set of \emph{degrees} of $f$ and is denoted $\Deg(f)$. For a subset $D \subseteq \NN^m$, we denote by $\mathrm{Poly}(D)$ the vector space of polynomials generated by monomials $\mathbf{X^d}$ for $\mathbf{d} \in D$:
\[
\mathrm{Poly}(D) \mydef \langle \mathbf{X^d}, \mathbf{d} \in D \rangle \subseteq \FF_q[\mathbf{X}]\,.
\]
Some subsets $D$ are of particular interest. For instance, for $v \in \NN$,
\begin{itemize}
  \item the $1$-norm ball $B^m_1(v) \mydef \{ \mathbf{d} \in \NN^m, \sum_{i=1}^m d_i \le v \}$ generates the space $\FF_q[\mathbf{X}]_v$ of multivariate polynomials of total degree bounded by $v$,
  \item the $\infty$-norm ball $B^m_\infty(v) \mydef \{ \mathbf{d} \in \NN^m,  d_i \le v, \forall i =1,\dots,m \}$ generates the space of multivariate polynomials of partial degree bounded by $v$,
  \item the $1$-norm sphere  $S^{m+1}(v) \mydef \{ \mathbf{d} \in \NN^{m+1}, \sum_{i=0}^m d_i = v \}$ generates the space $\FF_q[\mathbf{X}]^H_v$ of homogeneous polynomials of degree $v$ (plus the zero polynomial).
\end{itemize}

We write $|\mathbf{d}| \mydef \sum_i d_i$ the \emph{weight} of a tuple of integers $\mathbf{d}$.

\paragraph{Evaluation of homogeneous polynomials on a projective point.} For any homogeneous polynomial $f \in \FF_q[\mathbf{X}]^H_v$, it is well-known that
\[
f(\lambda\mathbf{X}) = \lambda^vf(\mathbf{X})\,, \quad \forall \lambda \in \FF_q^{\times}\,.
\]
It means that different representatives of a fixed projective point may result to different evaluations by $f$. In order to remove any ambiguity, we adopt the following definition. Let $(a_0 : \dots : a_m)$ be the standard representation of a projective point $\mathbf{P} \in \PP^m$. Then we define the \emph{evaluation of $f$ at $\mathbf{P}$} as:
\[
\ev_{\mathbf{P}}(f) := f(a_0, \dots, a_m)\,.
\]
In other words, \emph{every projective point must be written in the unique standard representation} when evaluated by homogeneous polynomials.

Denote by $(\mathbf{P}_1, \dots, \mathbf{P}_{\theta_{m,q}})$ (resp. $(\mathbf{Q}_1, \dots, \mathbf{Q}_{q^m})$) an ordered list of all the projective (resp. affine) points. Thanks to the previous definition, the following evaluation map can be defined without ambiguity for all $v \ge 0$:
\[
\begin{array}{rclc}
  \ev_{\PP^m} : & \FF_q[\mathbf{X}]^H_v &\to &\FF_q^{\theta_{m,q}}\\
  & f & \mapsto & (\ev_{\mathbf{P}_1}(f), \dots, \ev_{\mathbf{P}_{\theta_{m,q}}}(f))
\end{array}
\]
Its affine analogue is:
\[
\begin{array}{rclc}
  \ev_{\AA^m} : & \FF_q[\mathbf{X}] &\to &\FF_q^{q^m}\\
  & f & \mapsto & (f(\mathbf{Q}_1), \dots, f(\mathbf{Q}_{q^m}))
\end{array}
\]
Clearly, $\ev_{\PP^m}$ and $\ev_{\AA^m}$ are $\FF_q$-linear maps. Since $x^q = x$ for all $x \in \FF_q$, we have
\[
\ker(\ev_{\AA^m}) = \langle X_i^q - X_i,\, \forall i=1,\dots,m  \rangle\,.
\]
Moreover, since $\ev_{\PP^m}$ evaluates homogeneous polynomials, for a fixed $v \in \NN$
\[
\ker(\ev_{\PP^m}) = \langle X_i^qX_j - X_iX_j^q,\, \forall i \ne j \in \{0,\dots,m\} \rangle \cap \FF_q[\mathbf{X}]^H_v\,.
\]
These properties are formally proved in~\cite{RenteriaT97}.

\subsection{Evaluation codes}

A common way to build linear codes is to evaluate polynomials over a list of points. In this subsection, we formally define the family of \emph{evaluation codes} we are studying. We also recall well-known examples of such codes, namely the Reed-Solomon and Reed-Muller codes, as well as their projective analogues.  

\begin{definition}[affine evaluation code]
  Let $\mathcal{F}$ be a linear subspace of $\FF_q[\mathbf{X}]$. The \emph{affine evaluation code} associated to  $\mathcal{F}$ is the $\FF_q$-linear code of length $n = |\AA^m| = q^m$ composed by the evaluation vectors of polynomials in $\mathcal{F}$:
\[
\ev_{\AA^m}(\mathcal{F}) = \{ \ev_{\AA^m}(f), f \in \mathcal{F} \} \subseteq \FF_q^{q^m}\,.
\]
The $n$-tuple $(\mathbf{Q}_1, \dots, \mathbf{Q}_n)$ of evaluation points is called the \emph{support} of the code.
\end{definition}

\begin{definition}[projective evaluation code]
  Let $\mathcal{F}$ be a linear subspace of $\FF_q[\mathbf{X}]_v^H$. The \emph{projective evaluation code} associated to $\mathcal{F}$ is the $\FF_q$-linear code of length $n = |\PP^m| = \theta_{m,q}$ composed by the evaluation vectors of polynomials in $\mathcal{F}$:
\[
\ev_{\PP^m}(\mathcal{F}) = \{ \ev_{\PP^m}(f), f \in \mathcal{F} \} \subseteq  \FF_q^{\theta_{m,q}} \,.
\]
Once again, the $n$-tuple $(\mathbf{P}_1, \dots, \mathbf{P}_n)$ of evaluation points is called the \emph{support} of the code.
\end{definition}

We point out a specific class of evaluation codes which is generated by evaluation vectors of monomials.

\begin{definition}[monomial code]
  An affine evaluation code $\calC = \ev_{\AA^m}(\mathcal{F})$ (resp. a projective evaluation code $\calC = \ev_{\PP^m}(\mathcal{F})$) is said \emph{monomial} if $\mathcal{F} = \mathrm{Poly}(D)$ for some $D \subseteq \NN^m$ (resp. $D \subseteq S^{m+1}(v)$).
\end{definition}

As we will see later, monomial codes turn out to be very convenient to describe with their set of degrees $D$.

\paragraph{Reed-Solomon and Reed-Muller codes.}

\begin{definition}[Reed-Solomon code]
  Let $0 \le k \le q-1$. The vector space of evaluation vectors of polynomials of degree $\le k$ over $\FF_q$ is called the (full-length) \emph{Reed-Solomon code}:
  \[
  \RS_q(k) \mydef \{ \ev_{\AA^1}(f), f \in \FF_q[X]_k \}\,.
  \]
  and has dimension $k+1$ over $\FF_q$.
\end{definition}

\begin{definition}[Reed-Muller code]
   Let $0 \le d \le m(q-1)$. The (generalized) \emph{Reed-Muller code} of order $m$ and degree $d$ over $\FF_q$ is the subspace of $\FF_q^{q^m}$ consisting in evaluation vectors of $m$-variate polynomials over $\FF_q$ of total degree $\le d$:
  \[
  \mathrm{RM}_q(m, d) \mydef \{ \ev_{\AA^m}(f), f \in \FF_q[\mathbf{X}]_d \}\,.
  \]
  For $d>0$, the dimension of $\mathrm{RM}_q(m, d)$ is given by~\cite{AssmusK92}:
  \[
  \dim \mathrm{RM}_q(m, d) = \sum_{i=0}^d \sum_{j=0}^m (-1)^j \binom{m}{j}\binom{i-jq+m-1}{i-jq}\,,
  \]
  and simplifies to $\binom{m+d}{m}$ for $d \le q-1$.
\end{definition}

Reed-Muller codes generalize Reed-Solomon codes, in the sense that $\mathrm{RM}_q(1,k) = \RS_q(k)$.

\paragraph{Projective Reed-Solomon and Reed-Muller codes.} Previous codes can be naturally adapted to the context of projective spaces.

\begin{definition}[Projective Reed-Solomon code]
  Let $0\le k \le q$. The \emph{projective Reed-Solomon code} of dimension $k+1$ over $\FF_q$ is the linear code of length $q+1 = |\PP^1|$ consisting of the evaluation of bivariate homogeneous polynomials of degree $k$ over $\FF_q$:
  \[
  \PRS_q(k) \mydef \{ \ev_{\PP^1}(f), f \in \FF_q[X,Y]^H_k \}\,.
  \]
\end{definition}

Projective Reed-Solomon codes are also called \emph{extended}, or \emph{doubly-extended} Reed-Solomon codes. Similarly, Reed-Muller codes have a projective analogue, defined as follows~\cite{Lachaud86, Lachaud90, Sorensen91}:

\begin{definition}[Projective Reed-Muller code]
  Let $1 \le v \le m(q-1)$. The \emph{projective Reed-Muller code} of order $m$ and degree $v$ over $\FF_q$ is the linear code of length $|\PP^m| = (q^{m+1}-1)/(q-1)$ consisting of the evaluation of $(m+1)$-variate homogeneous polynomials over $\FF_q$ of degree $v$:
  \[
  \mathrm{PRM}_q(m, v) \mydef \{ \ev_{\PP^m}(f), f \in \FF_q[\mathbf{X}]^H_v \}\,.
  \]
  The dimension of $\mathrm{PRM}_q(m, v)$ is (see~\cite{Sorensen91}):
  \[
  \dim \mathrm{PRM}_q(m, v) = \sum_{t \in I_v} \left( \sum_{j=0}^{m+1} (-1)^j \binom{m+1}{j} \binom{t-jq+m}{t-jq} \right)\,,
  \]
where $I_v = \{ t \in [1,v],\, t \equiv v \!\mod q-1\}$. For $v \le q-1$, it simplifies to $\dim \mathrm{PRM}_q(m, v) = \binom{m+v}{v}$.
\end{definition}

Once again, by definition we have $\PRS_q(k) = \mathrm{PRM}_q(1, k)$ for every $1 \le k \le q-1$.

\subsection{Reduced degree sets}

In the previous subsection, we have seen that well-known families of linear codes are defined as the image of subspaces of polynomials by evaluation maps. For coding theoretic reasons (\emph{e.g.} giving the dimension of the code, or computing a basis), it is interesting to find sets $D \subseteq S^{m+1}(v)$ (resp. $D \subseteq B_1^m(v)$) such that the map $\ev_{\PP^m}$  (resp.  $\ev_{\AA^m}$) is injective over $\mathrm{Poly}(D)$. So let us define such sets.

First of all, we introduce specific tuples.
\begin{definition}[$A$ and $P$-reduced tuples]~
  \begin{enumerate}
    \item A tuple $\mathbf{d} \in \NN^m$ is \emph{$A$-reduced} if ${\bf d}$ lies in $B^m_\infty(q-1)$.
    \item A tuple ${\bf d} = (d_0, \dots, d_m) \in \NN^{m+1}$ is \emph{$P$-reduced} if, for all $0 \le i \le m$:
  \[
  d_i \ge q \quad \Rightarrow \quad \left\{ \begin{array}{ll} d_j = 0 &\forall j < i\,,\\d_j \le q-1 &\forall j > i\,. \end{array}\right.
  \]
\end{enumerate}
  We see that any $A$-reduced tuple is also $P$-reduced. We also say that a set $D$ of tuples is \emph{$A$-reduced} (resp. \emph{$P$-reduced}) if every tuple it contains is $A$-reduced (resp. $P$-reduced).
\end{definition}

Denote by $(\mathbf{e}_1,\dots,\mathbf{e}_m)$ the canonical basis of $\NN^m$. Let $\mathbf{d} \in \NN^m$, $1 \le i< j \le m$, and assume that $d_j \ge q$ and $d_i \ge 1$. For such $\mathbf{d}$ (and only for such $\mathbf{d}$), we define $\rho_j(\mathbf{d}) \mydef \mathbf{d} - (q-1)\mathbf{e}_j$ and $\tau_{ij}(\mathbf{d}) \mydef \mathbf{d} + (q-1)(\mathbf{e}_i-\mathbf{e_j})$. Remark that $|\tau_{ij}({\bf d})| = |{\bf d}|$.

\begin{remark}
  \label{rem:reduction}
  Let $\mathbf{d} \in \NN^m$ or $\NN^{m+1}$ depending on the context (affine or projective). Then, 
  \begin{itemize}
  \item we have $\ev_{\AA^m}(\mathbf{X}^{\rho_j(\mathbf{d})}) = \ev_{\AA^m}(\mathbf{X^{\mathbf{d}}})$ and $\ev_{\PP^m}(\mathbf{X}^{\tau_{ij}(\mathbf{d})}) = \ev_{\PP^m}(\mathbf{X^{\mathbf{d}}})$;
  \item as long as they are defined, $\rho_j \circ \rho_\ell = \rho_\ell \circ \rho_j$ and $\tau_{ij} \circ \tau_{k\ell} = \tau_{k\ell} \circ \tau_{ij}$;
  \item if no $\tau_{ij}$ can be applied to $\mathbf{d}$, then $\mathbf{d}$ is $P$-reduced;
  \item if no $\rho_j$ can be applied to $\mathbf{d}$, then $\mathbf{d}$ is $A$-reduced;
  \item if we keep applying  to some tuple $\mathbf{d}$ the maps $\tau_{ij}$, for $0 \le i< j \le m$, until we cannot apply any of them, then we obtain a $P$-reduced tuple;
  \item if we keep applying to some tuple $\mathbf{d}$ the maps $\rho_j$, for $1 \le j \le m$, until we cannot apply any of them, then we obtain an $A$-reduced tuple.
  \end{itemize}
\end{remark}

\begin{definition}
  Let $\mathbf{d} \in \NN^m$. The \emph{$A$-reduction} of $\mathbf{d}$ is the tuple $\underline{\mathbf{d}} \in \NN^m$ which is obtained by applying iteratively $\rho_j$ (for $1 \le j \le m$) until the result lies in $B^m_\infty(q-1)$. It satisfies $\ev_{\AA^m}(\mathbf{X^{\underline{\mathbf{d}}}}) = \ev_{\AA^m}(\mathbf{X^{\mathbf{d}}})$. The $A$-reduction of $D \subseteq \NN^{m+1}$, denoted $\underline{D}$, consists in the $A$-reduction of the tuples in $D$.
\end{definition}

\begin{definition}
  Let $\mathbf{d} \in S^{m+1}(v)$ for some $v>0$. The \emph{$P$-reduction} of $\mathbf{d}$ is the tuple $\overline{\mathbf{d}} \in S^{m+1}(v)$ which is obtained by applying iteratively $\tau_{ij}$ (for $0 \le i < j \le m$) until the result is $P$-reduced. It satisfies $\ev_{\PP^m}(\mathbf{X^{\overline{\mathbf{d}}}}) = \ev_{\PP^m}(\mathbf{X^{\mathbf{d}}})$. The $P$-reduction of $D \subseteq S^{m+1}(v)$, denoted $\overline{D}$, consists in the $P$-reduction of the tuples in $D$.
\end{definition}

$A$- and $P$-reduction are defined in order to make the evaluation maps $\ev_{\AA^m}$ and $\ev_{\PP^m}$ injective over polynomial spaces of the form $\Poly(D)$, where $D$ is $A$- or $P$-reduced. Next lemma details these properties.

\begin{lemma}
  Let $m \ge 1$ and $v \in \NN$. The following properties hold:
  \begin{enumerate}
  \item If $D \subseteq \NN^m$ is $A$-reduced, then the map $\ev_{\AA^m}$ is injective over $\mathrm{Poly}(D)$.
  \item If $D \subseteq S^{m+1}(v)$ is $P$-reduced, then the map $\ev_{\PP^m}$ is injective over $\mathrm{Poly}(D)$.
  \item For every $D \subseteq \NN^m$, the $A$-reduction $\underline{D}$ of $D$ is the unique $A$-reduced subset of $\NN^m$ satisfying
    \[
    \ev_{\AA^m}(\mathrm{Poly}(D)) = \ev_{\AA^m}(\mathrm{Poly}(\underline{D}))\,.
    \]
  \item For every $D \subseteq S^{m+1}(v)$, the $P$-reduction $\overline{D}$ of $D$ is the unique $P$-reduced subset of $S^{m+1}(v)$ satisfying
    \[
    \ev_{\PP^m}(\mathrm{Poly}(D)) = \ev_{\PP^m}(\mathrm{Poly}(\overline{D}))\,.
    \]
  \end{enumerate}
\end{lemma}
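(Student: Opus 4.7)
The plan is to prove the four assertions in order, using items (1) and (2) as the substance and items (3) and (4) as reduction/uniqueness corollaries.

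\textbf{Part 1.} I would first note that the $A$-reduced tuples are exactly the $q^m$ elements of $B_\infty^m(q-1) = \{0,\dots,q-1\}^m$. Since every function $\AA^m \to \FF_q$ is polynomial (by coordinatewise Lagrange interpolation), $\ev_{\AA^m}$ surjects onto $\FF_q^{q^m}$, a space of dimension $q^m$. The $q^m$ monomials indexed by $B_\infty^m(q-1)$ therefore span a $q^m$-dimensional image and so must be linearly independent under $\ev_{\AA^m}$. Restriction to any subset $D \subseteq B_\infty^m(q-1)$ preserves independence, proving injectivity of $\ev_{\AA^m}$ on $\Poly(D)$.

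\textbf{Part 2.} Here I would argue via Gröbner bases. Equip $\FF_q[\mathbf{X}]$ with the lex order defined by $X_m > X_{m-1} > \dots > X_0$; then for each pair $i<j$ the leading monomial of the binomial $X_i^q X_j - X_i X_j^q$ is $X_i X_j^q$, and a monomial $\mathbf{X^d}$ is divisible by some such leading monomial if and only if $\mathbf{d}$ is not $P$-reduced (the forward direction extracts an index $j$ with $d_j \ge q$ and an index $i<j$ with $d_i \ge 1$, the converse is direct). I would then verify Buchberger's criterion on the generating family $\{X_i^q X_j - X_i X_j^q\}_{i<j}$ of $\ker(\ev_{\PP^m})$: pairs with disjoint variable supports yield $S$-polynomials that reduce immediately, and the remaining overlap cases constitute a finite, concrete computation. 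This identifies the $P$-reduced monomials as the standard monomials modulo the ideal; since the reduction relations are homogeneous of degree $q+1$, intersecting with $\FF_q[\mathbf{X}]^H_v$ keeps them a basis of $\FF_q[\mathbf{X}]^H_v / (\ker(\ev_{\PP^m}) \cap \FF_q[\mathbf{X}]^H_v)$, hence linearly independent under $\ev_{\PP^m}$. Independence is preserved for any $P$-reduced $D \subseteq S^{m+1}(v)$.

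\textbf{Parts 3 and 4.} Existence of $\underline{D}$ (resp.\ $\overline{D}$) is already provided by Remark~\ref{rem:reduction}: iterating $\rho_j$ (resp.\ $\tau_{ij}$) on each tuple of $D$ terminates on an $A$-reduced (resp.\ $P$-reduced) set and preserves the evaluation of every monomial, yielding $\ev_{\AA^m}(\Poly(D)) = \ev_{\AA^m}(\Poly(\underline{D}))$ (resp.\ the analogue with $\overline{D}$). For uniqueness in Part 3, take two $A$-reduced sets $D_1, D_2$ sharing the same evaluation image. Since $D_1 \cup D_2$ is still $A$-reduced, Part 1 makes $\ev_{\AA^m}$ injective on $\Poly(D_1 \cup D_2)$. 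Then $\Poly(D_1)$ and $\Poly(D_2)$, both sitting inside $\Poly(D_1 \cup D_2)$ and mapping bijectively onto the shared image, must coincide as subspaces; linear independence of monomials as polynomials then forces $D_1 = D_2$. Part 4 is verbatim with Part 1 replaced by Part 2.

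\textbf{Main obstacle.} The only substantive step is Part 2: confirming that $\{X_i^q X_j - X_i X_j^q\}_{i<j}$ is a Gröbner basis of $\ker(\ev_{\PP^m})$ for the chosen order, which is a finite but nontrivial $S$-polynomial check. One could also bypass it by directly invoking the ideal-theoretic description of $\ker(\ev_{\PP^m})$ in \cite{RenteriaT97}; the rest of the argument is bookkeeping built on the elementary reduction properties collected in Remark~\ref{rem:reduction}.
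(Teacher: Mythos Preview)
Your proposal is correct, and Parts~3 and~4 match the paper's argument essentially verbatim. Parts~1 and~2, however, take a different route.

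For Part~1 the paper argues directly from the kernel description $\ker(\ev_{\AA^m}) = \langle X_i^q - X_i\rangle$: no polynomial supported on $B_\infty^m(q-1)$ can lie in this ideal. Your dimension count (Lagrange interpolation gives surjectivity of $\ev_{\AA^m}$ restricted to $\Poly(B_\infty^m(q-1))$, and equal dimensions force injectivity) is an equally short alternative; neither approach has an advantage here.

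For Part~2 the difference is more substantial. The paper proceeds by induction on $m$: writing $f = f_0(X_1,\dots,X_m) + X_0 f_1$, it uses the induction hypothesis on $f_0$ (which lives in a $P$-reduced set in $S^m(v)$) and then Part~1 on the dehomogenisation $g = f(1,X_1,\dots,X_m)$ (whose support is $A$-reduced because every surviving degree has $d_0 \ne 0$). Your approach instead identifies the $P$-reduced monomials as the standard monomials for the lex order $X_m > \dots > X_0$ with respect to the ideal $\langle X_i^q X_j - X_i X_j^q\rangle_{i<j}$, once that family is shown to be a Gr\"obner basis. This is conceptually clean---it explains \emph{why} $P$-reduced tuples are the right normal forms---but, as you note, the Buchberger verification is the real work (the overlap cases $(i,j)/(i,\ell)$, $(i,j)/(k,j)$, and $(i,j)/(j,\ell)$ each require a short reduction chain). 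The paper's induction avoids this machinery entirely at the cost of being less structural; your approach trades that for a one-time computation that also recovers the reduction maps $\tau_{ij}$ as the Gr\"obner reduction steps.
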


\begin{proof}~
  \begin{enumerate}
  \item By definition, if $D$ is $A$-reduced, then $D$ is a subset of $B_\infty^m(q-1)$. Since $\ker(\ev_{\AA^m}) = \langle X_i^q - X_i \rangle_{1 \le i \le m}$, we can see that $\mathrm{Poly}(D) \cap \ker(\ev_{\AA^m}) = \{ 0 \}$.
  \item We proceed by induction on $m$. Recall that $\ker(\ev_{\PP^m}) = \langle X_i^qX_j - X_iX_j^q \rangle_{0 \le i < j \le m}  \cap \FF_q[\mathbf{X}]^H_v$.
\begin{itemize}
  \item For $m=1$ and $v \in \NN$, let $D$ be a $P$-reduced subset of $S^2(v)$. If $v \le q$, it is clear that $\ker(\ev_{\PP^1}) \cap \Poly(D) = \{ 0 \}$. So assume $v>q$ and let $f(X,Y) \in \mathrm{Poly}(D) \cap \ker(\ev_{\PP^1})$. Since $D$ is $P$-reduced, we can write
\[
f(X,Y) \mydef  f_v Y^v + \sum_{i=0}^{q-1} f_i X^{v-i}Y^i \in \mathrm{Poly}(D)\,.
\]
Then, we see that $f_v = f(0,1) = 0$, hence $g(Y) \mydef f(1, Y) = \sum_{i=0}^{q-1} f_i Y^i$ lies in $\mathrm{Poly}(D')$, for some set $D' \subseteq B^1_\infty(q-1)$. Moreover $f \in \ker(\ev_{\PP^1})$ implies $g \in \ker(\ev_{\AA^1})$. Hence, the first point of this Lemma (applied to $D'$ which is $A$-reduced) shows that $g=0$, and $f=0$ follows.
  \item Let $m > 1$ and $v \in \NN$. The proof works similarly. Let $D$ be a $P$-reduced subset of $S^{m+1}(v)$, and let
    \[
    f({\bf X}) \mydef f_0(X_1,\dots,X_m) + X_0 f_1(X_0,X_1,\dots,X_m) \in \mathrm{Poly}(D) \cap \ker(\ev_{\PP^m})\,.
    \]
    Since $f_0$ does not depend on $X_0$, we can see that $f_0 \in \ker(\ev_{\PP^{m-1}})$ and $f_0 \in \mathrm{Poly}(D_0)$ where $D_0 \subset S^m(v)$. Besides, $D_0$ is $P$-reduced as a subset of $D$. Therefore, by induction $f_0 = 0$, and $f = X_0 f_1(X_0,X_1,\dots,X_m)$ follows. Let us define $g \mydef f(1,X_1,\dots,X_m)$; we see that $g \in \ker(\ev_{\AA^m})$ and $g \in \mathrm{Poly}(D')$ where $D' \subseteq B^m_\infty(q-1)$ since $D$ is $P$-reduced and every degree tuple in $D'$ comes from a tuple ${\bf d} \in D$ such that $d_0 \ne 0$. Thanks to the first point of the lemma, it follows that $g = 0$. Therefore, $f = \alpha X_0^v$ with $\alpha \in \FF_q$, which necessarily implies $f = 0$ (evaluate it at $(1:0:\dots:0)$).
\end{itemize}
  \item Since $\ev_{\AA^m}(\mathbf{X^{\underline{\mathbf{d}}}}) = \ev_{\AA^m}(\mathbf{X^{\mathbf{d}}})$ for every ${\bf d} \in \NN^m$, we have $\ev_{\AA^m}(\mathrm{Poly}(\underline{D})) = \ev_{\AA^m}(\mathrm{Poly}(D))$. Uniqueness comes from the injectivity of $\ev_{\AA^m}$.
  \item Same argument.
  \end{enumerate}
\end{proof}

\begin{definition}[Degree set]
 Let $\calC = \ev_{\AA^m}(\mathrm{Poly}(D))$ be an affine (resp. let $\calC = \ev_{\PP^m}(\mathrm{Poly}(D))$ be a projective) monomial code. Its \emph{degree set} is the unique $A$-reduction (resp. $P$-reduction) of $D$, and is denoted $\mathrm{Deg}(\calC)$. 
\end{definition}

By definition, if $\calC$ is monomial, then we have $\calC = \ev(\mathrm{Poly}(\Deg(\calC)))$ where $\ev \in \{\ev_{\AA^m}, \ev_{\PP^m}\}$ depending on the context. Moreover, since $\ev$ is injective over $\mathrm{Poly}(\Deg(\calC))$, it also holds that:
  \[
  \dim \calC = |\mathrm{Deg}(\calC)|\,. 
  \]

\begin{example}
Reed-Solomon and Reed-Muller codes, as well as their projective analogues, are monomial codes. Table~\ref{tab:degree-sets-RS-RM} presents their degree sets.

\begin{table}[h!]
  \centering
  \setlength\extrarowheight{8pt}
  \begin{tabular}{l|l}
    \textbf{Code} & \textbf{Degree set} \\
    \hline
    Reed-Solomon code $\RS_q(k)$ &  $B^1_1(k) = \{ 0, 1, \dots, k \}$ \\
    Reed-Muller code $\mathrm{RM}_q(m, k)$ & $\underline{B^m_1(k)} = \left\{ \underline{\mathbf{e}} \mid \mathbf{e} \in \NN^m, |\mathbf{e}| \le k \right\} $\\
    projective Reed-Solomon code $\PRS_q(k)$ & $S^2_1(k) = \{ (k, 0), (k-1, 1), \dots, (0, k) \} $ \\
    projective Reed-Muller code $\mathrm{PRM}_q(m, k)$ & $\overline{S^{m+1}_1(k)} = \{ \overline{\mathbf{d}} \mid \mathbf{d} \in \NN^{m+1}, |\mathbf{d}| = k \}$    
  \end{tabular}
  \caption{\label{tab:degree-sets-RS-RM}Degree sets of classical monomial codes.}
\end{table}
\end{example}

\subsection{Permutations, automorphisms}
\label{subsec:automorphims}

Generally, a linear code $\calC$ is a linear subspace of $\FF_q^X$ for some finite set $X$. Any permutation $\sigma$ of $X$ induces a permutation of the coordinates of vectors $c \in \FF_q^X$ given by:
\[
\sigma(c) = (c_{\sigma^{-1}(x)})_{x \in X} \in \FF_q^X
\]
Denote by $\frak{S}(X)$ the group of permutations of $X$. The subset of permutations of $X$ which let $\calC$ invariant is a subgroup of $\frak{S}(X)$, called the \emph{permutation group} of $\calC$, and denoted by $\mathrm{Perm}(\calC)$.

Let $\Iso(X)$ be the semi-direct product $\big(\FF_q^{\times}\big)^X \rtimes \frak{S}(X)$. Any $(w, \sigma) \in \Iso(X)$ acts on $c \in \FF_q^X$ by:
\[
(w, \sigma) \cdot c = w \star \sigma(c)\,,
\]
where $\star$ denotes the component-wise product between tuples: ${\bf a} \star {\bf b} = (a_1b_1, \dots, a_m,b_m)$. 
If $w = (1,\dots,1)$, we simply write $\sigma \in \Iso(X)$. 

The subgroup of $\Iso(X)$ letting $\calC$ invariant is called the \emph{automorphism group of $\calC$}, and is denoted by $\Aut(\calC)$. Of course, $\Perm(\calC) \subseteq \Aut(\calC)$.

Let us finally denote by $\mathrm{GL}_m(\FF_q)$ the group of $m \times m$ invertible matrices over $\FF_q$. Using the canonical basis, these matrices represent linear automorphisms $\AA^m \to \AA^m$.

\paragraph{Affine evaluation codes.}  In the case $X = \AA^m$, let us define the affine transformations $T_{M, \mathbf{b}} : \AA^m \to \AA^m$ by $\mathbf{x} \mapsto M\mathbf{x} + \mathbf{b}$, for every $M \in \mathrm{GL}_m(\FF_q)$ and $\mathbf{b} \in \FF_q^m$. Each $T_{M, \mathbf{b}}$ is a permutation of $\AA^m$. Denote by $\mathrm{Aff}(\FF_q, m)$ the group of such transformations:
\[
\Aff(\FF_q, m) = \{ T_{M, \mathbf{b}} \mid (M, \mathbf{b}) \in \GL_m(\FF_q) \times \FF_q^m \}\,.
\]
In Appendix~\ref{app:automorphism-groups-codes} we prove that $\mathrm{Aff}(\FF_q, m) \subseteq \mathrm{Perm}(\RM_q(m, k))$ for every $0 \le k \le m(q-1)$.

\paragraph{Projective evaluation codes.} In the case $X = \PP^m$, let $M \in \GL_{m+1}(\FF_q)$. Then $\mathbf{x} \mapsto M \mathbf{x}$ induces a permutation of $\PP^m$, but does not necessarily preserve the standard representation of projective points. Still, there exists $\lambda_{M,\mathbf{x}} \in \FF_q^{\times}$ such that the standard representative of $M\mathbf{x}$ is $\lambda_{M,\mathbf{x}} M \mathbf{x}$. For every $f \in \FF_q[\mathbf{X}]^H_v$, we then have:
\[
\ev_{M\mathbf{x}}(f) = f(\lambda_{M,\mathbf{x}}M \mathbf{x}) = (\lambda_{M,\mathbf{x}})^v\, f(M \mathbf{x}) = (\lambda_{M,\mathbf{x}})^v\, \ev_{\mathbf{x}}(f \circ M)\,,
\]
and we see that $(\lambda_{M,\mathbf{x}})^v$ does not depend on $f$ (only on its total degree). Let us denote by $w_M^v \mydef ((\lambda_{M,\mathbf{x}})^v : \mathbf{x} \in \PP^m)$, and by $\sigma_M$ the permutation of $\PP^m$ induced by $M$. Then, we have:
\[
\ev_{\PP^m}(f \circ M) = w_M^v \, \star \, \sigma_{M^{-1}}(\ev_{\PP^m}(f))\,.
\]

Denote by $\Proj(\FF_q, m) \mydef \{ (w_M^v, \sigma_{M^{-1}}) \in \Iso(\PP^m), M \in \GL_{m+1}(\FF_q) \}$ in the context of evaluating homogeneous polynomials of degree $v$. In Appendix~\ref{app:automorphism-groups-codes} we prove that $\Proj(\FF_q, m) \subseteq \Aut(\PRM_q(m, v))$ for every $1 \le v \le m(q-1)$.

\subsection{Embedding maps}

Here we define maps embedding lines into higher dimensional spaces. For $U, V$ two $\FF_q$-linear spaces, we denote by $\mathrm{Hom}(U, V)$ the set of linear maps $U \to V$. Let $\mathrm{Emb}_\PP(m)$ be the set of full-rank (\emph{i.e.} injective) linear maps from $\FF_q^2$ to $\FF_q^{m+1}$:
\[
\mathrm{Emb}_\PP(m) \mydef \{ L \in \mathrm{Hom}(\FF_q^2, \FF_q^{m+1}),\, \rank L = 2 \}\,.
\]
Each $L \in \mathrm{Emb}_\PP(m)$ induces a \emph{projective embedding} $\PP^1 \to \PP^m$ sending $(x : y) \mapsto L(x,y)$. One can easily check that this map is well defined over projective spaces. Moreover, the set $\{ L(\PP^1), L \in \mathrm{Emb}_\PP(m) \}$ describes all the projective lines of $\PP^m$, though a projective line is obviously associated to many maps $L$ in $\mathrm{Emb}_\PP(m)$.

Similarly, the set
\[
\mathrm{Emb}_\AA(m) \mydef \{ L^* = (L_1,\dots,L_m) \in \mathrm{Hom}(\FF_q^2, \FF_q^m) \mid L = (L_0, \dots, L_m) \in \mathrm{Emb}_\PP(m)  \}
\]
defines \emph{affine embeddings} $\AA^1 \to \AA^m$ by $t \mapsto L^*(1,t)$. The set $\{ L^*(1,\AA^1), L^* \in \mathrm{Emb}_\AA(m) \}$ defines the set of affine lines of $\AA^m$.

\begin{remark}
  Elements of $\mathrm{Emb}_\PP(m)$ and $\mathrm{Emb}_\AA(m)$ will sometimes be seen as $(m+1) \times 2$ or $m \times 2$ matrices over $\FF_q$. Besides, for convenience and when the context is clear, we will improperly write $L^*(t)$ instead of $L^*(1,t)$. By using this notation, we want to emphasize that, for every $f \in \FF_q[{\bf X}]$ and every $L^* \in \mathrm{Emb}_\AA(m)$, the map $t \mapsto f(L^*(1,t))$ can be interpolated as a univariate polynomial denoted $f \circ L^* \in \FF_q[T]$.
\end{remark}

\begin{remark}
  \label{rem:subword}
  For local correction purposes (see section~\ref{sec:local-correction}), it is important to notice the following points.
\begin{enumerate}
\item In the affine setting, for every $L^* \in \Emb_\AA(m)$ and $f \in \FF_q[\mathbf{X}]$, the word $\ev_{\AA^1}(f \circ L^*)$ is a subword of $\ev_{\AA^m}(f)$, and can be read at indices $L^*(t)$ for $t \in \AA^1$.
\item In the projective setting, $\ev_{\PP^1}(f \circ L)$ is not necessary a subword of $\ev_{\PP^m}(f)$, since nothing asserts that $L$ preserves the standard representation of projective points, similarly to the discussion in Subsection~\ref{subsec:automorphims}. We solve this issue in a very similar manner. Let $\mathbf{x} \in \PP^1$  and $L \in \mathrm{Emb}_\PP(m)$. We know there exists $\lambda_{L,\mathbf{x}} \in \FF_q^{\times}$ such that the standard representative ${\bf P}$ of $L(\mathbf{x})$ is $\lambda_{L,\mathbf{x}} L(\mathbf{x})  \in \PP^m$. Then it holds:
\[
\forall f \in \FF_q[\mathbf{X}]_v^H,\,\,  \ev_{\mathbf{P}}(f) = f(\lambda_{L,\mathbf{x}}L(\mathbf{x})) = (\lambda_{L,\mathbf{x}})^v (f \circ L)(\mathbf{x})\,.
\]
Therefore, let us define $w^v_L = ((\lambda_{L,\mathbf{x}})^v : \mathbf{x} \in \PP^1)$. Then $w^v_L \star \ev_{\PP^1}(f \circ L)$ is a subword of $\ev_{\PP^m}(f)$, and can be read at indices $L(\mathbf{x}) \in \PP^m$ for $\mathbf{x} \in \PP^1$.
\end{enumerate} 
\end{remark}

\begin{example}
  Let us fix an ordered list of points in $\PP^1(\FF_3)$ and $\PP^2(\FF_3)$:
  \[
  \begin{array}{rl}
  \PP^1(\FF_3) = &\big( (1:1), (1:2), (1:0), (0:1) \big)\\
  \PP^2(\FF_3) = &\big( (1:1:1), (1:1:2), (1:1:0), (1:2:1), (1:2:2), (1:2:0),\\
  &\quad(1:0:1), (1:0:2), (1:0:0), (0:1:1), (0:1:2), (0:1:0), (0:0:1) \big)
\end{array}
  \]
  Let $f = X_1 \in \FF_3[X_0, X_1, X_2]^H_1$ and $L \in \Emb_\PP(2)$ represented by the matrix:
  \[
  L = \begin{pmatrix} 1 & 1 \\ 0 & 1 \\ 1 & 0 \end{pmatrix}
  \]
  Denote by $c = \ev_{\PP^2}(f) = (1, 1, 1, 2, 2, 2, 0, 0, 0, 1, 1, 1, 0) \in \FF_3^{13}$. On the one hand we have
  \begin{equation}
  \label{eq:homogenizing}
  \begin{aligned}
  L(\PP^1) &= ( (2:1:1), (0:2:1), (1:0:1), (1:1:0) )\\
           &= ( (1:2:2), (0:1:2), (1:0:1), (1:1:0) )\,,
  \end{aligned}
  \end{equation}
 hence $\ev_{\PP^2}(f)_{|L(\PP^1)} = (c_5, c_{11}, c_7, c_3) =  (2, 1, 0, 1)$. On the other hand $(f \circ L)(S, T) = T \in \FF_3[S, T]^H_1$, and we get $\ev_{\PP^1}(f \circ L) = (1, 2, 0, 1)$. Clearly $\ev_{\PP^2}(f)_{|L(\PP^1)} \ne \ev_{\PP^1}(f \circ L)$.

 Nevertheless, $w_L^1$ can be obtained through the homogenizing made in~\eqref{eq:homogenizing}:
 \[
 w_L^1 = (2^1, 2^1, 1^1, 1^1)\,.
 \]
 Therefore it gives:
 \[
 w_L^1 \star \ev_{\PP^1}(f \circ L) = (2, 1, 0, 1) = \ev_{\PP^2}(f)_{|L(\PP^1)}\,.
 \]
\end{example}

\section{Affine and projective lifted Reed-Solomon codes}
\label{sec:lifted-codes}

Before introducing our construction, we recall the definition of affine lifted codes given by Guo, Kopparty and Sudan~\cite{GuoKS13}. Notice that we restrict our study to the lifting of (projective) Reed-Solomon codes, but we believe that our construction can be extrapolated to the lifting of (projective) Reed-Muller codes. Besides, our formalism is slightly different from the paper of~\cite{GuoKS13}, since their notion of restriction  $f_{|L}$ of a polynomial $f$ along a line $L$ is somewhat ambiguous.

\subsection{Affine lifted codes}
\label{subsec:affine-lifted-codes}

We first need to introduce a few notation.
\begin{itemize}
\item Let $a = \sum a^{(i)} p^i$ be the $p$-adic decomposition of a non-negative integer $a$. We define a partial order $\le_p$ on integers by:
  \begin{equation}
    \label{eq:def_lep}
  a \le_p b \iff a^{(i)} \le b^{(i)},\, \forall i\,.
  \end{equation}
  The relation $\le_p$ can be naturally extended to $m$-tuples by $\mathbf{a} \le_p \mathbf{b} \iff \forall j,\, a_j \le_p b_j$. 
\item We also extend binomial coefficients to $m$-tuples of integers by $\binom{\mathbf{a}}{\mathbf{b}} \mydef \prod_{i=1}^m \binom{a_i}{b_i}$.
\end{itemize}

We also recall that, for $f \in \FF_q[{\bf X}] $ and $L \in \Emb_{\AA}(m)$, the notation $f \circ L$ represents the univariate polynomial $f(L(1,T))$.

\begin{definition}[Affine lifting of Reed-Solomon codes~\cite{GuoKS13}]
  Let $0 \le k \le q-2$ and $m \ge 1$. The \emph{affine lifting} of order $m$ of the Reed-Solomon code $\RS_q(k)$ is
  \[
  \Lift_q(m, k) \mydef \{ \ev_{\AA^m}(f) \mid f \in \FF_q[\mathbf{X}], \forall L \in \Emb_\AA(m), \ev_{\AA^1}(f \circ L) \in \RS_q(k) \}\,.
  \]
\end{definition}

The codes $\Lift_q(m, k)$ will shortly be called \emph{affine lifted codes}. In~\cite{GuoKS13} it is also proved that every affine lifted code $\Lift_q(m, k)$ is monomial and satisfies
  \begin{equation}\label{eq:degree-set-lift}
  \Lift_q(m, k) = \left\langle \ev_{\AA^m}(\mathbf{X^d}) \mid \mathbf{d} \in B_\infty^m(q-1), \forall \mathbf{e} \le_p \mathbf{d}, \overline{|\mathbf{e}|} \le k \right\rangle,
  \end{equation}
where $p = {\rm char}(\FF_q)$ and $|{\bf e}| = \sum_i e_i$. Note that monomiality of affine lifted code follows from their affine-invariance, by using a result of Kaufman and Sudan~\cite{KaufmanS08}.

A careful observation of their degree sets shows that $\Lift_q(m, k)$ fits between two projective Reed-Muller codes:
\begin{equation}
  \label{eq:lift-inclusion}
  \mathrm{RM}_q(m, k) \subseteq \Lift_q(m, k) \subseteq \mathrm{RM}_q(m, k + (m-1)(q-1))\,.
\end{equation}

The main interest of affine lifted codes appears for some values of $q$ and $k$ (essentially $q$ non-prime and $k$ close to $q$), for which the first inclusion is proper. Indeed, Kaufman and Ron give in~\cite{KaufmanR06} arguments that shows that affine lifted codes are Reed-Muller codes as long as $k < q - \frac{q}{p}$ (where $p$ is the characteristic of the field).

In the $k \ge q - \frac{q}{p}$ setting, some families of affine lifted codes give rise to a family of high-rate locally decodable and correctable codes, while Reed-Muller codes have rate bounded by $1/m!$. More specifically, the following theorem is proved in~\cite{GuoKS13} (we report the formal definition of locally correctable codes to Section~\ref{sec:local-correction}):

\begin{theorem}[High rate lifted codes, \cite{GuoKS13}]
  Let $0 < \rho, \gamma < 1$ and $n_0 \ge 1$. Define $m \mydef \lceil 1 / \delta \rceil$, $q \mydef 2^s \ge n_0^\gamma$, $b \mydef 1 + \lceil \log m \rceil$ and $c \mydef \lceil b 2^{bm} \log(1/\rho) \rceil$. Finally, let $k \mydef (1-2^{-c})q$. Then the code $\Lift_q(m, k)$ has length $n \ge n_0$, rate $R \ge 1 - \rho$, and is locally correctable with locality $\ell = n^\gamma$ for a $\delta = 2^{-c}/6$ fraction of errors.
\end{theorem}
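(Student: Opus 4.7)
The plan is to verify the three claimed properties --- length, rate, and local correction --- separately, invoking a different ingredient for each.

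\textbf{Length.} First I would note that $n = q^m \ge (n_0^\gamma)^m = n_0^{\gamma m}$, and since $m \ge 1/\gamma$ forces $\gamma m \ge 1$, we get $n \ge n_0$ at once.

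\textbf{Rate.} By the explicit degree set in~\eqref{eq:degree-set-lift}, $\dim \Lift_q(m,k)$ equals the number of $\mathbf{d} \in [0,q-1]^m$ such that $\overline{|\mathbf{e}|}\le k$ for every $\mathbf{e} \le_p \mathbf{d}$, where $\overline{\,\cdot\,}$ denotes reduction modulo $q-1$ with the usual convention. Let $\calB$ be the complementary ``bad'' set; the task is to show $|\calB|\le \rho q^m$. A bad $\mathbf{d}$ $p$-adically dominates some $\mathbf{e}$ whose weight modulo $q-1$ falls in the forbidden window $(k, q-1]$, which has relative size $2^{-c}$. The strategy is the combinatorial count of~\cite{GuoKS13}: writing each coordinate in base $p = 2$, I would classify bad $\mathbf{e}$'s by the rough pattern of active digit positions (at most $2^{bm}$ patterns, the factor $b = 1+\lceil \log m\rceil$ coming from binning the support sizes), bound per pattern --- via a Fourier/averaging argument on $\ZZ/(q-1)\ZZ$ --- the fraction of candidates whose digit sum falls in the forbidden window, and finally pull this back to $\calB$ by summing over the $p$-adic up-sets of the bad $\mathbf{e}$'s inside $[0,q-1]^m$. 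The specific values of $b$ and $c$ stated in the theorem are calibrated precisely so that this multi-stage bound yields $|\calB| \le \rho q^m$.

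\textbf{Local correction.} I would run a line-oracle decoder: given a target $\mathbf{x}_0 \in \AA^m$ and a received word $y$ at Hamming distance $\le \delta q^m$ from some codeword $\ev_{\AA^m}(f) \in \Lift_q(m,k)$, sample a uniform nonzero $\mathbf{v}\in\FF_q^m$, query $y$ at the $q$ points $L^*(t) = \mathbf{x}_0 + t\mathbf{v}$ for $t\in \FF_q$, and run Berlekamp--Welch on the resulting string inside $\RS_q(k)$. By the defining property of $\Lift_q(m,k)$ the true restriction $t \mapsto f(L^*(t))$ lies in $\RS_q(k)$, which has minimum distance $q-k = 2^{-c}q$. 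Since every $L^*(t)$ with $t\ne 0$ is uniform over $\AA^m \setminus \{\mathbf{x}_0\}$ when $\mathbf{v}$ is uniform nonzero, the expected number of corrupted symbols on the line is at most $\delta q$; Markov's inequality then gives at most $3\delta q = (q-k)/2$ errors with probability $\ge 2/3$, which lies within the unique-decoding radius. The decoder therefore recovers $f\circ L^*$ and returns $f(\mathbf{x}_0)$. The locality is $q = n^{1/m} \le n^\gamma$, the last inequality using $m\gamma \ge 1$.

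\textbf{Main obstacle.} The length check and local-correction analysis are short reductions (to arithmetic and to classical RS decoding). The hard step is the rate bound: matching the precise parameters of the theorem, and in particular absorbing the exponential-in-$m$ factor $2^{bm}$ that appears in the definition of $c$, requires the careful combinatorial counting of $p$-adically closed bad sets that is the technical heart of~\cite{GuoKS13}. Any coarser bound would either sacrifice the $1-\rho$ rate or inflate the stated locality $n^\gamma$.
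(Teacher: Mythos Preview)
The paper does not prove this theorem: it is quoted verbatim from~\cite{GuoKS13} with attribution and no argument, so there is no ``paper's own proof'' to compare against. Your sketch is therefore not competing with anything in the present paper; it is an outline of the original Guo--Kopparty--Sudan argument.

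As an outline it is accurate in shape. The length and local-correction paragraphs are essentially complete short arguments (modulo the evident typo in the statement: $m \mydef \lceil 1/\delta\rceil$ should read $m \mydef \lceil 1/\gamma\rceil$, which you silently corrected when writing $m \ge 1/\gamma$). The rate paragraph, however, is not a proof but a pointer: you name the objects (bad set $\calB$, base-$2$ digit patterns, forbidden window of relative size $2^{-c}$) and assert that the constants $b$ and $c$ are ``calibrated'' to make the count work, but the actual inequality $|\calB| \le \rho q^m$ is never derived. That derivation --- bounding, for each of the $s$ digit positions, how many $m$-tuples of bits can push the reduced sum into the top $2^{-c}$ fraction, and showing the union over positions stays below $\rho$ --- is the entire content of the theorem, and your write-up defers it to~\cite{GuoKS13} rather than carrying it out. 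If this were meant to be a self-contained proof, the rate step is the genuine gap; if it is meant as a proof sketch citing the source, it is adequate, and indeed matches what the present paper does (namely, cite~\cite{GuoKS13}).
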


However, for generic parameters $m, k, q$, exact formulae for the dimension of affine lifted codes are hard to produce. We give some concrete values in Appendix~\ref{app:computation}.

\subsection{Projective lifted codes}

In this section, we aim at defining the projective analogues of the lifted Reed-Solomon codes introduced by Guo \emph{et al.}~\cite{GuoKS13}. A way to build an evaluation code over a projective space is to evaluate homogeneous polynomials of fixed degree $v$, as it is done for projective Reed-Muller codes. It raises the problem of determining a meaningful value of $v$ we could use to define projective lifted codes. Equation~\eqref{eq:lift-inclusion} suggests to set $v = v_{m,k} \mydef k + (m-1)(q-1)$.

\begin{definition}
  Let $1 \le k \le q-1$, $m \ge 1$ and $v = v_{m,k} = k + (m-1)(q-1)$. The projective lifting of order $m$ of the projective Reed-Solomon code $\PRS_q(k)$ is
  \[
  \PLift_q(m, k) = \{ \ev_{\PP^m}(f) \mid f \in \FF_q[\mathbf{X}]^H_v, \forall L \in \Emb_\PP(m), \ev_{\PP^1}(f \circ L) \in \PRS_q(k) \}\,.
  \]
  Such a code will shortly be called a \emph{projective lifted code}, and its length is $\theta_{m,q} = |\PP^m| = (q^{m+1}-1)/(q-1)$.
\end{definition}

\subsection{Monomiality of projective lifted codes}
\label{subsec:monomiality}

Similarly to the affine setting, a main issue remains to give a basis of $\PLift_q(m, k)$. In this subsection, we show that projective lifted codes are monomial, and then we compute their degree set. For this purpose, we first prove Theorem~\ref{thm:glm-invariant-implies-monomial} which can be seen as a projective analogue of the \emph{monomial extraction lemma} of Kaufman and Sudan~\cite{KaufmanS08}.

\begin{theorem}
  \label{thm:glm-invariant-implies-monomial}
  Let $\calC = \ev_{\PP^m}(\mathcal{F})$ be a projective evaluation code, where $\mathcal{F}$ is a subspace of $\FF_q[\mathbf{X}]^H_v$ for some $m, v \ge 1$. Assume that $\Proj(\FF_q, m) \subseteq \Aut(\calC)$. Then $\calC$ is monomial.
\end{theorem}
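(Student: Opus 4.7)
The strategy is to lift the problem to the polynomial level and peel off a monomial basis via a two-stage group-theoretic extraction, mirroring in the projective setting the monomial extraction lemma of Kaufman and Sudan~\cite{KaufmanS08} that underlies monomiality of affine lifted codes: first a character decomposition under the diagonal subgroup of $\GL_{m+1}(\FF_q)$, then refinement inside each character class via non-diagonal projective transformations.

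Concretely, I would first introduce
\[
\tilde{\mathcal{F}} \mydef \{ f \in \FF_q[\mathbf{X}]^H_v : \ev_{\PP^m}(f) \in \calC \},
\]
which contains $\mathcal{F}$ and satisfies $\ev_{\PP^m}(\tilde{\mathcal{F}}) = \calC$. Combining the hypothesis $\Proj(\FF_q, m) \subseteq \Aut(\calC)$ with the identity $\ev_{\PP^m}(f \circ M) = w_M^v \star \sigma_{M^{-1}}(\ev_{\PP^m}(f))$ from Subsection~\ref{subsec:automorphims}, the space $\tilde{\mathcal{F}}$ is stable under $f \mapsto f \circ M$ for every $M \in \GL_{m+1}(\FF_q)$. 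Monomiality of $\calC$ then reduces to showing that $\tilde{\mathcal{F}}$ admits a basis of monomials modulo $\ker \ev_{\PP^m}$. A first decomposition is obtained by restricting to the diagonal subgroup $D = \{ \mathrm{diag}(\lambda_0, \ldots, \lambda_m) : \lambda_i \in \FF_q^\times \}$, under which $\mathbf{X}^{\mathbf{d}}$ is an eigenvector of eigenvalue $\lambda^{\mathbf{d}}$. The common eigenspace decomposition
\[
\FF_q[\mathbf{X}]^H_v = \bigoplus_{\mathbf{a} \in (\ZZ/(q-1)\ZZ)^{m+1}} V_{\mathbf{a}}, \qquad V_{\mathbf{a}} = \langle \mathbf{X}^{\mathbf{d}} : |\mathbf{d}| = v,\ \mathbf{d} \equiv \mathbf{a} \!\!\pmod{q-1} \rangle,
\]
together with the averaging projectors $P_{\mathbf{a}}(f) = (q-1)^{-(m+1)} \sum_{\lambda} \chi_{\mathbf{a}}(\lambda)^{-1} (f \circ \mathrm{diag}(\lambda))$, which land in $\tilde{\mathcal{F}}$ by $D$-stability, yields $\tilde{\mathcal{F}} = \bigoplus_{\mathbf{a}} (\tilde{\mathcal{F}} \cap V_{\mathbf{a}})$.

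The third step refines within each $V_{\mathbf{a}}$, where several $P$-reduced monomials may coexist (for instance $X_i^{q-1}$ and $X_j^{q-1}$ for distinct $i,j$, which all lie in the trivial character class when $v = q-1$). I would use elementary transvections $T_{ij}(\alpha) = I + \alpha E_{ij} \in \GL_{m+1}(\FF_q)$: precomposing $f$ with $T_{ij}(\alpha)$ expands substituted factors $(X_i + \alpha X_j)^{d_i}$ and spreads the monomials of $f$ across several weight spaces. Re-extracting via $P_{\mathbf{a}}$ produces new elements of $\tilde{\mathcal{F}} \cap V_{\mathbf{a}}$. By varying $\alpha \in \FF_q^\times$ and the indices $(i, j)$, and by reapplying $P$-reduction to normalize outputs, one sets up a Vandermonde-type linear system whose resolution isolates each $P$-reduced monomial appearing with nonzero coefficient in some element of $\tilde{\mathcal{F}}$.

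The main obstacle I anticipate is precisely this last step: showing that the transvection-plus-projection loop genuinely separates \emph{all} $P$-reduced monomials lying in a fixed character class, rather than stopping at some coarser grouping. Unlike the affine case, where only tuples with entries $\leq q-1$ arise and the combinatorics is driven by the $p$-adic partial order $\le_p$, the projective setting must also handle $P$-reduced tuples with some $d_i \geq q$, on which the reduction map $\tau_{ij}$ acts nontrivially. The clean way to handle this will presumably be through a structural lemma describing how the $\tau_{ij}$ interact with the transvection actions, followed by an induction on the size of the $P$-reduced support of $f$.
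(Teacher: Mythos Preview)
Your two-stage plan---diagonal averaging to land in a single character class $V_{\mathbf{a}}$, then non-diagonal elements of $\GL_{m+1}(\FF_q)$ to separate the $P$-reduced monomials inside it---is exactly the architecture of the paper's proof, so you are on the right track and not proposing a genuinely different route.

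Where your proposal remains vague, the paper is concrete in two places. First, the structural step you are missing is the paper's Part~(ii): once you are inside a fixed class $E_{\mathbf{j}} = \{\mathbf{d}\in D : \mathbf{d}\equiv\mathbf{j}\pmod{q-1}\}$, $P$-reduction forces all $\mathbf{d}\in E_{\mathbf{j}}$ to agree on every coordinate $d_i$ with $j_i\notin\{0,q-1\}$ (except possibly the leftmost nonzero one), so after a coordinate switch the extracted polynomial factors as $X_1^{j_1}\cdots X_a^{j_a}\,R(X_0,X_{a+1},\ldots,X_m)$ where $R$ has partial degree in $\{0,q-1\}$ in each of $X_{a+1},\ldots,X_m$. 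This is the ``structural lemma'' you anticipated, and it replaces any need to interact $\tau_{ij}$ with transvections. Second, for the refinement step the paper does \emph{not} set up a Vandermonde system in $\alpha$; instead it sums the transvected polynomial over \emph{all} $\beta\in\FF_q$ and invokes the single identity
\[
\sum_{\beta\in\FF_q}(\beta X+Y)^{q-1} = -X^{q-1}
\]
(Lemma~\ref{lem:technical-sum}), which in one stroke kills the degree-$0$ part and converts the degree-$(q-1)$ part in $X_{i+1}$ into either $X_0^{q-1}$ or $X_{i+1}^{q-1}$ according to whether $j_{i+1}=0$ or $q-1$. Iterating from $i=a$ to $m$ peels one variable at a time and terminates at the single monomial $\mathbf{X}^{\mathbf{j}}$. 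Your Vandermonde idea would likely also work but is heavier; the summation trick is the clean device you were looking for.
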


Before diving straight into the proof, we first observe that lie in $\Proj(\FF_q, m)$ elements $(w_M^v, \sigma_{M^{-1}})$ where $M$ is:
\begin{itemize}
\item a diagonal isomorphism $\Diag_{\mathbf{a}}$ for any $\mathbf{a} \in (\FF_q^\times)^{m+1}$, where
\[
\begin{array}{rclc}
  \Diag_{\mathbf{a}} : & \PP^m & \to & \PP^m\\
         & (P_0 : \hdots : P_m) & \mapsto & (a_0 P_0 : \hdots : a_m P_m)
\end{array}
\]
\item an elementary switch of coordinates $s_{i,j}$ for any $0 \le i, j \le m$, $i \ne j$, where
\[
\begin{array}{rclc}
  s_{i,j} : & \PP^m & \to & \PP^m\\
         & (P_0 : \hdots : P_i : \hdots : P_j : \hdots : P_m) & \mapsto & (P_0 : \hdots : P_j : \hdots : P_i : \hdots : P_m)
\end{array}
\]
\item an elementary transvection $t_{i,j, \beta}$ for any $0 \le i, j \le m$, $i \ne j$, and $\beta \in \FF_q$, where
\[
\begin{array}{rclc}
  t_{i,j, \beta} : & \PP^m & \to & \PP^m\\
         & (P_0 : \hdots : P_i : \hdots : P_m) & \mapsto & (P_0 : \hdots : P_i + \beta P_j : \hdots : P_m)
\end{array}
\]
\end{itemize}

\begin{proof}[Proof of Theorem~\ref{thm:glm-invariant-implies-monomial}]
  Let $c = \ev_{\PP^m}(f) \in \calC$, where $f = \sum_{\mathbf{d}} f_{\mathbf{d}} \mathbf{X^d}$, and denote by $D = \Deg(f) = \{ \mathbf{d}, f_{\mathbf{d}} \ne 0 \}$. Our goal is to prove that every $\mathbf{j} \in D$ satisfies $\ev_{\PP^m}(\mathbf{X^j}) \in \calC$. The proof will consist in three main parts:
  \begin{enumerate}[label=(\roman*)]
  \item we prove that $\ev_{\PP^m}(Q_{\mathbf{j}}(\mathbf{X})) \in \calC$ for some polynomial $Q_{\mathbf{j}}({\bf X})$ such that ${\bf j} \in \Deg(Q_{\mathbf{j}})$ and $\Deg(Q_{\mathbf{j}})$ is much smaller than $\Deg(\calC)$;
  \item we analyse and rephrase $\Deg(Q_{\mathbf{j}})$, allowing us to write $Q_{\bf j}$ as $X_1^{j_1} \dots X_a^{j_a} R(X_0, X_{a+1}, \dots, X_m)$ for some multivariate polynomial $R$;
  \item we prove that, if there exists an $(m-a+1)$-variate polynomial $R$ satisfying some prescribed properties and such that $\ev_{\PP^m}(X_1^{j_1}\dots X_a^{j_a}R(X_0, X_{a+1}, \dots, X_m)) \in \calC$, then we can compute an $(m-a)$-variate polynomial $R'$ satisfying the same prescribed properties, and such that the vector
    $\ev_{\PP^m}(X_1^{j_1}\dots X_{a+1}^{j_{a+1}}R'(X_0, X_{a+2}, \dots, X_m)) \in \calC$.
  \end{enumerate}
  Reasoning inductively on the last part will conclude the proof.

  \emph{Proof of part (i).} Let $\mathbf{j} \in D$, and define
  \[
  Q_{\mathbf{j}}(\mathbf{X}) \mydef (-1)^{m+1} \sum_{\mathbf{a} \in (\FF_q^\times)^{m+1}} \Big(\prod_{i=0}^m a_i^{-j_i}\Big) (f \circ \mathrm{Diag}_{\mathbf{a}})(\mathbf{X})\,.
  \]
  Since $\mathcal{C}$ is linear and $\mathrm{Diag}_{\mathbf{a}} \in \Aut(\calC)$ for every $\mathbf{a} \in (\FF_q^\times)^{m+1}$, we see that $\ev_{\PP^m}(Q_{\mathbf{j}}(\mathbf{X})) \in \calC$. We also have:
  \[
  \begin{aligned}
    Q_{\mathbf{j}}(\mathbf{X}) &= (-1)^{m+1} \sum_{\mathbf{a} \in (\FF_q^\times)^{m+1}} \Big(\prod_{i=0}^m a_i^{-j_i}\Big) \sum_{\mathbf{d}} f_{\mathbf{d}} \,a_0^{d_0} \dots a_m^{d_m} \,\mathbf{X^d}\\
    &= (-1)^{m+1} \sum_{\mathbf{d}} f_{\mathbf{d}} \sum_{\mathbf{a} \in (\FF_q^{\times})^{m+1}} \Big(\prod_{i=0}^m a_i^{d_i-j_i}\Big) \mathbf{X}^{\mathbf{d}}\\
   &= (-1)^{m+1} \sum_{\mathbf{d}} \, f_{\mathbf{d}} \prod_{i=0}^m \Big( \underbrace{\sum_{a_i \in \FF_q^{\times}} a_i^{d_i-j_i}}_{= 0 \text{ if } d_i \not\equiv j_i \!\!\mod (q-1), \,\,\, -1 \text{ otherwise }}  \Big) \mathbf{X}^{\mathbf{d}}\\
   &= \sum_{\mathbf{d} \in E_{\mathbf{j}}} f_{\mathbf{d}}\mathbf{X^d}\,,
  \end{aligned}
  \]
  where $E_{\mathbf{j}} = \{ \mathbf{d} \in D, \mathbf{d} \equiv \mathbf{j} \mod (q-1) \}$.

  \emph{Proof of part (ii).} The code $\mathcal{\calC}$ is invariant under the action of elementary switches of coordinates. Therefore one can assume w.l.o.g. that, if exists, the $j_i$'s satisfying $j_i \in (q-1)\NN$ lie at the end of the tuple $\mathbf{j}$. Besides, by $P$-reduction and by definition of $E_{\bf j}$, we can assume that $j_i \notin \{0, q-1\}$ implies that $j_i = d_i$, except maybe for the leftmost non-zero coordinate of ${\bf d}$ and ${\bf j}$. Therefore, w.l.o.g. there exists $a \in [1,m]$ such that every $\mathbf{d} \in E_{\mathbf{j}}$ satisfies the following three properties
\begin{equation*}
  \label{eq:constraints}
  \left\{
  \begin{array}{l}
    \forall 1 \le i \le a,\text{ we have } d_i = j_i < q-1\\
    \forall a < i \le m,\text{ we have } d_i \in \{0, q-1\} \text{ and } j_i \in \{0, q-1\}\\
    d_0 = v - \sum_{i=1}^m d_i\,.
  \end{array}
  \right.
\end{equation*}

Therefore, $Q_{\mathbf{j}}(\mathbf{X})$ can be written as $X_1^{j_1}\dots X_a^{j_a} R(X_0,X_{a+1},\dots,X_m)$, where $R$ is an homogeneous polynomial of degree $v - \sum_{i=1}^a j_i$, whose monomials have partial degree either $0$ or $q-1$, for every coordinate $X_i$, $i >a$.

\emph{Proof of part (iii).} Recall that we aim to prove that $\ev_{\PP^m} (X_0^{j_0}X_1^{j_1}\dots X_m^{j_m}) \in \calC$, and we know that $\ev_{\PP^m}(Q_{\mathbf{j}}(\mathbf{X})) \in \calC$.  Our strategy is to proceed inductively, from $i = a$ to $m$, by proving there exists an $(m-i+1)$-variate polynomial $R_i$  such that ${\bf j} \in \Deg(X_1^{j_1}\dots X_i^{j_i} R_i(X_0,X_{i+1},\dots,X_m))$ and $\ev_{\PP^m}(X_1^{j_1}\dots X_{i}^{j_i}R_i(X_0, X_{i+1}, \dots, X_m)) \in \calC$. Notice that step $i=a$ has been proved in part (ii), and that step $i=m$ concudes the proof. Hence there remains to prove the induction step.

Write $R_i = R_i' + X_{i+1}^{q-1}R_i''$, where polynomials $R'_i$ and $R''_i$ do not depend on $X_{i+1}$. Also denote by $S_i = X_1^{j_1}\dots X_{i}^{j_i}R_i(X_0, X_{i+1}, \dots, X_m)$,  and assume that $\ev_{\PP^m}(S_i) \in \calC$ and ${\bf j} \in \Deg(S_i)$. If $R''_i = 0$, then the induction step is proved. Otherwise:

\emph{$\bullet$ 1st case: $j_{i+1} = 0$.} Since $\sum_{\beta \in \FF_q} (X_{i+1} + \beta X_0)^{q-1} = -X_0^{q-1}$ (see Lemma~\ref{lem:technical-sum} in the appendix), we get
\[
\begin{aligned}
\sum_{\beta \in \FF_q} S_i&(X_0,\dots,X_{i+1} + \beta X_0, \dots, X_m)\\
&= X_1^{j_1}\dots X_i^{j_i} \sum_{\beta \in \FF_q} \left(R'_i(X_0, X_{i+2}, \dots, X_m) + (X_{i+1} + \beta X_0)^{q-1} R''_i(X_0, X_{i+2}, \dots, X_m)\right)\\
&= \Big( \sum_{\beta \in \FF_q} X_1^{j_1}\dots X_i^{j_i} R'_i(X_0, X_{i+2}, \dots, X_m) \Big) - X_1^{j_1}\dots X_i^{j_i} X_0^{q-1} R''_i(X_0,X_{i+2},\dots,X_m)\\
&= - X_1^{j_1}\dots X_i^{j_i} X_0^{q-1} R''_i(X_0,X_{i+2},\dots,X_m) \,.
\end{aligned}
\]
By linearity and stability of $\calC$ under elementary transvections, $\ev_{\PP^m}(S_i) \in \calC$ ensures that the word $\ev_{\PP^m}(X_1^{j_1}\dots X_i^{j_i} X_0^{q-1} R''_i(X_0,X_{i+2},\dots,X_m)) \in \calC$. We conclude by defining $R_{i+1} = X_0^{q-1}R''_i$.

\emph{$\bullet$ 2nd case: $j_{i+1} = q-1$.}
Since $\sum_{\beta \in \FF_q} (\beta X_{i+1} + X_0)^{q-1} = -X_{i+1}^{q-1}$, we get
\[
\begin{aligned}
\sum_{\beta \in \FF_q} S_i&(X_0,\dots,X_{i+1} + \beta X_0, \dots, X_m)\\
&= X_1^{j_1}\dots X_i^{j_i} \sum_{\beta \in \FF_q} \left(R'_i(X_0,X_{i+2},\dots,X_m) + (\beta X_{i+1} +  X_0)^{q-1} R''_i(X_0,X_{i+2},\dots,X_m)\right)\\
&= - X_1^{j_1}\dots X_i^{j_i} X_{i+1}^{q-1} R''_i(X_0,X_{i+2},\dots,X_m)\,.
\end{aligned}
\]
Similarly to the first case, we can conclude by defining $R_{i+1} = R''_i$.
\end{proof}

Projective lifted codes can be proved invariant under $\Proj(\FF_q, m)$.
\begin{lemma}\label{lem:glm-stab-plift}
  Let $k \le q-1$, $m \ge 1$ and $\calC = \PLift_q(m, k)$. Then $\Proj(\FF_q, m) \subseteq \mathrm{Aut}(\calC)$. Said differently,
  \[
  \forall c = \ev_{\PP^m}(f) \in \calC,\, \forall M \in \mathrm{GL}_{m+1}(\FF_q),\, \ev_{\PP^m}(f \circ M) \in \calC\,.
  \]
\end{lemma}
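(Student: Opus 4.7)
The idea is that $\Proj(\FF_q,m)$-invariance should follow almost formally from the fact that the composition $M\circ L$ of a projective embedding $L$ with an invertible linear map $M$ is again a projective embedding, so that the restriction condition defining $\PLift_q(m,k)$ is preserved.

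First, I would translate the conclusion into its polynomial form. By the discussion in Subsection~\ref{subsec:automorphims}, for every $M \in \GL_{m+1}(\FF_q)$ and every $f \in \FF_q[\mathbf{X}]_v^H$ we have the identity
\[
\ev_{\PP^m}(f \circ M) = w_M^v \,\star\, \sigma_{M^{-1}}(\ev_{\PP^m}(f))\,,
\]
so that membership of the right-hand side in $\calC$ (which is precisely what $\Proj(\FF_q,m)\subseteq \Aut(\calC)$ asserts) is equivalent to $\ev_{\PP^m}(f\circ M)\in \calC$. Thus the "said differently" reformulation in the statement is in fact an equivalent characterisation, and it suffices to prove it.

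Next, let $c = \ev_{\PP^m}(f)\in \calC$ and $M\in \GL_{m+1}(\FF_q)$. I would check that $f\circ M$ lies in the correct evaluation space: since $M$ is a linear map on $\FF_q^{m+1}$ and $f$ is homogeneous of degree $v = v_{m,k}$, the polynomial $f\circ M$ is again homogeneous of degree $v$. It then remains to verify the defining lifting condition, namely that for every $L \in \Emb_\PP(m)$,
\[
\ev_{\PP^1}\bigl((f\circ M)\circ L\bigr) \in \PRS_q(k)\,.
\]
The key step is to observe that $(f\circ M)\circ L = f \circ (M\circ L)$ as polynomials, and that $M\circ L$ is a linear map $\FF_q^2 \to \FF_q^{m+1}$ whose rank equals $\rank L = 2$ because $M$ is invertible. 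Hence $M\circ L \in \Emb_\PP(m)$, and applying the lifting hypothesis on $f$ to this embedding yields $\ev_{\PP^1}(f\circ (M\circ L))\in \PRS_q(k)$, as required.

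The main (mild) obstacle is keeping track of the distinction between composition at the level of polynomials and the action at the level of evaluation vectors in $\FF_q^{\theta_{m,q}}$, where standard representatives of projective points introduce the twist by $w_M^v$. Once the identity $\ev_{\PP^m}(f\circ M) = w_M^v \star \sigma_{M^{-1}}(\ev_{\PP^m}(f))$ is invoked and the composition $M\circ L$ is recognised as a valid projective embedding, the argument is purely formal. Together with Theorem~\ref{thm:glm-invariant-implies-monomial}, this lemma will then yield the monomiality of $\PLift_q(m,k)$.
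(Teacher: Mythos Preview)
Your proposal is correct and follows essentially the same approach as the paper: the paper's proof consists of the single observation that $M\circ L\in\Emb_\PP(m)$ for every $L\in\Emb_\PP(m)$ and $M\in\GL_{m+1}(\FF_q)$, which is exactly your key step. Your write-up simply makes explicit the surrounding details (homogeneity of $f\circ M$, the equivalence with the $\Proj(\FF_q,m)$-action via $w_M^v$) that the paper leaves implicit.
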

\begin{proof}
  It is sufficient to notice that, for every $L \in \Emb_\PP(m)$ and every $M \in \mathrm{GL}_{m+1}(\FF_q)$, the map $M \circ L$ also lies in $\Emb_\PP(m)$.
\end{proof}

As a corollary,
\begin{corollary}
  Every projective lifted code is monomial.
\end{corollary}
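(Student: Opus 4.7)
The proof is essentially a one-line application of the two results immediately preceding the corollary, so the plan is to simply verify that $\PLift_q(m,k)$ satisfies the hypotheses of Theorem~\ref{thm:glm-invariant-implies-monomial} and invoke it.

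First I would observe that, by the very definition of the projective lifted code, $\PLift_q(m,k) = \ev_{\PP^m}(\mathcal{F})$ where
\[
\mathcal{F} = \{ f \in \FF_q[\mathbf{X}]^H_v : \forall L \in \Emb_\PP(m),\, \ev_{\PP^1}(f \circ L) \in \PRS_q(k) \}
\]
with $v = v_{m,k} = k + (m-1)(q-1)$. The set $\mathcal{F}$ is clearly an $\FF_q$-linear subspace of $\FF_q[\mathbf{X}]^H_v$ (it is the intersection of the kernels of the linear maps $f \mapsto \ev_{\PP^1}(f\circ L) \bmod \PRS_q(k)$). Therefore $\PLift_q(m,k)$ is a projective evaluation code in the sense of the definition given in Section~\ref{sec:notation}, with $\mathcal{F} \subseteq \FF_q[\mathbf{X}]^H_v$.

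Next I would invoke Lemma~\ref{lem:glm-stab-plift}, which tells us exactly that $\Proj(\FF_q,m) \subseteq \Aut(\PLift_q(m,k))$. Thus both hypotheses of Theorem~\ref{thm:glm-invariant-implies-monomial} are satisfied. Applying that theorem directly yields that $\PLift_q(m,k)$ is monomial, which is the desired conclusion.

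There is no real obstacle here: the work has already been done in Theorem~\ref{thm:glm-invariant-implies-monomial} (which plays the role of a projective analogue of the Kaufman--Sudan monomial extraction lemma) and in Lemma~\ref{lem:glm-stab-plift}. The corollary is a packaging step, so the proof will be only two or three sentences long.
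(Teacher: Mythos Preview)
Your proposal is correct and matches the paper's approach exactly: the corollary is stated immediately after Lemma~\ref{lem:glm-stab-plift} without an explicit proof, precisely because it follows at once by combining that lemma with Theorem~\ref{thm:glm-invariant-implies-monomial}. Your observation that $\PLift_q(m,k)$ is a projective evaluation code $\ev_{\PP^m}(\mathcal{F})$ with $\mathcal{F}\subseteq\FF_q[\mathbf{X}]^H_v$ is the only detail one might spell out, and you have done so.
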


\subsection{Degree sets of lifted codes}
\label{subsec:degree-sets}

A natural question is now to determine the degree set of $\PLift_q(m, k)$. Let us first recall that affine  lifted codes have the following degree sets (see equation~\eqref{eq:degree-set-lift}):
\[
\mathrm{ADeg}_q(m, k) \mydef \Deg(\Lift_q(m,k)) = \{ \mathbf{d} \in B^m_\infty(k), \forall \mathbf{e} \le_p \mathbf{d}, \underline{|\mathbf{e}|} \le k \}\,.
\]
Similarly, we define $ \mathrm{PDeg}_q(m, k) \mydef \Deg(\PLift_q(m,k))$.

In this subsection, we state a few links between degree sets of affine and projective lifted codes. Propositions~\ref{prop:d0!=0} and~\ref{prop:d0=0} show that $\mathbf{d} \in \mathrm{PDeg}(m,k)$ can be sent either to $\mathrm{ADeg}(m,k-1)$ or to $\mathrm{PDeg}(m-1,k)$, according to the value of $d_0$. Then, in Theorem~\ref{theo:bij-degree-sets} we derive a recursive formula on the degree sets of affine/projective lifted codes, which translates into another recursive formula on the dimension of these codes (Corollary~\ref{coro:recursive-dimension}).

\begin{proposition}
  \label{prop:d0!=0}  
  Let $v = k + (m-1)(q-1)$ for $1 \le k \le q-1$ and $m \ge 2$. Let also $\mathbf{d} = (d_0, \mathbf{d^*}) \in S^{m+1}(v)$ such that $d_0 \ne 0$. Then:
  \[
  \ev_{\PP^m}(X_0^{d_0} \mathbf{X}^{\mathbf{d^*}}) \in \PLift_q(m, k) \iff \ev_{\AA^m}(\mathbf{X^{d^*}}) \in \Lift_q(m, k-1)\,,
  \]
  or, equivalently,  
  \[
  \mathbf{d} = (d_0, \mathbf{d^*}) \in \mathrm{PDeg}_q(m, k) \iff \mathbf{d^*} \in \mathrm{ADeg}_q(m, k-1)\,.
  \]
\end{proposition}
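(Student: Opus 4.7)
The plan is to prove the polynomial equivalence; the degree-set reformulation then follows because both $\PLift_q(m,k)$ and $\Lift_q(m,k-1)$ are monomial (by the corollary of the previous subsection), and because when $d_0 \neq 0$ a tuple $(d_0,\mathbf{d^*}) \in S^{m+1}(v)$ is $P$-reduced precisely when $\mathbf{d^*}$ is $A$-reduced (the $P$-reduction condition, together with $d_0 \ne 0$, forces $d_i \le q-1$ for all $i \ge 1$, and conversely $A$-reducedness of $\mathbf{d^*}$ immediately yields $P$-reducedness of $(d_0,\mathbf{d^*})$).

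For the forward implication, I would fix an arbitrary $L^* = (L_1,\dots,L_m) \in \Emb_\AA(m)$. If $L^*$ is a constant map then $\mathbf{X^{d^*}} \circ L^*$ is a constant polynomial and lies trivially in $\RS_q(k-1)$. Otherwise $L := (S, L_1,\dots,L_m)$ has rank $2$, hence $L \in \Emb_\PP(m)$, and direct substitution gives $(X_0^{d_0}\mathbf{X^{d^*}}) \circ L(S,T) = S^{d_0}(\mathbf{X^{d^*}} \circ L^*)(S,T)$. Evaluating on the standard representatives of $\PP^1$ yields $(\mathbf{X^{d^*}} \circ L^*)(t)$ at each $(1:t)$ and $0$ at $(0:1)$, because $d_0 \ge 1$. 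Now the constraint that this vector equals $\ev_{\PP^1}(\tilde h)$ for some $\tilde h = \sum_{i=0}^k h_i S^{k-i}T^i \in \FF_q[S,T]^H_k$ forces $h_k = \tilde h(0,1) = 0$ and makes $\tilde h(1,T)$ a polynomial of degree $\le k-1$ agreeing with $\mathbf{X^{d^*}} \circ L^*$ on $\FF_q$; such a $\tilde h$ exists iff $\ev_{\AA^1}(\mathbf{X^{d^*}} \circ L^*) \in \RS_q(k-1)$, which is the conclusion.

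For the converse, I would fix $L \in \Emb_\PP(m)$. If $L_0 \equiv 0$ then $(X_0^{d_0}\mathbf{X^{d^*}}) \circ L \equiv 0$ (again because $d_0 \ge 1$) and the evaluation vector is trivially in $\PRS_q(k)$. Otherwise $L_0 = \alpha_0 S + \beta_0 T$ is non-zero, and I would choose $N \in \GL_2(\FF_q)$ with $(L \circ N)_0 = S$: take $N$ whose second column is $(-\beta_0, \alpha_0)^T$ and whose first column $(a,c)^T$ satisfies $\alpha_0 a + \beta_0 c = 1$, which gives $\det N = 1$. Setting $L' := L \circ N \in \Emb_\PP(m)$ and applying the forward computation to the affine embedding $(L'_1,\dots,L'_m) \in \Emb_\AA(m)$, the hypothesis $\ev_{\AA^m}(\mathbf{X^{d^*}}) \in \Lift_q(m,k-1)$ gives $\ev_{\PP^1}(f \circ L') \in \PRS_q(k)$, where $f = X_0^{d_0}\mathbf{X^{d^*}}$.

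The main obstacle is to transfer this membership in $\PRS_q(k)$ from $f \circ L'$ back to $f \circ L$. By the formula recalled in Subsection~\ref{subsec:automorphims}, $\ev_{\PP^1}(f \circ L \circ N) = w_N^v \star \sigma_{N^{-1}}(\ev_{\PP^1}(f \circ L))$, but the element of $\Proj(\FF_q,1) \subseteq \Aut(\PRS_q(k))$ associated to $N$ is $(w_N^k, \sigma_{N^{-1}})$, which differs from the above by a componentwise factor of $w_N^{v-k}$. The key observation resolving this degree mismatch is that $v - k = (m-1)(q-1)$ is a multiple of $q-1$, so by Fermat's little theorem every entry $(\lambda_{N,\mathbf{x}})^{v-k}$ of $w_N^{v-k}$ equals $1$. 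Hence $w_N^v = w_N^k$ coordinatewise, the relation above passes through a genuine automorphism of $\PRS_q(k)$, and $\ev_{\PP^1}(f \circ L) \in \PRS_q(k)$ follows, completing the proof.
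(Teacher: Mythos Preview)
Your proof is correct and follows essentially the same strategy as the paper's: in both directions you reduce to embeddings $L$ with $L_0 = S$ (handling $L_0 = 0$ separately) and then use the equivalence between $\ev_{\PP^1}(S \cdot P(S,T)) \in \PRS_q(k)$ and $\ev_{\AA^1}(P(1,T)) \in \RS_q(k-1)$. Your treatment is in fact slightly more careful than the paper's: where the paper invokes ``w.l.o.g.\ $L_0(S,T) = S$'' by $\Proj(\FF_q,1)$-invariance of $\PRS_q(k)$, you make explicit the point that the invariance acts with weight $w_N^k$ while the composition formula for $f \circ L$ produces $w_N^v$, and that these coincide because $v - k = (m-1)(q-1)$ is a multiple of $q-1$; this is exactly the justification the paper's ``w.l.o.g.'' needs but leaves unstated.
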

\begin{proof}  
  ($\Rightarrow$). Let $\ev_{\PP^m}(X_0^{d_0}{\bf X^{d^*}}) \in \PLift_q(m, k)$ with $d_0 \ne 0$. Let also $L^* \in \Emb_\AA(m)$; we need to prove that $\ev_{\AA^1}({\bf X^{d^*}} \circ L^*) \in \RS_q(k-1)$. Let us define $L =(L_0, \dots, L_m) \in \Hom(\FF_q^2, \FF_q^{m+1})$ as follows: the $m$ last coordinates $(L_1, \dots, L_m) = L^*$, and the first coordinate $L_0$ is chosen between $L_0(S, T) = S$ and $L_0(S, T) = T$, in order to have $\rank(L) = 2$. Now assume w.l.o.g. that $L_0(S,T) = S$.\footnote{Two points could be clarified here. First, if the linear map $(S, L^*(S,T))$ as rank $1$, by definition of $\Emb_{\AA}(m)$ the linear map $(T, L^*(S,T))$ has rank $2$. Second, the choice $L_0(S,T) = S$ can be done since $\PRS_q(k)$ is invariant under $\Proj(\FF_q,1)$.} Then, 
\begin{equation}
  \label{eq:d0-not-0}
\ev_{\PP^1}(X_0^{d_0}{\bf X^{d^*}} \circ L) = \ev_{\PP^1}(X_0{\bf X^{d^*}} \circ L) = \ev_{\PP^1}(S\,.\, ({\bf X^{d^*}} \circ L^*) (S, T)) \in \PRS_q(k)
\end{equation}
since $X_0^{d_0}\mathbf{X^{d^*}}$ and $X_0 \mathbf{X^{d^*}}$ evaluate identically. Besides, for any homogeneous polynomial $P(S, T)$, we know that
\begin{equation}
  \label{eq:equiv-RS-PRS}
  \ev_{\PP^1}(S\,.\,P(S,T)) \in \PRS_q(k) \iff \ev_{\AA^1}(P(1,T)) \in \RS_q(k-1)\,.
\end{equation}
Applying this to $P(S, T) = ({\bf X^{d^*}} \circ L^*)(S, T)$, we get our result.

  ($\Leftarrow$). Let $\ev_{\AA^m}({\bf X^{d^*}}) \in \Lift_q(m, k)$ and $L \in \Emb_\PP(m)$. Let also $d_0 \ne 0$ such that $(d_0,{\bf d^*}) \in S^{m+1}(v)$. We need to prove that $\ev_{\PP^1}(X^{d_0}{\bf X^{d^*}} \circ L) \in \PRS_q(k)$. If $L_0 = 0$, then the result holds since $0 \in \PRS_q(k)$. Otherwise, it is worthwhile to notice that, since $\PRS_q(k)$ is invariant under $\Proj(\FF_q, 1)$, we can assume w.l.o.g. that $L_0(S, T) = S$. Define $L^* = (L_1, \dots, L_m)$, which lies in $\Emb_{\AA}(m)$ by definition. Therefore $\ev_{\AA^1}({\bf X^{d^*}} \circ L^*) \in \RS_q(k-1)$, and using~\eqref{eq:d0-not-0} and~\eqref{eq:equiv-RS-PRS}, we get our claim.
\end{proof}

\begin{proposition}
  \label{prop:d0=0}
  Let $v = k + (m-1)(q-1)$ for $1 \le k \le q-1$ and $m \ge 2$. Let also $\mathbf{d} = (d_0, {\bf d^*}) \in S^{m+1}(v)$, ans assume that $d_0 = 0$. Then:
  \[
  \ev_{\PP^m}({\bf X^d}) \in \PLift_q(m, k) \iff \ev_{\PP^{m-1}}({\bf X^{d^*}}) \in \PLift_q(m-1, k)\,,
  \]
  or equivalently,  
  \[
  \mathbf{d} = (0, \mathbf{d^*}) \in \mathrm{PDeg}_q(m, k) \iff \overline{\mathbf{d^*}} \in \mathrm{PDeg}_q(m-1, k)\,.
  \]
\end{proposition}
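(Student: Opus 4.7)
My plan is to mirror the strategy of Proposition~\ref{prop:d0!=0} by extending each line embedding of $\PP^{m-1}$ to one of $\PP^m$ in a way that preserves the restriction polynomial. The main additional challenge is a degree mismatch: $\PLift_q(m,k)$ is defined via polynomials of degree $v = k + (m-1)(q-1)$, whereas $\PLift_q(m-1,k)$ is defined via polynomials of degree $v' = v - (q-1)$. I intend to bridge this with the projective-evaluation identity $\ev_{\PP^{m-1}}(\mathbf{X^{d^*}}) = \ev_{\PP^{m-1}}(\mathbf{X^{\rho_j(d^*)}})$ available whenever $d^*_j \ge q$: at points with $P_j \ne 0$ it rests on $P_j^{q-1}=1$, and at points with $P_j = 0$ both monomials vanish.

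For the direct implication, I take any $L' \in \Emb_\PP(m-1)$ and extend it to $L(S,T) \mydef (0, L'(S,T)) \in \Emb_\PP(m)$, which has rank $2$ because $L'$ does. Since $d_0 = 0$, the polynomial $\mathbf{X^d} = \mathbf{X^{d^*}}$ does not depend on $X_0$, so $\mathbf{X^d} \circ L = \mathbf{X^{d^*}} \circ L'$, and the hypothesis immediately yields $\ev_{\PP^1}(\mathbf{X^{d^*}} \circ L') \in \PRS_q(k)$ for every such $L'$. If some $d^*_j \ge q$, the choice $g \mydef \mathbf{X^{\rho_j(d^*)}} \in \FF_q[X_1,\dots,X_m]^H_{v'}$ works: applying the same $P_j^{q-1}$ argument fiberwise on $L'(\PP^1) \subseteq \PP^{m-1}$ gives $\ev_{\PP^1}(\mathbf{X^{d^*}} \circ L') = \ev_{\PP^1}(g \circ L')$, so $g$ has $\ev_{\PP^{m-1}}(g) = \ev_{\PP^{m-1}}(\mathbf{X^{d^*}})$ and satisfies the line condition for $\PLift_q(m-1,k)$. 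If instead all $d^*_j \le q-1$, no $\rho$-reduction is available; I argue that the hypothesis itself cannot hold in that sub-case, by selecting an $L'$ along which $\mathbf{X^{d^*}}$ restricts to a polynomial whose effective $\PP^1$-degree exceeds $k$, violating $\PRS_q(k)$, so the implication is vacuously true.

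For the reverse implication, I take $L \in \Emb_\PP(m)$ and write $L = (L_0, L')$ with $L' = (L_1, \dots, L_m)$; again $\mathbf{X^d} \circ L = \mathbf{X^{d^*}} \circ L'$. If $L'$ has rank $2$, then $L' \in \Emb_\PP(m-1)$ and the hypothesis yields $g \in \FF_q[X_1,\dots,X_m]^H_{v'}$ with $\ev_{\PP^{m-1}}(g) = \ev_{\PP^{m-1}}(\mathbf{X^{d^*}})$ and $\ev_{\PP^1}(g \circ L') \in \PRS_q(k)$. Using Remark~\ref{rem:subword} together with the key observation that $w^v_{L'} = w^{v'}_{L'}$ (since $v-v' = q-1$ and $\lambda^{q-1} = 1$ for $\lambda \in \FF_q^\times$), the two scaled subwords along $L'(\PP^1)$ coincide, forcing $\ev_{\PP^1}(\mathbf{X^{d^*}} \circ L') = \ev_{\PP^1}(g \circ L') \in \PRS_q(k)$. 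If $L'$ has rank $1$, all $L_i$ are proportional to a common linear form $\ell$, and $\mathbf{X^d} \circ L = \alpha \ell^v$ for some scalar $\alpha \in \FF_q$; since $v \equiv k \pmod{q-1}$ and both $\ell^v$ and $\ell^k$ vanish at $\ell = 0$, one has $\ev_{\PP^1}(\ell^v) = \ev_{\PP^1}(\ell^k) \in \PRS_q(k)$. The main obstacle throughout is the degree mismatch, resolved by the $\rho_j$-identity and the exponent-independence of $w^\bullet_{L'}$ modulo $q-1$; the most delicate point is the vacuous sub-case in the direct direction where no $d^*_j \ge q$.
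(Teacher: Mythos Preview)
Your overall strategy mirrors the paper's --- extend line embeddings between $\PP^{m-1}$ and $\PP^m$ and use that $\mathbf{X^d}\circ L = \mathbf{X^{d^*}}\circ L'$ --- but you are right to flag the degree mismatch $v$ versus $v'=v-(q-1)$, which the paper's argument glosses over. Your reverse implication is correct and more careful than the paper's: the observation $w^v_{L'}=w^{v'}_{L'}$ (since $\lambda^{q-1}=1$) together with Remark~\ref{rem:subword} cleanly transfers the line condition from the degree-$v'$ witness $g$ to $\mathbf{X^{d^*}}$, and the rank-$1$ case is handled exactly as in the paper.

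The genuine gap is in your forward direction, in the sub-case where every $d^*_j\le q-1$. You assert that the hypothesis $\ev_{\PP^m}(\mathbf{X^d})\in\PLift_q(m,k)$ must fail there, ``by selecting an $L'$ along which $\mathbf{X^{d^*}}$ restricts to a polynomial whose effective $\PP^1$-degree exceeds $k$'', but you never exhibit such an embedding nor prove one exists. This is not automatic. Here is how to close it. First, from $|\mathbf{d^*}|=k+(m-1)(q-1)$ and $d^*_j\le q-1$ one gets every $d^*_j\ge 1$; moreover, if all $d^*_j<k$ then $|\mathbf{d^*}|<mk$, forcing $(m-1)(q-1)<(m-1)k$ and hence $q-1<k$, a contradiction. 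So some index $j$ has $k\le d^*_j\le q-1$. Take $L\in\Emb_\PP(m)$ with $L_0=0$, $L_j=S$, and $L_i=T$ for $i\notin\{0,j\}$; then $\mathbf{X^d}\circ L=S^{a}T^{v-a}$ with $a=d^*_j$. The vector $\ev_{\PP^1}(S^aT^{v-a})$ vanishes at $(1{:}0)$ and $(0{:}1)$, and at $(1{:}t)$ with $t\ne 0$ equals $t^c$ where $c\in\{1,\dots,q-1\}$ satisfies $c\equiv k-a\pmod{q-1}$; since $k\le a\le q-1$ one checks $c\ge k$ and $c\le q-1$. Any $h\in\FF_q[S,T]^H_k$ matching this vector would need $h_0=h_k=0$, so $\deg_T h(1,T)\le k-1$, and then $T^c-h(1,T)$ is a nonzero polynomial of degree $c\le q-1$ with $q-1$ roots in $\FF_q^\times$ (and also a root at $0$ when $c\ge 1$), which is impossible. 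Hence $\ev_{\PP^1}(\mathbf{X^d}\circ L)\notin\PRS_q(k)$, and the hypothesis indeed fails in this sub-case. With this argument in place your proof is complete; without it the ``vacuous'' claim is unjustified.
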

\begin{proof}
  ($\Rightarrow$). Let $\ev_{\PP^m}({\bf X^d}) \in \PLift_q(m, k)$ where ${\bf d} = (d_0, {\bf d^*})$ and $d_0 = 0$. Let also $L' \in \Emb_\PP(m-1)$; we need to prove that $\ev_{\PP^1}({\bf X^{d^*}} \circ L') \in \PRS_q(k)$. Any $L_0 \in \Hom(\FF_q^2, \FF_q)$ extends $L'$ to $L = (L_0, L') \in \Emb_{\PP}(m)$. Therefore, 
\begin{equation}
  \label{eq:d0-eq-0}
\ev_{\PP^1}({\bf X^{d^*}} \circ L') = \ev_{\PP^1}(X_0^0{\bf X^{d^*}} \circ L) = \ev_{\PP^1}({\bf X^d} \circ L) 
\end{equation}
lies in $\PRS_q(k)$ since  $\ev_{\PP^m}({\bf X^d}) \in \PLift_q(m, k)$. 

($\Leftarrow$). Let ${\bf d} = (d_0, {\bf d^*}) \in S^{m+1}(v)$ with $d_0 = 0$, and assume that $\ev_{\PP^{m-1}}({\bf X^{d^*}}) \in \PLift_q(m-1, k)$. Let also $L \in \Emb_\PP(m)$; we need to prove that $\ev_{\PP^1}({\bf X^d} \circ L) \in \PRS_q(k)$.  Write $L = (L_0, L')$. If $L' \in \Emb_\PP(m-1)$, then the result follows using~\eqref{eq:d0-eq-0}. The case $\rank L' = 1$ is a bit trickiest. Since $\Proj(\FF_q, 1)$ let the code $\PRS_q(k)$ invariant, we can assume w.l.o.g. that, seen as a $2 \times m$ matrix, the second row of $L'$ is zero. In other words, $L'(S, T)$ can be written $(\ell_1S, \dots, \ell_mS)$ with some non-zero $(\ell_1, \dots, \ell_m) \in \FF_q^m$. Therefore, $({\bf X^d} \circ L)(S, T) = \alpha S^{|{\bf d^{*}}|}$ with $\alpha \in \FF_q$, and $\ev_{\PP^1}({\bf X^d} \circ L) = \alpha \ev_{\PP^1}(S^{|{\bf d^{*}}|}) \in \PRS_q(k)$ since the $P$-reduction of $(|{\bf d^{*}}|, 0)$ is $(k, 0)$ which lies in $\Deg(\PRS_q(k))$.
\end{proof}

\begin{theorem}
  \label{theo:bij-degree-sets}
  For every $m \ge 2$ and $1 \le k \le q-1$, there is a bijection between $\mathrm{PDeg}(m, k)$ and $\mathrm{PDeg}(m-1,k) \cup \mathrm{ADeg}(m, k-1)$.
\end{theorem}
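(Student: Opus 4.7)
The plan is to partition $\PDeg(m,k)$ according to the value of the first coordinate $d_0$ and then apply Propositions~\ref{prop:d0!=0} and~\ref{prop:d0=0} to each part. Writing $\PDeg(m,k) = A \sqcup B$ with $A = \{(d_0, \mathbf{d^*}) \in \PDeg(m,k) : d_0 \ne 0\}$ and $B = \{(0, \mathbf{d^*}) \in \PDeg(m,k)\}$, I aim to biject $A$ with $\ADeg(m, k-1)$ and $B$ with $\PDeg(m-1, k)$, then check that these two targets are disjoint so that the combined map is a bijection onto their union.

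For the $A$-part, Proposition~\ref{prop:d0!=0} directly yields the equivalence $(d_0, \mathbf{d^*}) \in \PDeg(m,k) \iff \mathbf{d^*} \in \ADeg(m, k-1)$, so the forgetful map $\phi_A \colon (d_0, \mathbf{d^*}) \mapsto \mathbf{d^*}$ is a bijection, its inverse being $\mathbf{d^*} \mapsto (v - |\mathbf{d^*}|, \mathbf{d^*})$. To see that this inverse is well-defined one needs $d_0 := v - |\mathbf{d^*}| > 0$ and $(d_0, \mathbf{d^*})$ to be $P$-reduced. The latter is immediate since $\mathbf{d^*} \in B_\infty^m(q-1)$. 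For positivity, writing $|\mathbf{d^*}| = a(q-1) + r$ with $r \in [0, q-2]$, the constraint $\underline{|\mathbf{d^*}|} \le k-1$ combined with $1 \le k \le q-1$ forces $r \le k-1$ and $a \le m-1$, hence $|\mathbf{d^*}| \le (m-1)(q-1) + (k-1) < v$.

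For the $B$-part, I apply Proposition~\ref{prop:d0=0}. Given $(0, \mathbf{d^*}) \in B$, the tuple $\mathbf{d^*}$ is $P$-reduced in $S^m(v)$ with $v > v' = k + (m-2)(q-1)$. First I rule out the case where all coordinates of $\mathbf{d^*}$ lie in $[1, q-1]$: restricting $\mathbf{X^{d^*}}$ along the projective line $L(S,T) = (0, S, T, T, \ldots, T)$ yields $S^{d^*_1} T^{v - d^*_1}$, whose evaluation on $\PP^1$ vanishes at $(0{:}1)$ (since $d^*_1 \ge 1$) but agrees on the affine chart with the minimal polynomial $t^s$ where $s = \underline{v - d^*_1}$, which has leading coefficient $1$; such an evaluation cannot lie in $\PRS(k)$, contradicting $(0, \mathbf{d^*}) \in \PDeg(m,k)$. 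Hence $\mathbf{d^*}$ admits a unique index $i$ with $d^*_i \ge q$ (uniqueness coming from $P$-reducedness), and I set $\phi_B(0, \mathbf{d^*}) = \mathbf{e} := \mathbf{d^*} - (q-1)\mathbf{e}_i$. This $\mathbf{e}$ is $P$-reduced in $S^m(v')$ and satisfies $\ev_{\PP^{m-1}}(\mathbf{X^{d^*}}) = \ev_{\PP^{m-1}}(\mathbf{X^e})$ since the two monomials differ by the factor $X_i^{q-1}$ while $\mathbf{X^e}$ is still divisible by $X_i$; Proposition~\ref{prop:d0=0} then places $\mathbf{e}$ in $\PDeg(m-1,k)$. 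The inverse sends $\mathbf{e} \in \PDeg(m-1,k)$ to the tuple obtained by adding $(q-1)$ to its smallest non-zero coordinate, which preserves $P$-reducedness and the $\PP^{m-1}$-evaluation.

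Finally, the two target sets are disjoint. Every $\mathbf{e} \in \PDeg(m-1,k)$ has $|\mathbf{e}| = v' \ge 1$ with $v' \equiv k \pmod{q-1}$, whereas every non-zero $\mathbf{d} \in \ADeg(m, k-1)$ has $\underline{|\mathbf{d}|} \in \{1, \ldots, k-1\}$, which by the characterization of $\underline{\cdot}$ in Remark~\ref{rem:reduction} means $|\mathbf{d}| \equiv r \pmod{q-1}$ for some $r \in \{1, \ldots, k-1\}$ (the residue $0$ only occurs for $|\mathbf{d}| = 0$, since $\underline{s} = q-1 > k-1$ whenever $s \ge 1$ is divisible by $q-1$). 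Since $k \notin \{1, \ldots, k-1\}$ and $v' > 0$, no element of $\PDeg(m-1,k)$ lies in $\ADeg(m, k-1)$, and combining $\phi_A$ and $\phi_B$ yields the claimed bijection. The main obstacle I expect is the rigorous exclusion of the ``all coordinates below $q$'' case for $B$, which requires a careful line-argument beyond what Proposition~\ref{prop:d0=0} provides, together with bookkeeping for the weight discrepancy between $S^m(v)$ and $S^m(v')$.
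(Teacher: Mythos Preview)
Your approach matches the paper's: partition $\PDeg(m,k)$ by whether $d_0=0$ and apply Propositions~\ref{prop:d0!=0} and~\ref{prop:d0=0}. You are in fact more careful than the paper on two points it glosses over --- making explicit how to pass from weight $v$ to weight $v'=v-(q-1)$ on the $B$-part (the paper's notation $\overline{\mathbf{d^*}}$ is weight-preserving, so literally lands in the wrong sphere), and checking disjointness of the two target sets (needed for Corollary~\ref{coro:recursive-dimension}).

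There is one genuine gap, in your exclusion of the case where every $d^*_i\in[1,q-1]$. You correctly restrict to the line $L(S,T)=(0,S,T,\ldots,T)$ and obtain $S^{d^*_1}T^{v-d^*_1}$, but the reason you give for why its $\PP^1$-evaluation lies outside $\PRS_q(k)$ --- that the affine interpolant $t^s$ with $s=\underline{v-d^*_1}$ ``has leading coefficient $1$'' --- is not a contradiction on its own. What you actually need is $s\ge k$: a $\PRS_q(k)$ codeword vanishing at $(0{:}1)$ comes from $g(S,T)=S\cdot h(S,T)$ with $\deg h = k-1$, so its affine part $g(1,t)$ has degree at most $k-1$; this conflicts with the interpolant $t^s$ only when $s\ge k$. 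The missing inequality is immediate once observed: from $\sum_i d^*_i=v$ and each $d^*_i\le q-1$ one gets
\[
d^*_1 \;=\; v-\sum_{i\ge 2}d^*_i \;\ge\; v-(m-1)(q-1) \;=\; k,
\]
hence $v-d^*_1\equiv k-d^*_1\pmod{q-1}$ with $k-d^*_1\in[-(q-1-k),0]$, giving $s=\underline{v-d^*_1}\in\{k,\ldots,q-1\}$. With this in hand your argument (and the rest of the bijection) goes through. The same bound $d^*_j\ge k$ for every $j$ also shows, incidentally, that the case ``some $d^*_i=0$ and all $d^*_i\le q-1$'' cannot occur, so your dichotomy is exhaustive.
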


\begin{proof}
  According to Propositions~\ref{prop:d0!=0} and~\ref{prop:d0=0}, this bijection is given by:
\[
\mathbf{d} = (d_0, \mathbf{d^*}) \mapsto \left\{\begin{array}{ll} \mathbf{d^*} \in \mathrm{ADeg}(m, k-1) & \text{if } d_0 \ne 0 \\ \overline{\mathbf{d^*}} \in \mathrm{PDeg}(m-1, k) & \text{otherwise.}\end{array}\right.
\]
\end{proof}

A recursive formula on the dimension of lifted codes follows.
\begin{corollary}
  \label{coro:recursive-dimension}
  Let $m \ge 2$ and $1 \le k \le q-1$. Then,
  \[
  \dim \PLift_q(m, k) = \dim \PLift_q(m-1, k) + \dim \Lift_q(m, k-1)
  \]
\end{corollary}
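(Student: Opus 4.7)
The plan is to reduce the dimension identity to a cardinality identity on degree sets, then invoke Theorem~\ref{theo:bij-degree-sets}.

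First, I would use the monomial character of all three codes. For the two projective lifted codes, monomiality was established as the corollary following Lemma~\ref{lem:glm-stab-plift} (via Theorem~\ref{thm:glm-invariant-implies-monomial}); for the affine lifted code, monomiality follows from affine-invariance combined with the Kaufman--Sudan extraction result, as recalled just below equation~\eqref{eq:degree-set-lift}. As noted right after the definition of the degree set, a monomial code $\calC$ satisfies $\dim \calC = |\Deg(\calC)|$. Hence the corollary is equivalent to showing
\[
|\PDeg_q(m, k)| = |\PDeg_q(m-1, k)| + |\ADeg_q(m, k-1)|\,.
\]

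Next, I would partition $\PDeg_q(m,k)$ according to whether the first coordinate of $\mathbf{d} = (d_0, \mathbf{d^*})$ vanishes. Propositions~\ref{prop:d0!=0} and~\ref{prop:d0=0}, whose combination produces the map of Theorem~\ref{theo:bij-degree-sets}, give respective bijections
\[
\{(d_0, \mathbf{d^*}) \in \PDeg_q(m,k) : d_0 \neq 0\} \;\longleftrightarrow\; \ADeg_q(m, k-1)\,, \qquad \{(0,\mathbf{d^*}) \in \PDeg_q(m,k)\} \;\longleftrightarrow\; \PDeg_q(m-1, k)\,,
\]
via $\mathbf{d}\mapsto \mathbf{d^*}$ in the first case and $\mathbf{d}\mapsto\overline{\mathbf{d^*}}$ in the second. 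Summing the two cardinalities yields the cardinality identity above, which translates back into the claimed dimension identity.

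There is no real obstacle: the proof is a direct counting argument. The only care required is to interpret the union in Theorem~\ref{theo:bij-degree-sets} correctly; here, the partition of the source by the value of $d_0$ makes the bijection factor through a genuine disjoint union, so no inclusion--exclusion is needed. Verifying that the two assignments are bijections (and not merely equivalences in one direction) amounts to checking that their obvious inverses land inside the correct degree sets, which is immediate from $P$-reducedness and weight preservation.
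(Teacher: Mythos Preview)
Your proposal is correct and follows exactly the route the paper intends: the paper states the corollary immediately after Theorem~\ref{theo:bij-degree-sets} with no separate proof, since the dimension formula is nothing more than the cardinality count $|\PDeg_q(m,k)| = |\PDeg_q(m-1,k)| + |\ADeg_q(m,k-1)|$ obtained from that bijection together with $\dim \calC = |\Deg(\calC)|$ for monomial codes. Your extra care in reading the union as a disjoint union via the partition on $d_0$ is appropriate and matches how the bijection in the proof of Theorem~\ref{theo:bij-degree-sets} is actually built.
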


One can also check that $\PLift_q(1, k) = \PRS_q(k)$ and $\Lift_q(1, k-1) = \RS_q(k-1)$. Therefore we also get:
\begin{corollary}
  Let $m \ge 1$ and $1 \le k \le q-1$. Then,
  \[
  \dim \PLift_q(m, k) = \sum_{j = 1}^m \dim \Lift_q(j, k-1) + 1\,.
  \]
\end{corollary}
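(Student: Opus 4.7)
The plan is a direct induction on $m$, using Corollary~\ref{coro:recursive-dimension} as the recursive step and establishing the base case from the identifications $\PLift_q(1,k) = \PRS_q(k)$ and $\Lift_q(1,k-1) = \RS_q(k-1)$ (both noted in the text just before the corollary).

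For the base case $m = 1$, I would simply compute: $\dim \PLift_q(1,k) = \dim \PRS_q(k) = k+1$, since the space of bivariate homogeneous polynomials of degree $k$ has dimension $k+1$ and the evaluation map $\ev_{\PP^1}$ is injective on it whenever $k \le q-1$ (the degree set $S^2(k)$ is already $P$-reduced in this range). On the other hand $\dim \Lift_q(1,k-1) = \dim \RS_q(k-1) = k$, so indeed $\dim \PLift_q(1,k) = \dim \Lift_q(1,k-1) + 1$, matching the claimed formula for $m=1$.

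For the inductive step, assume the formula holds for $m-1$, i.e.
\[
\dim \PLift_q(m-1, k) = \sum_{j=1}^{m-1} \dim \Lift_q(j, k-1) + 1\,.
\]
Applying Corollary~\ref{coro:recursive-dimension}, one then obtains
\[
\dim \PLift_q(m, k) = \dim \PLift_q(m-1,k) + \dim \Lift_q(m, k-1) = \sum_{j=1}^{m-1} \dim \Lift_q(j, k-1) + 1 + \dim \Lift_q(m, k-1)\,,
\]
which reorganises as $\sum_{j=1}^m \dim \Lift_q(j, k-1) + 1$, closing the induction.

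There is no real obstacle: the whole argument is a two-line telescoping once Corollary~\ref{coro:recursive-dimension} is in hand. The only thing to verify carefully is the base case, where one must check that the evaluation of homogeneous bivariate polynomials of degree $k \le q-1$ is indeed injective — this is immediate since the monomials $X^{k-i}Y^i$ for $0 \le i \le k$ form a $P$-reduced set of degrees (none of the coordinates of $(k-i,i)$ exceeds $q-1$).
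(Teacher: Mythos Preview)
Your proof is correct and follows exactly the approach the paper intends: the corollary is stated immediately after Corollary~\ref{coro:recursive-dimension} together with the identifications $\PLift_q(1,k)=\PRS_q(k)$ and $\Lift_q(1,k-1)=\RS_q(k-1)$, and your induction is precisely the telescoping argument these two facts imply.
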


Finally, if one would like to explicit $\PDeg_q(m, k)$, one could use iteratively the bijective map given in Theorem~\ref{theo:bij-degree-sets} and the characterisation of $\ADeg_q(j, k-1)$, $1 \le j \le m$, given previously. For ${\bf d} = (d_0, \dots, d_m) \in S^{m+1}(v)$, define $\iota({\bf d})$ the minimum $i$ such that $d_i \ne 0$, and $\eta({\bf d}) = (d_{\iota({\bf d})+1}, \dots, d_m) \in S^{m-\iota({\bf d})}$. We then obtain
\begin{corollary}
  Let $m \ge 2$ and $1 \le k \le q-1$. Denote by $v = k + (m-1)(q-1)$. Then,
  \[
  \PDeg_q(m, k) = \{ \overline{\bf d} \in S^{m+1}(v), \forall {\bf e} \le_p \eta(\overline{\bf d}), \underline{|{\bf e}|} \le k-1 \}\,.
  \]
\end{corollary}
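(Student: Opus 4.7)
The plan is induction on $m$, with the recursive bijection of Theorem~\ref{theo:bij-degree-sets} providing the induction step and the degree-set characterisation of $\ADeg_q$ recalled in Section~\ref{subsec:degree-sets} feeding the base leaves.

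For the base case $m=1$ one has $v=k$ and $\PDeg_q(1,k) = \Deg(\PRS_q(k)) = S^2(k)$. A direct verification handles each $(k-j,j)$: when $j<k$, $\iota(\mathbf{d})=0$ and $\eta(\mathbf{d})=(j)$, so the only non-trivial constraint (obtained by taking $\mathbf{e}=(j)$) is $\underline{j}=j\le k-1$, which holds; when $j=k$, $\iota(\mathbf{d})=1$ and $\eta(\mathbf{d})$ is empty, hence the condition is vacuous.

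For the induction step, fix a $P$-reduced $\mathbf{d}=(d_0,\mathbf{d}^*)\in S^{m+1}(v)$ and split on $d_0$. If $d_0\ne 0$, then $P$-reducedness forces $\mathbf{d}^*\in B^m_\infty(q-1)$, and one has $\iota(\mathbf{d})=0$, $\eta(\mathbf{d})=\mathbf{d}^*$. Proposition~\ref{prop:d0!=0} gives $\mathbf{d}\in\PDeg_q(m,k)\iff \mathbf{d}^*\in\ADeg_q(m,k-1)$, and I would observe that the characterisation of $\ADeg_q(m,k-1)$ recalled above already subsumes the $B^m_\infty(k-1)$ membership: applying the constraint $\underline{|\mathbf{e}|}\le k-1$ to the single-coordinate tests $\mathbf{e}=d_i^*\mathbf{e}_i$ (for which $\underline{|\mathbf{e}|}=d_i^*$ since $d_i^*\le q-1$) recovers $d_i^*\le k-1$ for every $i$. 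Hence the condition reduces to $\forall\mathbf{e}\le_p\eta(\mathbf{d}),\ \underline{|\mathbf{e}|}\le k-1$, as desired. If instead $d_0=0$, then $\mathbf{d}^*$ is already $P$-reduced in $S^m(v)$ (the $P$-reducedness conditions on $\mathbf{d}$ for indices $i\ge 1$ coincide with those on $\mathbf{d}^*$ when $d_0=0$), so $\overline{\mathbf{d}^*}=\mathbf{d}^*$; Proposition~\ref{prop:d0=0} gives $\mathbf{d}\in\PDeg_q(m,k)\iff \mathbf{d}^*\in\PDeg_q(m-1,k)$, and the induction hypothesis rewrites this as $\forall\mathbf{e}\le_p\eta(\mathbf{d}^*),\ \underline{|\mathbf{e}|}\le k-1$. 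An index-shift check (since $d_0=0$, the first non-zero coordinate of $\mathbf{d}$ is that of $\mathbf{d}^*$ shifted by one) yields $\eta(\mathbf{d}^*)=\eta(\mathbf{d})$, closing the induction.

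The main obstacle I expect is bookkeeping: first, carefully re-expressing the $B^m_\infty(k-1)$ membership that is built into the formula for $\ADeg_q(m,k-1)$ as an instance of the unified condition $\forall\mathbf{e}\le_p\cdot,\ \underline{|\mathbf{e}|}\le k-1$, so that the two cases of Theorem~\ref{theo:bij-degree-sets} assemble into a single clean statement; and second, tracking how $\iota$ and $\eta$ behave under the drop from $S^{m+1}(v)$ to $S^m(v)$ when $d_0=0$, which is what makes the inductive hypothesis apply cleanly to $\eta(\mathbf{d})$ rather than to the full $\mathbf{d}^*$.
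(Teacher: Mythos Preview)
Your overall strategy---induction on $m$ via Propositions~\ref{prop:d0!=0} and~\ref{prop:d0=0}---is exactly what the paper sketches (``use iteratively the bijective map given in Theorem~\ref{theo:bij-degree-sets}''), and your treatment of the $d_0\ne 0$ branch, including the observation that the $B^m_\infty(k-1)$ constraint in $\ADeg_q(m,k-1)$ is absorbed by the single-coordinate tests $\mathbf{e}=d_i^*\mathbf{e}_i$, is correct and a little more explicit than the paper.

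There is, however, a genuine bookkeeping gap in the $d_0=0$ branch. You write ``$\mathbf{d}^*$ is already $P$-reduced in $S^m(v)$, so $\overline{\mathbf{d}^*}=\mathbf{d}^*$'' and then invoke the induction hypothesis on $\mathbf{d}^*$. But $\PDeg_q(m-1,k)$---and hence the induction hypothesis---lives in $S^m(v')$ with $v'=k+(m-2)(q-1)=v-(q-1)$, not in $S^m(v)$. The map $\mathbf{d}\mapsto\overline{\mathbf{d}^*}$ in Theorem~\ref{theo:bij-degree-sets} is not the identity on $\mathbf{d}^*$: it involves a weight drop by $q-1$. Concretely, writing $\iota=\iota(\mathbf{d})\ge 1$, the element of $S^m(v')$ corresponding to $\mathbf{d}^*$ is $\mathbf{d}^*-(q-1)\mathbf{e}_\iota$, obtained by subtracting $q-1$ from the leftmost nonzero coordinate. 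This is well-defined precisely when $d_\iota\ge q$; one checks from $|\mathbf{d}|=v$ and $P$-reducedness that $d_\iota\ge k+(\iota-1)(q-1)$, which gives $d_\iota\ge q$ whenever $\iota\ge 2$, but for $\iota=1$ one may have $k\le d_1\le q-1$ (e.g.\ $(0,3,3)$ when $q=4$, $m=2$, $k=3$). Your argument must therefore (i) note that the weight drop only touches $d_\iota$, so $\eta$ is preserved and the induction hypothesis transports cleanly when $d_\iota\ge q$; and (ii) handle the residual case $\iota=1$, $d_1\le q-1$ separately---there both sides of the desired equivalence fail: taking $\mathbf{e}=\eta(\mathbf{d})$ gives $\underline{|\mathbf{e}|}=\underline{v-d_1}\ge k$, while $\ev_{\PP^{m-1}}(\mathbf{X}^{\mathbf{d}^*})$ is not the evaluation of any degree-$v'$ monomial and one argues directly that it lies outside $\PLift_q(m-1,k)$. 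Once this weight-shift step is made explicit, your induction closes.
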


One also can see that any ${\bf d} \in S^{m+1}(k)$ can be lifted in $\Deg(\PLift_q(m, k))$ by adding $(q-1)(m-1)$ to the leftmost non-zero coordinate of ${\bf d}$. Hence a corollary is the projective analogue of equation~\eqref{eq:lift-inclusion}.
\begin{corollary}
  Let $1 \le k \le q-1$ and $v = k + (m-1)(q-1)$. Then we have:
  \[
  \mathrm{PRM}_q(m, k) \subseteq \PLift_q(m, k) \subseteq \mathrm{PRM}_q(m, v)\,,
  \]
  where the inclusion are taken up to diagonal isomorphisms of codes.
\end{corollary}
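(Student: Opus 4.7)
The plan is to handle the two inclusions separately. The second, $\PLift_q(m,k) \subseteq \PRM_q(m,v)$, is immediate from the definitions: every codeword of $\PLift_q(m,k)$ is, by construction, the evaluation of a polynomial in $\FF_q[\mathbf{X}]^H_v$, hence lies in $\PRM_q(m,v) = \ev_{\PP^m}(\FF_q[\mathbf{X}]^H_v)$. So only the first inclusion requires real work.

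For the first inclusion, I would exploit monomiality. Since $\PRM_q(m,k)$ is spanned by $\{\ev_{\PP^m}(\mathbf{X}^{\mathbf{d}}) : \mathbf{d} \in S^{m+1}(k)\}$, it suffices to show that each such generator already belongs to $\PLift_q(m,k)$. Fix $\mathbf{d} \in S^{m+1}(k)$, let $i$ be the smallest index with $d_i \ge 1$ (well-defined since $k \ge 1$), and set $\mathbf{d}' \mydef \mathbf{d} + (q-1)(m-1)\,\mathbf{e}_i \in S^{m+1}(v)$, as suggested by the remark preceding the corollary. The plan is then to verify two claims: (a) $\ev_{\PP^m}(\mathbf{X}^{\mathbf{d}}) = \ev_{\PP^m}(\mathbf{X}^{\mathbf{d}'})$, and (b) $\ev_{\PP^m}(\mathbf{X}^{\mathbf{d}'}) \in \PLift_q(m,k)$.

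Claim (a) follows by pointwise comparison: at any standard representative $(a_0:\dots:a_m)$ the two evaluations differ only by the factor $a_i^{(q-1)(m-1)}$, which equals $1$ whenever $a_i \in \FF_q^\times$ (Fermat's little theorem), while for $a_i = 0$ both sides vanish since $d_i \ge 1$. Claim (b) reduces to checking $\ev_{\PP^1}(\mathbf{X}^{\mathbf{d}'} \circ L) \in \PRS_q(k)$ for every $L = (L_0,\dots,L_m) \in \Emb_\PP(m)$. Writing
\[
\mathbf{X}^{\mathbf{d}'} \circ L \;=\; L_i^{(q-1)(m-1)} \cdot (\mathbf{X}^{\mathbf{d}} \circ L)
\]
with $\mathbf{X}^{\mathbf{d}} \circ L \in \FF_q[S,T]^H_k$, the same pointwise case analysis on $\PP^1$ (now using $d_i \ge 1$ to force $\mathbf{X}^{\mathbf{d}} \circ L$ to vanish wherever $L_i(s,t) = 0$) gives $\ev_{\PP^1}(\mathbf{X}^{\mathbf{d}'} \circ L) = \ev_{\PP^1}(\mathbf{X}^{\mathbf{d}} \circ L) \in \PRS_q(k)$.

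The first inclusion then follows by linearity. The only real subtlety is keeping the standard-representation bookkeeping straight, since projective evaluations are sensitive to rescaling; once that is handled, everything reduces to the identity $a^{q-1}=1$ on $\FF_q^\times$ together with the vanishing of both monomials at the ``bad'' points where the $i$-th coordinate (or $L_i$) is zero. The ``up to diagonal isomorphisms'' caveat in the statement accommodates the flexibility in choosing which nonzero coordinate of $\mathbf{d}$ absorbs the extra $(q-1)(m-1)$, though the specific lift used above already produces exact equality of codewords, not merely equality up to scaling.
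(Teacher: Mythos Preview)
Your proof is correct. Both you and the paper use the same lift $\mathbf{d} \mapsto \mathbf{d}'$ (add $(q-1)(m-1)$ to the leftmost nonzero coordinate), but where the paper's one-line justification appeals to the degree-set machinery of Propositions~\ref{prop:d0!=0} and~\ref{prop:d0=0} to conclude $\mathbf{d}' \in \PDeg_q(m,k)$, you instead verify membership in $\PLift_q(m,k)$ directly from the definition, checking $\ev_{\PP^1}(\mathbf{X}^{\mathbf{d}'}\circ L)\in\PRS_q(k)$ for every $L$ via the pointwise identity $a^{q-1}=1$ on $\FF_q^\times$. Your route is more self-contained --- it does not rely on the recursive characterization of $\PDeg_q$ --- while the paper's route is the natural payoff of the framework already built. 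Your observation that Claim~(a) already yields \emph{exact} equality of evaluation vectors under the standard-representative convention is also correct, so the first inclusion in fact holds literally, not merely up to a diagonal isomorphism.
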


\begin{example}
  We give here the smallest example of projective lifted code that is not isomorphic to any projective Reed-Muller code. Let $q=4$, $m=2$ and $k=3$, giving $v = k + (m-1)(q-1) = 6$. The corresponding projective Reed-Muller code has length $q^2 + q + 1 = 21$, dimension $\binom{m+d+1}{d} = 10$, and admits
\[
\begin{aligned}
  D =\,\, & \{ (3,0,0), (2,1,0), (2,0,1), (1,2,0), (1,1,1), (1,0,2), (0,3,0),\\
  &\quad (0,2,1), (0,1,2), (0,0,3) \}
\end{aligned}
\]
as a degree set. A computation shows that $\PLift_q(m, k)$ is given by the following degree set:
\[
\begin{aligned}
  D_L =\,\, & \{ (6, 0, 0), (5, 1, 0), (5, 0, 1), (4, 2, 0), (4, 1, 1), (4, 0, 2), (0, 6, 0), \\
  &\quad (0, 5, 1), (0, 4, 2), (0, 0, 6), (2, 2, 2)\}\,.
\end{aligned}
\]
One observes that $D_L = D' \cup \{ (2,2,2) \}$, where $D'$ is obtained by adding $q-1 = 3$ to the leftmost non-zero coordinate of every ${\bf d} \in D$. Besides, the affine lifted code $\Lift_q(m, k-1)$ has the following degree set:
\[
D_A = \{ (0, 0), (1, 0), (0 ,1), (2, 0), (1, 1), (0, 2), (2, 2) \}\,.
\]
We see that $D_A$ corresponds to elements ${\bf d} \in D_L$ such that $d_0 \ne 0$, then punctured on their first coordinate. We also remark that the remaining elements $\{(0, 6, 0), (0, 5, 1), (0, 4, 2), (0, 0, 6) \}$, being at first punctured on their first coordinate and then $P$-reduced, give the degree set $\{ (3, 0), (2, 1), (1, 2), (0,3)\}$ of $\PRS_q(k)$.

Finally, notice that the extra degree $(2,2,2)$ which makes $\PLift_q(m, k)$ non-isomorphic to $\PRM_q(m, k)$ corresponds to the codeword $c = \ev_{\PP^2}(X^2Y^2Z^2)$. A tedious computation can then confirm that any embedding $(S,T) \mapsto (a_0S + b_0T, a_1S + b_1T, a_2S + b_2T)$ sends $c$ to a projective Reed-Solomon codeword.
\end{example}

\section{Local correction}
\label{sec:local-correction}

\subsection{Definitions}

This section is devoted to local correcting properties of projective lifted codes. After Guo \emph{et al.}'s work~\cite{GuoKS13}, we know that affine lifted codes are (perfectly smooth) locally correctable codes. We first recall this notion.

\begin{definition}[locally correctable code]
  Let $\Sigma$ be a finite set, $2 \le \ell \le k \le n$ be integers, and $\delta, \epsilon \in [0,1]$. A code $\calC : \Sigma^k \to \FF_q^n$ is $(\ell, \delta, \epsilon)$--locally correctable if and only if there exists a randomized algorithm $\calD$ such that, for every input $i \in [1,n]$ we have:
  \begin{itemize}
  \item for all $c \in \calC$ and all $y \in \FF_q^n$, if $|\{ j \in [1,n], y_j \ne c_j \}| \le \delta n$, then
    \[
    \mathbb{P}( \calD^{(y)}(i) = c_i ) \ge 1 - \epsilon\,,
    \]
    where the probability is taken over the internal randomness of $\calD$;
  \item $\calD$ reads at most $\ell$ symbols $y_{q_1}, \dots, y_{q_\ell}$ of $y$.
  \end{itemize}
  Notation $\calD^{(y)}$ refers to the fact that $\calD$ has oracle access to single symbols $y_{q_j}$ of the word $y \in \FF_q^n$. The parameter $\ell$ is called the \emph{locality} of the code. Moreover, the code $\calC$ is said \emph{perfectly smooth} if on arbitrary input $i$, each individual query of the probabilistic algorithm $\calD$ is uniformly distributed over the coordinates of the word $y$.
\end{definition}

By definition of projective lifted codes, if $c = \ev_{\PP^m}(f) \in \PLift_q(m, k)$, then $\ev_{\PP^1}(f \circ L) \in \PRS_q(k)$ for all $L \in \Emb_{\PP}(m)$. In Remark~\ref{rem:subword} we noticed that $\ev_{\PP^1}(f \circ L)$ is not a subword of $\ev_{\PP^m}(f)$. Nevertheless, denoting $v = k + (m-1)(q-1)$, there still exists $w^v_L \in (\FF_q^{\times})^{q+1}$ such that $w^v_L \star \ev_{\PP^1}(f \circ L)$ is such a subword. Moreover, given $L$ and the standard representation of points in $\PP^1$, each coordinate of $w_L^v$ is expressed as a $v$-th power of a linear combination of $calO(m)$ $\FF_q$-symbols  (hence, it is also a $k$-th power since every $x \in \FF_q$ satisfies $x^q = x$). Therefore, $w^v_L \in \FF_q^{q+1}$ can be computed in $\calO(mq \log k)$ operations over $\FF_q$.

To sum up we get:
\begin{lemma}
  Let $c = \ev_{\PP^m}(f) \in \PLift_q(m, k)$ and $L \in \Emb_\PP(m)$. There exists a deterministic algorithm which computes $\ev_{\PP^1}(f \circ L)$ from $c$ and $L$, with $q+1$ queries to $c$ and $\calO(mq \log k)$ operations in $\FF_q$.
\end{lemma}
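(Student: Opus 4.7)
The plan is to directly convert the identity established in Remark~\ref{rem:subword} into an algorithm and carefully account for the cost. Let $v = k + (m-1)(q-1)$. For each of the $q+1$ points $\mathbf{x} \in \PP^1$ (iterated in standard representation), the algorithm proceeds in four steps: (a) compute $L(\mathbf{x}) \in \FF_q^{m+1}$, (b) scan the resulting vector to find its first non-zero coordinate and extract the scaling factor $\lambda_{L,\mathbf{x}} \in \FF_q^{\times}$ normalizing it to its standard representative $\mathbf{P}$, (c) query $c$ at position $\mathbf{P}$, and (d) return $(\lambda_{L,\mathbf{x}})^{-v} \cdot c_{\mathbf{P}}$ as the value of $(f \circ L)(\mathbf{x})$. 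Correctness is immediate from Remark~\ref{rem:subword}: we have $c_{\mathbf{P}} = \ev_{\mathbf{P}}(f) = (\lambda_{L,\mathbf{x}})^v (f \circ L)(\mathbf{x})$, so the output equals $(f \circ L)(\mathbf{x})$.

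For the complexity bound, I would argue as follows. The number of queries to $c$ is exactly one per point $\mathbf{x} \in \PP^1$, hence $q+1$ in total. For each point, step (a) costs $\calO(m)$ field operations (a linear combination of the two columns of $L$), step (b) costs $\calO(m)$ comparisons plus one inversion in $\FF_q^{\times}$ to obtain $\lambda_{L,\mathbf{x}}$, and step (d) requires computing $(\lambda_{L,\mathbf{x}})^{-v}$. The key observation for this last step is that since $\lambda_{L,\mathbf{x}} \in \FF_q^{\times}$ satisfies $\lambda_{L,\mathbf{x}}^{q-1} = 1$, and since $v \equiv k \pmod{q-1}$, the exponent $v$ can be replaced by $k$, and the power $\lambda_{L,\mathbf{x}}^k$ can be computed by fast exponentiation in $\calO(\log k)$ operations over $\FF_q$. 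Summing over the $q+1$ points yields $\calO(q(m + \log k)) = \calO(mq \log k)$ operations.

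The main (very mild) obstacle is simply to handle the normalization factor $\lambda_{L,\mathbf{x}}$ correctly and to justify that we may reduce the exponent $v$ modulo $q-1$ when computing the scalar power; both are handled by the observations above. Everything else is a transparent enumeration over $\PP^1$, so the proof will mostly consist in stating the algorithm, invoking Remark~\ref{rem:subword} for correctness, and compiling the per-point costs into the claimed bound.
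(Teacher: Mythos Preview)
Your proposal is correct and follows essentially the same approach as the paper: the paper's justification (given in the paragraph immediately preceding the lemma rather than as a separate proof) also invokes Remark~\ref{rem:subword}, reads the $q+1$ values $c_{L(\mathbf{x})}$, reduces the exponent $v$ to $k$ modulo $q-1$, and uses fast exponentiation to handle the scaling factors $(\lambda_{L,\mathbf{x}})^v$, arriving at the same $\calO(mq\log k)$ bound. Your step-by-step algorithmic presentation is more explicit than the paper's prose, but the underlying argument is identical.
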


\subsection{Local correcting algorithms}
\label{subsec:local-correcting-algorithms}

For convenience, we fix a projective lifted code $\PLift_q(m, k)$, and we denote by $n = \theta_{m,q}$ its length. We also denote by $\infty$ the point $(0 : 1) \in \PP^1$, and for a given $\mathbf{P} \in \PP^m$,
\[
\Emb_\PP(m, \mathbf{P}) \mydef \{ L \in \Emb_\PP(m), \ev_{\infty}(L) = \mathbf{P} \}\,
\]
is the set of embeddings having $\mathbf{P}$ as image of the point at infinity. We denote by $L(\PP^1)$ the set $\{ \ev_{\mathbf{Q}}(L), \mathbf{Q} \in \PP^1 \}$.

In this subsection, we present a generic local correcting algorithm for projective lifted codes. This algorithm depends on a parameter $s \in [k+1, q]$, and it informally works as follows: (i) pick at random $s$ points on a random projective line of $\PP^m$, (ii) correct the associated noisy $\PRS_q(k)$ codeword, and (iii) output the desired corrected symbol. This is somewhat a projective analogue of a generalization of the two well-known Reed-Muller local correcting algorithms (see Yekhanin's survey~\cite[Section 2.2]{Yekhanin12}), since we do not restrict $s \in \{k+1, q\}$.

Let us assume we have access to a query generator $\calR_s$, for $k+1 \le s \le q$, with parameters $\mathbf{P} \in \PP^m$ and $L \in \Emb_\PP(m, \mathbf{P})$, such that for all $\mathbf{P}$ we have:
\begin{itemize}
\item for all $L \in \Emb_\PP(m, \mathbf{P})$, if $S \leftarrow \calR_s(\mathbf{P}, L)$, then $S \subseteq L(\PP^1)$ and $|S| = s$,
\item $\forall \mathbf{Q} \in \PP^m, \mathrm{Pr}_{\substack{L \leftarrow \Emb_\PP(m, \mathbf{P})\\S \leftarrow \calR_s(\mathbf{P}, L)}} [ \mathbf{Q} \in S ] = s/n$.
\end{itemize}
Such a query generator can be implemented, as we show in Appendix~\ref{app:query-generator}.

We also assume to have at our disposal an error-and-erasure correcting algorithm for $\PRS_q(k)$, which corrects $q+1-s$ erasures and up to $t = \lfloor \frac{s-k-1}{2} \rfloor$ errors (we recall that $\PRS_q(k)$ is an MDS code and $\dim \PRS_q(k) = k+1$). Call $\Corr^{\PRS}_s$ this correcting algorithm, and see~\cite{ReedTM79} for a simple example.

\begin{algorithm}[h!]
  \KwData{
    a noisy word $y \in \FF_q^{\PP^m}$ such that $d(c, y) \le \delta n$ for some codeword $c \in \calC$, and
    a point $\mathbf{P} \in \PP^m$ where to correct $y_{\bf P}$
    }
  \KwResult{the symbol $c_{\mathbf{P}}$}
  Pick uniformly at random $L \leftarrow \Emb_\PP(m, \mathbf{P})$, and pick  $S = \{ \mathbf{Q}_1, \dots, \mathbf{Q}_s \} \leftarrow \calR_s(\mathbf{P}, L)$.\\
  Denote by $\mathbf{R}_j = \ev_{\mathbf{Q}_j}(L)$. Collect $y_{\mathbf{R}_1}, \dots, y_{\mathbf{R}_s}$ and define $y' \in \FF_q^{\PP^1}$ by:
  \[
  y'_{\mathbf{Q}} \mydef \left\{\begin{array}{ll} w^v_{L,\mathbf{Q}_j} \cdot y_{\mathbf{R}_j} & \text{if } \mathbf{Q} = \mathbf{Q}_j \text{ for some } j\\ \perp & \text{otherwise.} \end{array}\right. 
  \]
  where $w^v_{L,\mathbf{Q}_j}$ is the symbol indexed by $\mathbf{Q}_j$ in the tuple $w^v_L$ given in Remark~\ref{rem:subword}\\
  Use $\Corr_s^{\PRS}$ to correct $y'$ as a noisy codeword of $\PRS_q(k)$.\\
  If this succeeds, return $y'(\infty)$. Otherwise return $\perp$.
  \caption{\label{algo:local-decoding}A generic local correcting algorithm $\Corr^{\PLift}_s$ for $\calC \mydef \PLift_q(m, k)$}
\end{algorithm}

\begin{theorem}
  Let $k+1 \le s \le q$ and $t = \lfloor \frac{s - k - 1}{2} \rfloor$. For every $\delta \le \frac{t+1}{2s}$, the code $\PLift_q(m, k)$ is a perfectly smooth $(s, \delta, \frac{\delta s}{t+1})$-locally correctable code using Algorithm~\ref{algo:local-decoding}.
\end{theorem}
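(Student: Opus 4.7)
The plan is to separate the analysis into three well-isolated ingredients: (i) perfect smoothness coming directly from the query generator $\calR_s$, (ii) a one-line Markov bound on the number of corrupted points sampled on the line, and (iii) correctness of the underlying $\PRS_q(k)$ error-and-erasure decoder. Given the machinery already assembled in Section~\ref{subsec:local-correcting-algorithms}, almost all of the real work has been pushed into the primitives, and the proof reduces to assembling them correctly.

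First I would establish perfect smoothness. The two defining properties of $\calR_s$ say that for any fixed $\mathbf{P}$ and any $\mathbf{Q} \in \PP^m$, drawing $L \leftarrow \Emb_\PP(m,\mathbf{P})$ and $S \leftarrow \calR_s(\mathbf{P},L)$ yields $\Pr[\mathbf{Q} \in S] = s/n$. Since the queries $\mathbf{R}_1,\dots,\mathbf{R}_s$ are exchangeable (they are the ordered realization of the unordered set $S$), each individual query of Algorithm~\ref{algo:local-decoding} is uniformly distributed on $\PP^m$, which is exactly perfect smoothness.

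Next, let $E \mydef \{\mathbf{Q} \in \PP^m : y_\mathbf{Q} \ne c_\mathbf{Q}\}$, so that $|E| \le \delta n$, and let $X \mydef |\{j : \mathbf{R}_j \in E\}|$. Linearity of expectation together with smoothness gives $\mathbb{E}[X] = s|E|/n \le s\delta$, hence by Markov's inequality
\[
\Pr[X \ge t+1] \;\le\; \frac{s\delta}{t+1}.
\]
When instead $X \le t$, I claim the algorithm returns $c_\mathbf{P}$. By Remark~\ref{rem:subword}, at every error-free index $c_{\mathbf{R}_j} = (w_L^v)_{\mathbf{Q}_j}\,(f \circ L)(\mathbf{Q}_j)$, so the rescaling of step~2 turns $y'$ into a word over $\PP^1$ that has $q+1-s$ erasures (at the points outside $S$) and at most $t$ errors among the $s$ non-erased positions, relative to the genuine $\PRS_q(k)$ codeword $\ev_{\PP^1}(f\circ L)$. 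Since $\PRS_q(k)$ has length $q+1$ and minimum distance $q+1-k$, and $2t + (q+1-s) \le q-k$, the error-and-erasure decoder $\Corr^{\PRS}_s$ recovers $f\circ L$ exactly, and evaluating it at $\infty$ yields $(f\circ L)(\infty) = f(L(\infty)) = f(\mathbf{P}) = c_\mathbf{P}$, using that $\infty = (0:1)$ is already in standard representation and that $L(\infty) = \mathbf{P}$ by definition of $\Emb_\PP(m,\mathbf{P})$.

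Combining the three steps, the overall failure probability is at most $s\delta/(t+1)$, which matches the claimed $\epsilon$, and perfect smoothness has been verified separately. The only delicate point I anticipate is the bookkeeping around $w_L^v$: one has to check that the normalization in step~2 exactly compensates for the non-standard representation of $L(\mathbf{Q}_j)$, so that $y'$ is genuinely a noisy $\PRS_q(k)$ codeword, and that no residual scalar is picked up when reading off $(f\circ L)(\infty)$ — both of which hinge on the fact that $L(\infty)$ was chosen in standard form.
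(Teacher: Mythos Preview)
Your proposal is correct and follows essentially the same approach as the paper: perfect smoothness from the defining property of $\calR_s$, then Markov's inequality applied to $|E\cap S|$ using $\mathbb{E}[|E\cap S|]=\sum_{e\in E}\Pr[e\in S]=|E|\cdot s/n\le \delta s$, yielding failure probability at most $\delta s/(t+1)$. Your write-up is in fact more explicit than the paper's on why $X\le t$ guarantees correct output (the $w_L^v$ rescaling and the evaluation at $\infty$), points the paper leaves implicit.
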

\begin{proof}
  Let us analyse Algorithm~\ref{algo:local-decoding}. Concerning the locality, the algorithm indeed makes $\ell = s$ queries to $y$. Besides, it is smooth due to our assumption on $\calR_s$. Now let us focus on the correctness.

  Let $y \in \FF_q^{\PP^m}$ such that $d(y, c) \le \delta n$ for some $c \in \PLift_q(m, k)$. Denoting $E = \{ \mathbf{Q} \in \PP^m, c_{\mathbf{Q}} \ne y_{\mathbf{Q}} \}$, we have $|E| \le \delta n$. By definition of the correcting algorithm of $\PRS_q(k)$, the output value is correct as long as $|E \cap S| \le t$. 
  Let us bound this probability. Using Markov's inequality,
  \[
   \mathrm{Pr}_{L, S} [ |E \cap S| \le t ] = 1 - \mathrm{Pr}_{L, S} [ |E \cap S| \ge t + 1 ] \ge 1 - \EE_{L, S} [ |E \cap S| ]/(t+1) \,.
  \]
  By linearity, we get:
  \[
  \EE_{L, S} [ |E \cap S| ] = \sum_{e \in E} \mathrm{Pr}_{L, S}[ e \in S ] = \sum_{e \in E}\frac{s}{n} = \delta n \frac{s}{n} = \delta s\,.
  \]
  Hence, 
  \[
  \mathrm{Pr}_{L, S} [ \text{ algorithm succeeds } ] \ge 1 - \frac{\delta s}{t+1}\,.
  \]
\end{proof}

We exhibit the two extreme instances which correspond to the well-known correcting algorithms of Reed-Muller codes presented in~\cite{Yekhanin12} for instance. The first one picks the least possible number of symbols, but assumes few errors on the corrupted codeword.

\begin{corollary}[$s=k+1$]
  For every $\delta \le \frac{1}{2(k+1)}$, the code $\PLift_q(m, k)$ is a perfectly smooth $(k+1, \delta, \delta(k+1))$-locally correctable code.
\end{corollary}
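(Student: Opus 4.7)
The plan is to specialize the preceding theorem at $s = k+1$. With this choice, $t = \lfloor (s-k-1)/2 \rfloor = 0$, so the subroutine $\Corr^{\PRS}_{k+1}$ invoked inside Algorithm~\ref{algo:local-decoding} only has to recover a $\PRS_q(k)$ codeword from $s = k+1$ known symbols and $q+1-s = q-k$ erasures, with no errors to correct. This degenerate case is just Lagrange interpolation: it succeeds deterministically whenever no queried symbol is corrupted, because $\PRS_q(k)$ is MDS of dimension $k+1$, so any $k+1$ coordinates form an information set.

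Plugging $s = k+1$ and $t = 0$ into the two bounds from the theorem, the admissible error rate becomes $\delta \le (t+1)/(2s) = 1/(2(k+1))$, and the failure probability becomes $\delta s / (t+1) = \delta(k+1)$, matching the statement. The locality is $\ell = s = k+1$ and perfect smoothness is inherited directly from the theorem, which in turn rests on the uniformity guarantee for the query generator $\calR_s$ given in the previous subsection.

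There is essentially no new obstacle: the corollary is entirely a notational specialization, and the only point worth emphasising in the write-up is that at $t=0$ the inner Reed--Solomon procedure reduces to interpolation, so the event of successful local correction coincides exactly with the event $\{E \cap S = \emptyset\}$ already controlled by Markov's inequality in the proof of the theorem.
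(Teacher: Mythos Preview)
Your proposal is correct and follows exactly the paper's approach: specialize the theorem at $s=k+1$, observe that $t=0$, and read off the resulting bounds. The paper's own proof is just the one-line remark ``$s=k+1$ implies $t=0$''; your additional commentary on the inner procedure reducing to Lagrange interpolation is accurate but not needed for the argument.
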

\begin{proof}
$s=k+1$ implies $t=0$.
\end{proof}

The second one achieves local correction under a constant fraction of errors on the corrupted codeword. 

\begin{corollary}[$s=q$]
  Let $\tau = \frac{k+1}{q}$. For every $\delta \le \frac{1}{4}(1 - \tau)$, the code $\PLift_q(m, k)$ is a perfectly smooth $(q, \delta, \frac{2\delta}{1 - \tau})$-locally correctable code.
\end{corollary}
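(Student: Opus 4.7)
The plan is to specialize the preceding theorem to $s = q$ and simplify the resulting expressions. Writing $\tau = (k+1)/q$, I first rewrite $q - k - 1 = q(1-\tau)$, so the theorem's parameter $t = \lfloor (q-k-1)/2 \rfloor$ becomes $\lfloor q(1-\tau)/2 \rfloor$.

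The only estimate needing care is the sharp inequality $t + 1 > q(1-\tau)/2$, which follows immediately from $\lfloor x \rfloor + 1 > x$. Dividing through by $2q = 2s$ gives $(t+1)/(2s) > (1-\tau)/4$. Hence the hypothesis $\delta \le (1-\tau)/4$ implies $\delta < (t+1)/(2s)$, which is exactly the assumption required to invoke the theorem.

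Applying the theorem with $s = q$ then shows that $\PLift_q(m,k)$ is a perfectly smooth $(q, \delta, \delta q/(t+1))$-locally correctable code. Using the same lower bound on $t+1$ once more yields $\delta q / (t+1) < 2\delta/(1-\tau)$, and since any $(\ell, \delta, \epsilon)$-locally correctable code is trivially $(\ell, \delta, \epsilon')$-locally correctable for every $\epsilon' \ge \epsilon$, the stated error probability $2\delta/(1-\tau)$ follows.

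I do not anticipate any real obstacle here: the corollary is essentially a numerical specialization of the preceding theorem. The only point deserving explicit mention is the \emph{strict} (rather than non-strict) inequality $t+1 > q(1-\tau)/2$, which is precisely what lets the hypothesis $\delta \le (1-\tau)/4$ fit inside the theorem's range $\delta \le (t+1)/(2s)$ without any boundary subtlety.
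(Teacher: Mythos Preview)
Your proof is correct and follows the same approach as the paper, which simply notes that $s=q$ gives $t = \lfloor q(1-\tau)/2 \rfloor$ and hence $t+1 \ge q(1-\tau)/2$. Your emphasis on the \emph{strict} inequality $t+1 > q(1-\tau)/2$ is unnecessary: the non-strict bound already yields $\delta \le (1-\tau)/4 \le (t+1)/(2q)$ and $\delta q/(t+1) \le 2\delta/(1-\tau)$ directly, so there is no boundary subtlety to worry about.
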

\begin{proof}
$s=q$ implies $t = \lfloor \frac{q(1 - \tau)}{2} \rfloor$, hence $t+1 \ge \frac{q(1 - \tau)}{2}$.
\end{proof}

\begin{remark}
  In Algorithm~\ref{algo:local-decoding}, we can avoid to compute the tuple $w^v_L$. Indeed, it can be proved that for every projective line $S \subset \PP^m$ and every point $\mathbf{P} \in S$, there exists an $L \in \Emb_\PP(m)$ such that $L(\PP^1) = S$, $\mathbf{P} \in \{ \ev_{(1:0)}(L), \ev_{\infty}(L) \}$ and $w^v_L = (1, \dots, 1)$.
\end{remark}

\begin{remark}
  Local testability of affine lifted codes was also proved by Guo \emph{et al.}~\cite{GuoKS13}. Once again, their results rely on the work of Kaufman and Sudan~\cite{KaufmanS08} regarding the testability of some families of affine-invariant codes. Though, projective lifted codes cannot be proved locally testable the same manner, since their automorphism group is slightly different. Though, this issue is worth addressing in a future work.
\end{remark}

\section{Puncturing and shortening relations between affine and projective lifted codes}
\label{sec:relations}

In this section we aim at showing links between affine and projective lifted codes through shortening and puncturing operations on codes.

\subsection{Motivation and similar results}

The embedding of both $\PP^{m-1}$ and $\AA^m$ into $\PP^m$ issues the relation between affine and projective Reed-Muller codes. Indeed, the hyperplane at infinity $\Pi_\infty \mydef \{ \mathbf{P} \in \PP^m, P_0 = 0 \}$ defines a restriction map
\[
\begin{array}{rclc}
  \pi : &\FF_q^{\PP^m} &\to      &\FF_q^{\Pi_{\infty}}\\
  ~     &c           &\mapsto  &c_{|\Pi_{\infty}}\,.
\end{array}
\]
Map $\pi$ induces a surjective map $\PRM_q(m,k) \twoheadrightarrow \PRM_q(m-1,k)$ by seeing $\Pi_\infty$ as the projective space $\PP^{m-1}$. Indeed, every $m$-variate homogeneous polynomial of degree $k$ can be also considered as an $(m+1)$-variate homogeneous polynomial of same degree (in which the new variable, denoted $X_0$, does not appear).

Besides, the vector space $K \mydef \ker\big(\PRM_q(m,k) \twoheadrightarrow \PRM_q(m-1,k)\big)$ consists in evaluation vectors of homogeneous polynomials $P \in \FF_q[X_0,\dots,X_m]^H_k$ such that $X_0$ divides $P$. That is,
\[
K = \{ \ev_{\PP^m}(X_0Q), Q \in \FF_q[X_0,\dots,X_m]^H_{k-1} \}\,.
\]

Now, restricting $K$ to coordinates in $(\PP^m \setminus \Pi_{\infty}) \simeq \AA^m$ leads to a vector space isomorphic to $\RM_q(m,k-1)$, since $X_0$ evaluates to $1$ on every affine point of $\PP^m$.

To sum up, we have the following short exact sequence:
\[
0 \to \RM_q(m,k-1) \to \PRM_q(m,k) \xrightarrow{\pi} \PRM_q(m-1,k) \to 0\,.
\]

From a coding theory point of view, it may be more comfortable by viewing this sequence in the terminology of puncturing and shortening. Indeed, up to isomorphism, the surjective map $\pi$ corresponds to the puncturing of $\PRM_q(m,k)$ on coordinates lying in $\AA^m \subset \PP^m$, while the injection $\RM_q(m,k-1) \xhookrightarrow{} \PRM_q(m,k)$ corresponds to its shortening on $\PP^{m-1} \subset \PP^m$.

A very similar exact sequence holds for the codes coming from the \emph{block designs} of incidences between points and hyperplanes. Let us denote by $\calC(D)$ the code whose dual code is generated by the incidence matrix of a block design $D$ (we refer to~\cite{Stinson04, AssmusK92} for details on block designs and their associated codes). Let also $\mathrm{AG}_t(m,q)$ and $\mathrm{PG}_t(m,q)$ be respectively the designs of points and $t$-flats in affine and projective spaces of dimension $m > t$ over $\FF_q$. Then it holds that
\[
0 \to \calC(\mathrm{AG}_1(m, q)) \to \calC(\mathrm{PG}_1(m, q)) \xrightarrow{\pi} \calC(\mathrm{PG}_1(m-1, q)) \to 0\,.
\]
This result is presented by Assmus and Key in~\cite[Theorem 5.7.2]{AssmusK92} for the dual of these codes, but it remains true for the codes we consider, since duality of codes preserves such short sequences.

In this section, our goal is to prove similar results for lifted codes.

\subsection{Shortening and puncturing projective lifted codes}

We recall that $\Pi_\infty$ denotes the hyperplane of $\PP^m$ defined by $X_0 = 0$.

\begin{theorem}
  \label{theo:puncturing-shortening}
  Let $m \ge 1$, $1 \le k \le q-1$, and $v = k + (m-1)(q-1)$. Let also
  \[
  \mathcal{S} \mydef \{ c_{|\AA^m} \mid c \in \PLift_q(m, k) \text{ and } c_{\mathbf{P}} = 0, \forall \mathbf{P} \in \Pi_\infty \}
  \]
  be the shortening of $\PLift_q(m, k)$ at the coordinates indexed by points in $\Pi_\infty$, and 
  \[
  \mathcal{P} \mydef \{ c_{|\Pi_\infty} \mid c \in \PLift_q(m, k) \}
  \]
  be the puncturing of $\PLift_q(m, k)$ at the coordinates indexed by points in $\PP^m \setminus \Pi_\infty$. Then
  \[
  \mathcal{S} = \Lift_q(m, k-1) \quad \text{ and } \quad \mathcal{P} = \PLift_q(m-1, k)\,.
    \]
\end{theorem}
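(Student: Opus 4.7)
The plan is to exploit the monomiality of $\PLift_q(m,k)$ established in Section~\ref{subsec:monomiality} and translate both equalities into statements about degree sets, using Propositions~\ref{prop:d0!=0} and~\ref{prop:d0=0} as the key bridge. I focus on $m \ge 2$; the case $m = 1$ is immediate since $\Pi_\infty$ reduces to a single point and the MDS property of $\PRS_q(k)$ yields both equalities.

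For the puncturing equality $\mathcal{P} = \PLift_q(m-1,k)$, under the natural identification $\Pi_\infty \simeq \PP^{m-1}$ obtained by dropping the leading (zero) coordinate of the standard representative of each point in $\Pi_\infty$, a monomial basis codeword $\ev_{\PP^m}(\mathbf{X}^{(d_0,\mathbf{d}^*)})$ restricts either to $0$ (when $d_0 > 0$, since $X_0$ vanishes on $\Pi_\infty$) or to $\ev_{\PP^{m-1}}(\mathbf{X}^{\mathbf{d}^*})$ (when $d_0 = 0$). Proposition~\ref{prop:d0=0} immediately yields $\mathcal{P} \subseteq \PLift_q(m-1,k)$. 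For the converse, given $\ev_{\PP^{m-1}}(\mathbf{X}^{\mathbf{e}^*})$ with $\mathbf{e}^* \in \PDeg_q(m-1,k)$, let $\iota$ be the smallest index with $e_\iota^* \neq 0$ and set $\mathbf{d}^* := \mathbf{e}^* + (q-1)\mathbf{e}_\iota$, a tuple of weight $v$. Since $X_\iota^{q-1}$ takes value $1$ whenever $X_\iota \ne 0$ and $0$ otherwise, while $\mathbf{X}^{\mathbf{e}^*}$ already vanishes when $X_\iota = 0$, one has $\ev_{\PP^{m-1}}(\mathbf{X}^{\mathbf{d}^*}) = \ev_{\PP^{m-1}}(\mathbf{X}^{\mathbf{e}^*})$. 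Proposition~\ref{prop:d0=0} then produces a codeword $\ev_{\PP^m}(\mathbf{X}^{(0,\mathbf{d}^*)}) \in \PLift_q(m,k)$, whose restriction to $\Pi_\infty$ recovers the target.

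For the shortening, the inclusion $\Lift_q(m,k-1) \subseteq \mathcal{S}$ is monomial-by-monomial: given $\mathbf{d}^* \in \ADeg_q(m,k-1)$, the chain $\Lift_q(m,k-1) \subseteq \RM_q(m,v-1)$ from~\eqref{eq:lift-inclusion} forces $|\mathbf{d}^*| \le v-1$, so $d_0 := v - |\mathbf{d}^*| \ge 1$; Proposition~\ref{prop:d0!=0} gives $\ev_{\PP^m}(X_0^{d_0}\mathbf{X}^{\mathbf{d}^*}) \in \PLift_q(m,k)$, and this codeword both vanishes on $\Pi_\infty$ (as $d_0 \ge 1$) and restricts on $\AA^m$ to $\ev_{\AA^m}(\mathbf{X}^{\mathbf{d}^*})$. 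For the converse $\mathcal{S} \subseteq \Lift_q(m,k-1)$, let $c = \ev_{\PP^m}(f) \in \PLift_q(m,k)$ vanish on $\Pi_\infty$ and put $\tilde{f}(X_1,\ldots,X_m) := f(1,X_1,\ldots,X_m)$. Fix $L^* \in \Emb_\AA(m)$. If $L^*(0,1) = 0$, then $\tilde{f} \circ L^*$ is constant and trivially in $\RS_q(k-1)$. Otherwise $L := (S, L^*) \in \Emb_\PP(m)$ has rank $2$, so $L(0,1) = (0,L^*(0,1)) \in \Pi_\infty$; combining $c_{L(0,1)} = 0$ with the scaling identity $w_L^v \star \ev_{\PP^1}(f \circ L) = c_{|L(\PP^1)}$ from Remark~\ref{rem:subword} and the entry-wise non-vanishing of $w_L^v$, we deduce $(f \circ L)(0,1) = 0$. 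Hence $S$ divides the homogeneous polynomial $f \circ L$; writing $f \circ L = S \cdot Q(S,T)$ and applying the equivalence~\eqref{eq:equiv-RS-PRS} to $\ev_{\PP^1}(f \circ L) \in \PRS_q(k)$ yields $\ev_{\AA^1}(Q(1,T)) \in \RS_q(k-1)$. Since $Q(1,T) = f(1,L^*(1,T)) = (\tilde{f} \circ L^*)(T)$, the argument is complete.

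The main obstacle is the inclusion $\mathcal{S} \subseteq \Lift_q(m,k-1)$: converting the global hyperplane-vanishing condition on $c$ into the local algebraic divisibility $S \mid f \circ L$ requires carefully invoking the scaling tuple $w_L^v$ from Remark~\ref{rem:subword}, which mediates between the \emph{intrinsic} codeword $\ev_{\PP^1}(f \circ L)$ and the actual subword of $c$ read along $L(\PP^1)$. Once this divisibility is established, the already-derived equivalence~\eqref{eq:equiv-RS-PRS} between $\PRS_q(k)$ and $\RS_q(k-1)$ closes the argument.
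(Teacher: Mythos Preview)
Your proof is correct. The treatment of $\mathcal{P} = \PLift_q(m-1,k)$ and of the inclusion $\Lift_q(m,k-1) \subseteq \mathcal{S}$ matches the paper's argument essentially verbatim. The one substantive difference lies in the reverse inclusion $\mathcal{S} \subseteq \Lift_q(m,k-1)$: the paper proceeds structurally, invoking the Combinatorial \emph{Nullstellensatz} to deduce that any $f$ whose evaluation vanishes on $\Pi_\infty$ must have $X_0 \mid f$ (after $P$-reduction), and then appeals to Proposition~\ref{prop:d0!=0} monomial-by-monomial. You instead bypass the Nullstellensatz entirely and verify the defining property of $\Lift_q(m,k-1)$ directly: for each affine embedding $L^*$ you build a projective embedding $L = (S,L^*)$ landing its point at infinity in $\Pi_\infty$, convert the global vanishing hypothesis into the local divisibility $S \mid f \circ L$ via the scaling tuple $w_L^v$, and then use the equivalence~\eqref{eq:equiv-RS-PRS} to drop back to $\RS_q(k-1)$. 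Your route is more elementary and self-contained, staying within the machinery already developed in Section~\ref{subsec:degree-sets}; the paper's route is shorter once the Nullstellensatz is available and yields the stronger intermediate conclusion that $f$ itself is divisible by $X_0$, which may be of independent interest.
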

\begin{proof}  \emph{(i) Proof of $\mathcal{S} = \Lift_q(m, k-1)$.} Let  $c = \ev_{\AA^m}(\mathbf{X^d}) \in \Lift_q(m, k-1)$ and extend it to $c' = \ev_{\PP^m}(X^{d_0}_0\mathbf{X^d}) \in \PLift_q(m, k)$, with $d_0 = v - |\mathbf{d}| > 0$. We notice that $c'$ vanishes on the coordinates corresponding to points in $\Pi_\infty$, and that $c' = c$ elsewhere, hence $c \in \mathcal{S}$.

Conversely, let $c \in \mathcal{S}$. There exists $f \in \FF_q[\mathbf{X}]^H_v$ such that $c' = \ev_{\PP^m}(f)$ satisfies $c' = 0$ over all coordinates of $\Pi_\infty$, and $c' = c$ elsewhere. It means that the polynomial $f$ vanishes on the whole projective hyperplane $\Pi_\infty$ given by $X_0 = 0$. Therefore $f$ vanishes over the hyperplane $\Pi'_\infty$ of the affine space $\AA^{m+1}$ given by $X_0 = 0$.

The previous remark makes sense since we can apply the Combinatorial \emph{Nullstellensatz} proved by Alon in~\cite{Alon99} (see Theorem~\ref{thm:combinatorial-nullstellensatz} in the appendix). This result asserts that, if $W = \prod_{i=0}^m W_i \subseteq \FF_q^{m+1}$ and $\deg f = \sum^m_{i=0} t_i$ with each $t_i < |W_i|$, then $f(W) = \{ 0 \}$ implies $f_{\bf t} = 0$, where $f_{\bf t}$ denotes the coefficient of the monomial $\mathbf{X^t}$ in $f$. In our context, let $W = \{0\} \times \FF_q^m$ and $\mathbf{t}$ satisfy $t_0 = 0$ and $t_i \le q-1$ for all $i>0$. The Combinatorial \emph{Nullstellensatz} then shows that  $\mathrm{Coeff}(f, \mathbf{X^t}) = 0$. Therefore every monomial in $f$ must be divisible by $X_0$. Said differently, $f$ is a sum of monomials $\mathbf{X^d}$ with ${\bf d}$ such that $d_0 \ne 0$, and Proposition~\ref{prop:d0!=0} then shows that $\ev_{\AA^m}(f) \in  \Lift_q(m, k-1)$.

\emph{(ii) Proof of $\mathcal{P} = \PLift_q(m-1, k)$.} First, $\PLift_q(m-1,k) \subseteq \mathcal{P}$, since $c = \ev_{\PP^{m-1}}(\mathbf{X^d}) \in {\PLift_q(m-1, k)}$ can be extended to $c' = \ev_{\PP^m}(\mathbf{X^{d'}}) \in \PLift_q(m, k)$, where we define $\mathbf{d'}$ by adding $q-1$ to the leftmost non-zero coordinate of $\mathbf{d}$.

Conversely, let $c' = \ev_{\PP^m}(f) \in \PLift_q(m, k)$ such that $c'_{|\Pi_\infty} \in \calP \setminus \{0\}$. Let $\mathbf{X^d}$ be a monomial in $f$. If $d_0 \ne 0$, then $\ev_{\PP^m}(\mathbf{X^d})_{|\Pi_\infty} = 0$, hence one can assume that every monomial ${\bf X^d}$ composing $f$ satisfies $d_0 = 0$. Using Proposition~\ref{prop:d0=0} and by linearity, it means that $c'_{|\Pi_\infty} \in \PLift_q(m-1, k)$.
\end{proof}

\begin{remark}
  For $m=1$, we know that by definition, $\Lift_q(1, k) = \RS_q(k)$ and $\PLift_q(1, k) = \PRS_q(k)$. Therefore, Theorem~\ref{theo:puncturing-shortening} rewrites the well-known result stating that the shortening at the infinity of the projective Reed-Solomon code is a (classical) Reed-Solomon code.
\end{remark}

Theorem~\ref{theo:puncturing-shortening} also translates in terms of exact sequences:

\begin{corollary}
  The following exact sequence holds for every $1 \le k \le q-1$ and $m \ge 1$:
\[
0 \to \Lift_q(m, k-1) \to \PLift_q(m, k) \xrightarrow{\pi} \PLift_q(m-1,k) \to 0\,,
\]
where $\pi$ is the restriction map to points at infinity.
\end{corollary}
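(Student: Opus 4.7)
The plan is to deduce this exact sequence as a direct reformulation of Theorem~\ref{theo:puncturing-shortening}. The three exactness conditions to verify are injectivity of the map $\Lift_q(m, k-1) \to \PLift_q(m, k)$, surjectivity of $\pi$, and equality between $\ker(\pi)$ and the image of the first map.

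First I would make explicit the injection $\iota : \Lift_q(m, k-1) \to \PLift_q(m, k)$. Given $c \in \Lift_q(m, k-1)$, the first half of Theorem~\ref{theo:puncturing-shortening} (the identity $\calS = \Lift_q(m, k-1)$) guarantees the existence of a codeword $c' \in \PLift_q(m, k)$ such that $c'_{|\AA^m} = c$ and $c'_{|\Pi_\infty} = 0$. Uniqueness of such a $c'$ is immediate since two extensions would agree on all coordinates of $\PP^m$. Hence $\iota(c) \mydef c'$ is a well-defined $\FF_q$-linear map, and it is injective because $\iota(c) = 0$ forces $c = 0_{|\AA^m} = 0$.

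Next, surjectivity of $\pi$ is precisely the identity $\calP = \PLift_q(m-1, k)$ from the second half of Theorem~\ref{theo:puncturing-shortening}, after identifying $\Pi_\infty$ with $\PP^{m-1}$ via the map $(0 : a_1 : \dots : a_m) \leftrightarrow (a_1 : \dots : a_m)$. For exactness in the middle, a codeword $c' \in \PLift_q(m, k)$ lies in $\ker(\pi)$ if and only if $c'_{|\Pi_\infty} = 0$, which by definition of $\iota$ is equivalent to $c' = \iota(c'_{|\AA^m})$; thus $\ker(\pi) = \Im(\iota)$.

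Since the entire content of the claim reduces to rewriting the shortening and puncturing identities of Theorem~\ref{theo:puncturing-shortening} in the language of short exact sequences, there is no real obstacle: the only point requiring any attention is ensuring that $\iota$ is well-defined (which follows from the $P$-reduction uniqueness used implicitly in the theorem) and that the identification $\Pi_\infty \simeq \PP^{m-1}$ commutes with the evaluation maps, so that the codes on both sides genuinely match as codes with fixed support.
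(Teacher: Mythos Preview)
Your proposal is correct and is precisely the intended argument: the paper presents this corollary without proof because it is a direct reformulation of Theorem~\ref{theo:puncturing-shortening}, and you have spelled out exactly how the shortening identity $\mathcal{S}=\Lift_q(m,k-1)$ gives the injective map with image $\ker\pi$, while the puncturing identity $\mathcal{P}=\PLift_q(m-1,k)$ gives surjectivity of $\pi$. One minor comment: the well-definedness of $\iota$ does not really require any appeal to $P$-reduction uniqueness --- as you yourself note, two extensions of $c$ vanishing on $\Pi_\infty$ agree on every coordinate of $\PP^m$, which is already enough.
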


\section{On the practicality of projective lifted codes}
\label{sec:miscellaneous}

We here present miscellaneous results emphasizing the practicality of projective lifted codes. At first, we present tables and figures demonstrating the gain in terms of information rate, compared to projective Reed-Muller codes. In Subsection~\ref{subsec:automorphisms}, we prove that the storage cost of projective lifted codes can be reduced since they admit (quasi-)cyclic automorphisms. Explicit information sets are then computed in Subsection~\ref{subsec:information-sets}. We conclude this section by estimating the minimum distance (Subsection~\ref{subsec:distance}) and connecting our construction to a well-known family of design-based codes (Subsection~\ref{subsec:connection-design}).

\subsection{Information rate}

In this section, we emphasize how projective lifted codes surpasses projective Reed-Muller codes in terms of code rate (the local correcting capability being fixed). In Figure~\ref{fig:rates}, we present the rate of $\PRM_q(m, k)$ and $\PLift_q(m, k)$ for increasing values of $q = 2^e$. These codes are comparable since they have same length $n = \frac{q^{m+1}-1}{q-1}$, and same local correction features (locality and error tolerance). In each subfigure of Figure~\ref{fig:rates}, four curves are plotted: blue ones represent projective lifted codes and red ones projective Reed-Muller codes. Plain curves correspond to the minimum error tolerance setting, for which local correction admits no error on the line being picked (see Section~\ref{sec:local-correction}). To compare, dotted curves correspond to a constant fraction of errors tolerated by the local correcting algorithm. Here, the constant has been arbitrarily fixed to $1/32$.

\begin{figure}[h!]
  \centering  
  \includegraphics[scale=0.39]{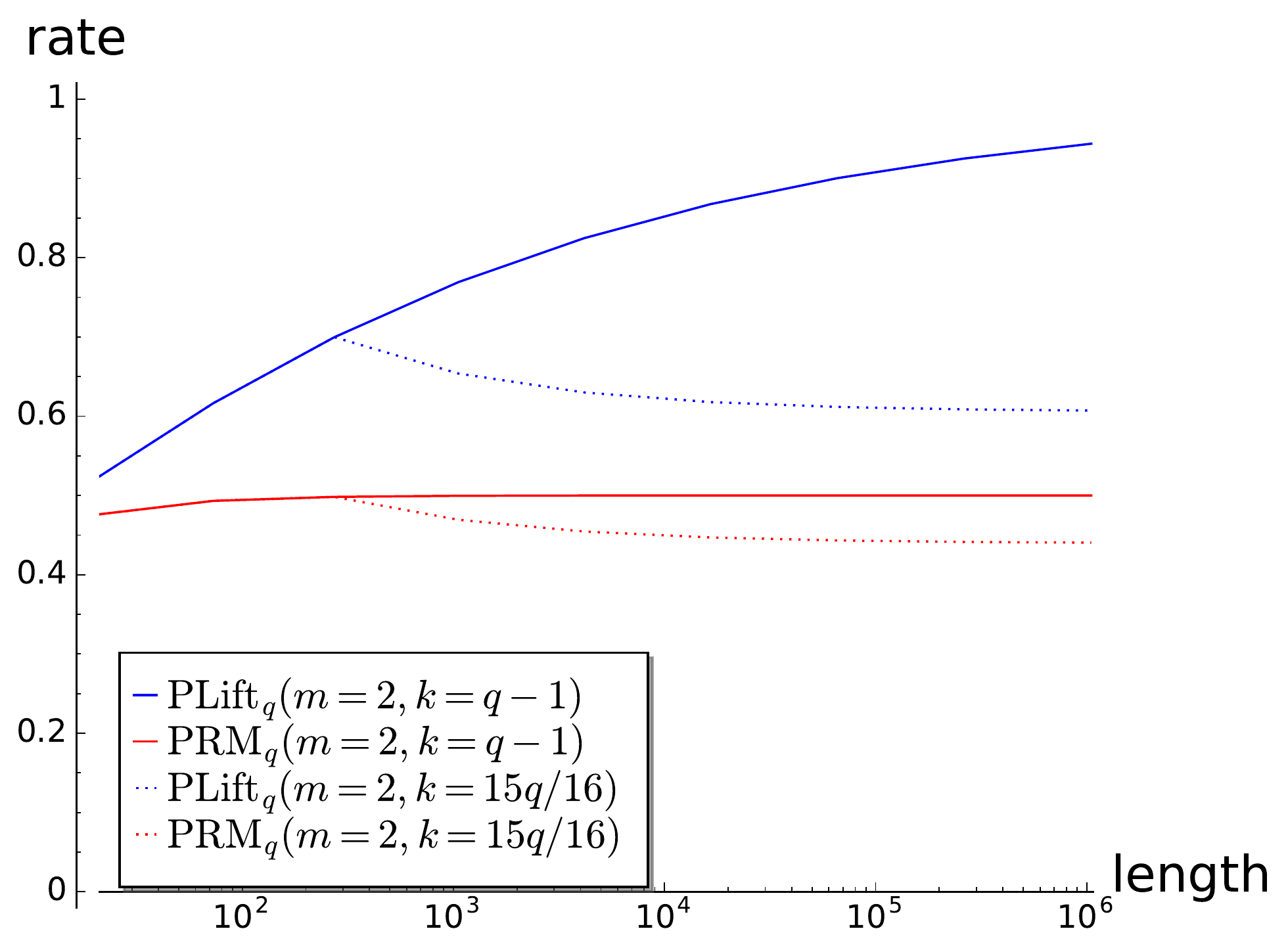}
  \includegraphics[scale=0.39]{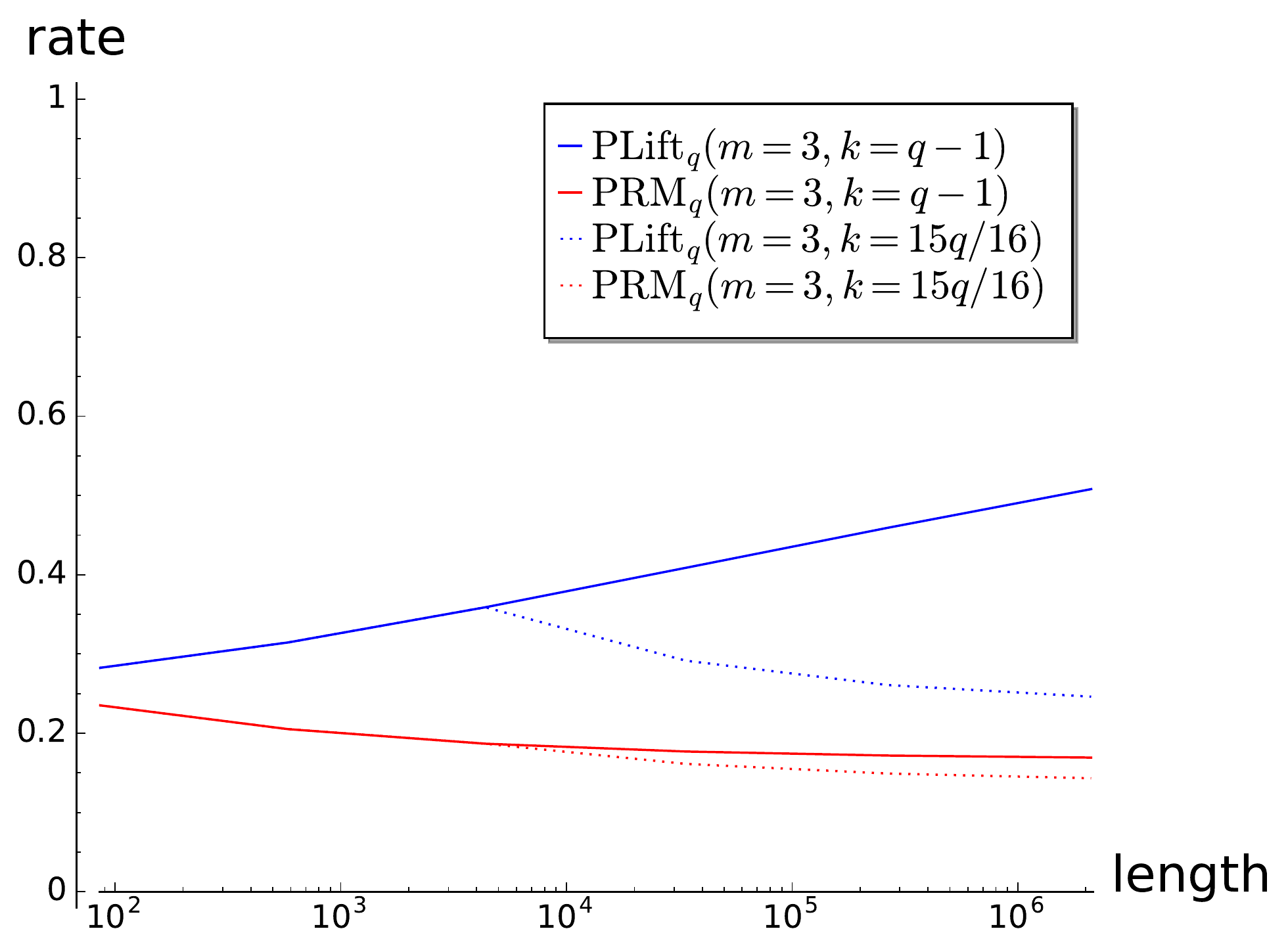}
  \includegraphics[scale=0.39]{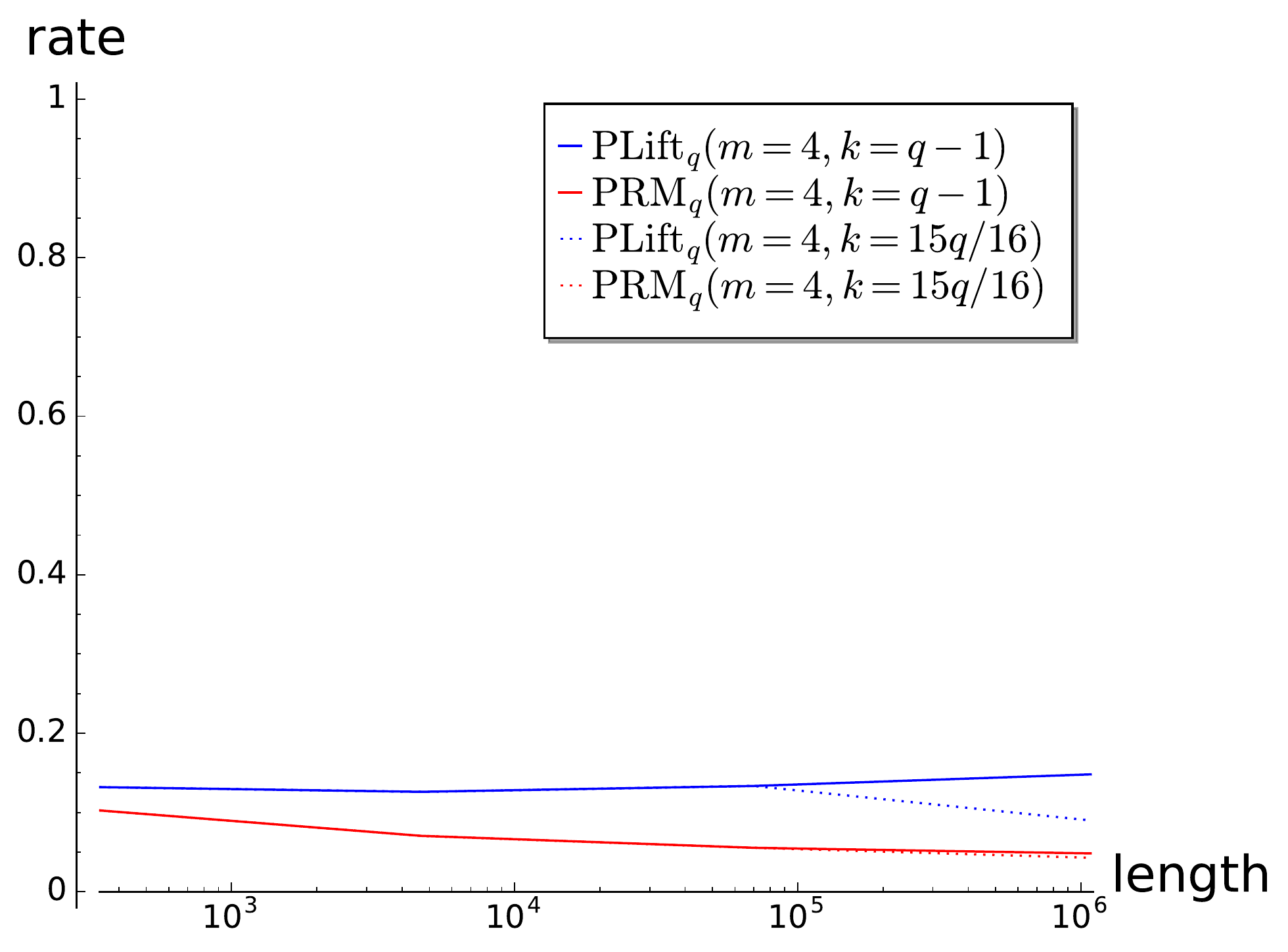}
  \caption{\label{fig:rates}Rate of projective Reed-Muller codes (red) and projective lifted codes (blue).}
\end{figure}

\subsection{Automorphisms and (quasi-)cyclicity}
\label{subsec:automorphisms}

In this section, we address the question of the (quasi-)cyclicity of projective lifted codes. More precisely, we prove in Proposition~\ref{prop:quasi-cyclic} that, under arithmetic constraints between $q$ and $m$, the code $\PLift_q(m, k)$ is a quasi-cyclic code up to diagonal isomorphims. This result relies deeply on the fact that $\PLift_q(m, k)$ is invariant under the action of $\Proj(\FF_q, m)$,  that has been proved in Lemma~\ref{lem:glm-stab-plift}.

In coding theory, automorphism groups of codes, and \emph{a fortiori} their permutation groups, are interesting for many reasons. For instance, they can be used for reducing the practical storage cost of the codes (through the storage of their generator or parity-check matrix). Cyclic or quasi-cyclic codes are known to be specifically efficient in that sense.

\begin{definition}[Cyclicity, quasi-cyclicity]
  A code $\calC \subseteq \FF_q^X$, $|X| = n$, is said \emph{cyclic} if $\mathrm{Perm}(\calC)$ contains a cyclic permutation of order $n$ (that is, an $n$-cycle). It is said \emph{quasi-cyclic} of index $c$ if $\mathrm{Perm}(\calC)$ contains a permutation which is the product of $c$ different $(n/c)$-cycles with disjoint orbits. In particular, a cyclic code is a quasi-cyclic code of index $1$.
\end{definition}

In all what follows, we fix a finite field $\FF_q$ and an integer $m \ge 1$, and we define $n = |\PP^m|$ and $d = \operatorname{gcd}(n, q-1)$.

\begin{definition}[representation of $\PP^m$]
  A tuple $\mathbf{u} = (\mathbf{u}_1, \dots, \mathbf{u}_n) \in (\FF_q^{m+1})^n$ represents $\PP^m$ if $\{ \mathbf{u}_1, \dots, \mathbf{u}_n \} = \PP^m$, when the $\mathbf{u}_i$ are taken up to projective equivalence.
\end{definition}

Let now $\phi : \FF_{q^{m+1}} \to \FF_q^{m+1}$ be an isomorphism of $\FF_q$-vector spaces, and $\omega$ be a primitive element of $\FF_{q^{m+1}}$. We define $\beta \mydef \omega^{q-1}$. It is clear that $\beta$ has order $n$ in the multiplicative group $\FF_{q^{m+1}}^{\times}$ since $q-1)n = q^{m+1}-1$. For every $0 \le i < d$, we define:
\[
U_i = \left( \phi\big(\omega^i \beta^d\big), \dots, \phi\big(\omega^i (\beta^d)^{n/d}\big) \right) \in (\FF_q^{m+1})^{n/d}\,.
\]
We also define its concatenation $U = U_0 \mid \hdots \mid U_{d-1} \in (\FF_q^{m+1})^n$.

\begin{lemma}
  If $n/d$ and $q-1$ are coprime, then $U$ represents $\PP^m$.
\end{lemma}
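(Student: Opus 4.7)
The plan is to identify $\PP^m$ with the quotient $\FF_{q^{m+1}}^\times / \FF_q^\times$ via the $\FF_q$-linear isomorphism $\phi$, and then reduce the claim to a purely number-theoretic statement about residues modulo $n$. Concretely, $\phi$ induces a bijection between the set of cosets $\FF_{q^{m+1}}^\times / \FF_q^\times$ (of cardinality $(q^{m+1}-1)/(q-1)=n$) and the $n$ projective points of $\PP^m$, since $\phi(a) \sim \phi(b)$ in $\PP^m$ iff $a/b \in \FF_q^\times$. Since $U$ has exactly $n$ entries, proving that it represents $\PP^m$ amounts to proving that the $n$ elements of $\FF_{q^{m+1}}^\times$ appearing as arguments of $\phi$ in $U$ all lie in distinct cosets modulo $\FF_q^\times$.

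Next I would translate this into exponents of the primitive element $\omega$. The subgroup $\FF_q^\times \subset \FF_{q^{m+1}}^\times$ is generated by $\omega^n$ (since $\omega^n$ has order $q-1$), so $\omega^a$ and $\omega^b$ lie in the same $\FF_q^\times$-coset iff $a \equiv b \pmod n$. The element whose image by $\phi$ is the $j$-th entry of $U_i$ is $\omega^i\beta^{jd} = \omega^{i+(q-1)jd}$. Therefore the problem reduces to showing that the $n$ integers
\[
\{\,i+(q-1)jd \bmod n \;:\; 0\le i<d,\ 1\le j\le n/d\,\}
\]
form a complete system of residues modulo $n$.

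I would handle this via the subgroup structure of $\mathbb{Z}/n\mathbb{Z}$. For each fixed $i\in\{0,\dots,d-1\}$, the set $\{i+(q-1)jd \bmod n\}_{1\le j\le n/d}$ lies in the coset $i + d\mathbb{Z}/n\mathbb{Z}$. These $d$ cosets are pairwise disjoint and cover $\mathbb{Z}/n\mathbb{Z}$ since $d\mid n$, so it suffices to show that for each fixed $i$ the $n/d$ values above are pairwise distinct, equivalently that $(q-1)d$ generates the cyclic subgroup $d\mathbb{Z}/n\mathbb{Z}$ (of order $n/d$). Via the canonical isomorphism $d\mathbb{Z}/n\mathbb{Z} \simeq \mathbb{Z}/(n/d)\mathbb{Z}$ obtained by dividing by $d$, the element $(q-1)d$ corresponds to $(q-1) \bmod (n/d)$, which generates the target cyclic group precisely when $\gcd(q-1,n/d)=1$. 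This is the hypothesis, so we conclude.

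The main (mild) obstacle is keeping track of the two compatible divisibilities, namely $d\mid q-1$ and $d\mid n$, and confirming that the map $d\mathbb{Z}/n\mathbb{Z}\to\mathbb{Z}/(n/d)\mathbb{Z}$ sending $(q-1)d\mapsto (q-1)$ is well-defined and respects order; once this is laid out, the coprimality assumption gives the generation property immediately, and pulling the conclusion back through $\phi$ yields that $U$ enumerates each projective point exactly once.
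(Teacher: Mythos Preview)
Your proof is correct and follows essentially the same route as the paper's: both reduce the question (via $\phi$ and the identification $\FF_q^\times=\langle\omega^n\rangle$) to showing that the exponents $i+(q-1)dj$ are pairwise distinct modulo $n$, then use $d\mid n$ to separate the $i$-index and the coprimality hypothesis $\gcd(n/d,q-1)=1$ to separate the $j$-index. The only difference is presentational: the paper argues directly (``assume two exponents agree mod $n$, deduce $i_1=i_2$ then $j_1=j_2$''), whereas you phrase the same divisibility facts in terms of the coset decomposition $\ZZ/n\ZZ=\bigsqcup_{i=0}^{d-1}(i+d\ZZ/n\ZZ)$ and the generation of $d\ZZ/n\ZZ$ by $(q-1)d$.
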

\begin{proof}
  We need to prove that all $\phi(\omega^i \beta^{dj})$ define distinct projective points for $0 \le i < d$ and $1 \le j \le n/d$. Since $\phi$ is bijective, it reduces to prove that, for $0 \le i_1, i_2 < d$ and $1 \le j_1,j_2 \le n/d$, if $\omega^{i_1 - i_2} \beta^{d(j_1 - j_2)} \in \FF_q$, then $(i_1,j_1) = (i_2,j_2)$.

Assume $(\omega^{i_1 - i_2} \beta^{d(j_1 - j_2)})^{q-1} = 1$. Then $\operatorname{ord}(\omega) = (q-1)n$ divides $(q-1) \times ((i_1 - i_2) + d(q-1)(j_1-j_2))$, that is, $n \mid (i_1 - i_2) + d(q-1)(j_1-j_2)$.

Since $d \mid n$, we get $d \mid (i_1 - i_2)$ which implies $i_1 = i_2$ because $0 \le i_1, i_2 < d$. Hence $n \mid d(q-1)(j_1-j_2)$, and our assumption $\operatorname{gcd}(n/d, q-1) = 1$ ensures that $(n/d) \mid j_1 - j_2$. Since $1 \le j_1, j_2 < n/d$, we finally obtain $j_1 = j_2$.
\end{proof}

Of course, every $\mathbf{u} = \phi(\omega^i \beta^{dj}) \in U_i$ is not necessarily represented in a standard form. Denote by $\mathbf{P_u} \in \FF_q^{m+1}$ its standard form. We have $\mathbf{u} = w_{\mathbf{u}} \mathbf{P_u}$ and we can define $w = (w_{\mathbf{u}})_{\mathbf{u} \in U} \in \FF_q^n$. Up to a reordering, if $n/d$ and $q-1$ are coprime, then we have
\[
U = w \star \calP\,
\]
where $\calP \in (\FF_q^{m+1})^n$ denotes the standard evaluation points of $\PP^m$. Similarly to the definition of $\ev_{\PP^m}$ given in the introduction, we can define a map $\ev_U : f \mapsto (f(\mathbf{u}) : \mathbf{u} \in U)$.

\begin{lemma}
  \label{lem:starC}
  Assume $n/d$ and $q-1$ are coprime. Let $\calC = \PLift_q(m,k)$, $v = k + (m-1)(q-1)$, and denote by $D = \Deg(\calC)$. Let also $\calC' = \ev_U(\Poly(D))$. Then, up to a permutation of coordinates,
  \[
  \calC' = w^v \star \calC,\,
  \]
  where $w^v$ denotes the $v$-fold $\star$-product of $w$ by itself. 
\end{lemma}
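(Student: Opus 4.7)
The plan is to unwind the definitions and exploit the homogeneity of the polynomials involved. Since $\calC$ is monomial (by Corollary following Lemma~\ref{lem:glm-stab-plift}) with degree set $D \subseteq S^{m+1}(v)$, we have $\calC = \ev_{\PP^m}(\Poly(D))$ and every $f \in \Poly(D)$ is homogeneous of degree $v$. So the whole content of the statement is that evaluating a degree-$v$ homogeneous polynomial at the non-standard representatives $\mathbf{u} \in U$ only differs from evaluating at the standard representatives $\mathbf{P_u}$ by a coordinate-wise multiplication by $w^v$.

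First, using the previous lemma (which requires the coprimality hypothesis), $U$ represents $\PP^m$, so the map $\mathbf{u} \mapsto \mathbf{P_u}$ is a bijection $U \to \PP^m$; call $\sigma$ the induced permutation of coordinates. The permutation $\sigma$ is the one meant in the phrase ``up to a permutation of coordinates''. Second, for each $\mathbf{u} \in U$, by construction $\mathbf{u} = w_{\mathbf{u}} \mathbf{P_u}$ with $w_{\mathbf{u}} \in \FF_q^\times$, and for every $f \in \FF_q[\mathbf{X}]^H_v$ homogeneity gives
\[
f(\mathbf{u}) = f(w_{\mathbf{u}} \mathbf{P_u}) = w_{\mathbf{u}}^v \, f(\mathbf{P_u}) = w_{\mathbf{u}}^v \, \ev_{\mathbf{P_u}}(f)\,.
\]
Reading this identity componentwise along $U$ yields $\ev_U(f) = w^v \star \sigma(\ev_{\PP^m}(f))$ for every $f$ of degree $v$.

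Finally, applying this to every $f \in \Poly(D) \subseteq \FF_q[\mathbf{X}]^H_v$ and taking images gives
\[
\calC' = \ev_U(\Poly(D)) = w^v \star \sigma\big(\ev_{\PP^m}(\Poly(D))\big) = w^v \star \sigma(\calC)\,,
\]
which is exactly $w^v \star \calC$ up to the permutation $\sigma$. There is no real obstacle here: the only subtlety is book-keeping, namely that the ordering on $U$ inherited from the construction of the $U_i$ need not match the fixed ordering $\mathcal{P}$ of standard points, and this discrepancy is precisely absorbed by $\sigma$. The coprimality hypothesis is used solely to guarantee, via the preceding lemma, that $\sigma$ is well defined as a bijection.
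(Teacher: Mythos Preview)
Your proof is correct and follows essentially the same approach as the paper's: the paper's argument is the single observation that $(\ev_U(f))_{\mathbf{u}} = w_{\mathbf{u}}^v\,\ev_{\mathbf{P_u}}(f)$ by homogeneity, and you have simply made the surrounding bookkeeping (the permutation $\sigma$, the role of the coprimality hypothesis via the preceding lemma, and the use of monomiality to ensure $\Poly(D)\subseteq\FF_q[\mathbf{X}]^H_v$) explicit.
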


\begin{proof}
  If $\mathbf{P} \in \PP^m$ is represented by $\mathbf{u}$ in $U$, then by definition $(\ev_U(f))_{\mathbf{u}} = w_{\bf u}^v \ev_{\mathbf{P}}(f)$. Since $\calC' = \ev_U(\Poly(\Deg(\calC)))$, we get our result.
\end{proof}

Let us now introduce $\sigma : \FF_{q^{m+1}} \to \FF_{q^{m+1}}$ given by $x \mapsto \beta x$. We also denote by ${\psi \mydef \phi \circ \sigma \circ \phi^{-1}}$ the associated map over the vector space $\FF_q^{m+1}$. It is clear that $\psi \in \mathrm{Hom}(\FF_q^{m+1}, \FF_q^{m+1})$, and since $\sigma$ and $\phi$ are bijective, $\psi \in \mathrm{GL}_{m+1}(\FF_q)$. We finally denote by $\psi^i$ the $i$-fold composition of $\psi$. We then have $\psi^i(\mathbf{P}) = \phi(\omega^{i(q-1)}\phi^{-1}(\mathbf{P}))$ for any point $\mathbf{P} \in \PP^m$.

\begin{proposition}
  \label{prop:quasi-cyclic}
  If $n/d$ and $(q-1)$ are coprime, then $\calC' = w^v \star \PLift_q(m, k)$ is quasi-cyclic of index $d$, through the permutation $\psi^d \in \frak{S}(U)$. The orbits of $\psi^d$ are given by the subsets $U_i$.
\end{proposition}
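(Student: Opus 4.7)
The plan is to split the statement into two independent verifications: (a) that $\psi^d$, viewed as a permutation of the indexing set $U$, is a product of $d$ disjoint cycles of length $n/d$ whose orbits are exactly the $U_i$; and (b) that this permutation stabilises the code $\calC'$. The coprimality hypothesis has already been used in the previous lemma to ensure that $U$ represents $\PP^m$, so that $U$ is a legitimate index set for $\calC'$ and $\psi^d \in \mathfrak{S}(U)$ really makes sense.

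For (a), I would compute the action directly. By definition $\psi = \phi \circ \sigma \circ \phi^{-1}$ where $\sigma(x) = \beta x$, so $\psi^d = \phi \circ \sigma^d \circ \phi^{-1}$ and for every $\mathbf{u} = \phi(\omega^{i}\beta^{dj}) \in U_i$ we obtain
\[
\psi^d(\mathbf{u}) = \phi\bigl(\beta^d \cdot \omega^{i}\beta^{dj}\bigr) = \phi\bigl(\omega^{i}\beta^{d(j+1)}\bigr) \in U_i\,.
\]
Since $\beta$ has order $n$ in $\FF_{q^{m+1}}^{\times}$, the element $\beta^d$ has order exactly $n/d$, so the iterates $(\beta^d)^j$ for $j = 1,\dots,n/d$ are pairwise distinct; hence $\psi^d$ acts on the ordered tuple $U_i$ as a single cyclic shift of length $n/d$. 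As the sets $U_0,\dots,U_{d-1}$ are pairwise disjoint (again by the lemma, using the coprimality assumption), we conclude that $\psi^d$ decomposes into $d$ disjoint $(n/d)$-cycles whose orbits are the $U_i$.

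For (b), let $c = \ev_U(f) \in \calC'$ with $f \in \Poly(D)$, where $D = \Deg(\PLift_q(m,k))$. Under the convention $\sigma(c) = (c_{\sigma^{-1}(\mathbf{u})})_{\mathbf{u} \in U}$ from Subsection~\ref{subsec:automorphims}, the permuted codeword satisfies
\[
\bigl(\psi^d(c)\bigr)_{\mathbf{u}} = f\bigl(\psi^{-d}(\mathbf{u})\bigr) = (f \circ \psi^{-d})(\mathbf{u})\,,
\]
that is, $\psi^d(c) = \ev_U(f \circ \psi^{-d})$. Now $\psi^{-d} \in \GL_{m+1}(\FF_q)$, so Lemma~\ref{lem:glm-stab-plift} gives $\ev_{\PP^m}(f \circ \psi^{-d}) \in \PLift_q(m,k) = \calC$. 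Since $f \circ \psi^{-d}$ is still homogeneous of degree $v$, the defining identity $\ev_U(h) = w^v \star \ev_{\PP^m}(h)$ used in Lemma~\ref{lem:starC} yields $\ev_U(f \circ \psi^{-d}) \in w^v \star \calC = \calC'$, hence $\psi^d \in \Perm(\calC')$.

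The genuinely delicate point is neither of these two steps individually but the matching of conventions between them: one must be careful that the action of $\psi^d$ on the multiset $U$ (i.e.\ on the chosen non-standard representatives) coincides with the action induced by the corresponding element of $\Proj(\FF_q,m)$ used in Lemma~\ref{lem:glm-stab-plift}, and that the twist $w^v$ introduced when passing from standard to non-standard representatives is exactly reabsorbed when we pass back via Lemma~\ref{lem:starC}. Once the dictionary $\ev_U(h) = w^v \star \ev_{\PP^m}(h)$ is used consistently, the twist cancels and the quasi-cyclicity falls out with index $d$ and cycle length $n/d$ as claimed.
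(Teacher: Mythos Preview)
Your proof is correct and follows essentially the same approach as the paper: compute the action of $\psi^d$ on each $U_i$ directly to get the cycle structure, and invoke the $\GL_{m+1}(\FF_q)$-invariance of the lifted code (Lemma~\ref{lem:glm-stab-plift}) together with the identity $\ev_U = w^v \star \ev_{\PP^m}$ from Lemma~\ref{lem:starC} to conclude $\psi^d \in \Perm(\calC')$. Your argument for the cycle length is in fact slightly cleaner than the paper's, which re-invokes the coprimality hypothesis at that step even though $d \mid n$ already forces $\operatorname{ord}(\beta^d) = n/d$ directly.
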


\begin{proof}
  We can check that $\psi^d(U) = U$, hence $\psi^d \in \frak{S}(U)$. Since $\psi^d \in \GL_{m+1}(\FF_q)$, the polynomial space $\Poly(D)$ is invariant under $\psi^d$, where $D = \Deg(\calC)$. Besides, $\calC' = \ev_U(\Poly(D))$ thanks to Lemma~\ref{lem:starC}. Therefore $\psi^d \in \mathrm{Perm}(\calC')$.

Let us now prove that $\psi^d$ is an $(n/d)$-cycle. For $\phi(\omega^i \beta^{jd}) \in U_i$, we have 
\[
\psi^d(\phi(\omega^i \beta^{jd})) = \phi(\omega^i \beta^{jd} \beta^d) = \phi(\omega^i \beta^{(j+1)d}) \in U_i\,.
\]
It remains to show that the order of $\psi^d$ is $n/d$. Since $\phi$ is bijective and $U$ represents $\PP^m$, for every $0 \le s \le t < n/d$ we have:
\[
\forall \mathbf{u} \in U_i, (\psi^d)^s(\mathbf{u}) = (\psi^d)^t(\mathbf{u}) \iff \omega^{(t-s)d(q-1)} = 1 \iff n \mid (t-s)d(q-1)\,. 
\]
Our assumption on $n/d$ and $(q-1)$ implies that $t=s$; hence $\psi^d$ has order $n/d$.
\end{proof}

As an easy corollary, when $d = 1$ we obtain 
\begin{corollary}
  If $n$ and $q-1$ are coprime, then  for all $1 \le k \le q-1$ the code $w^v \star \PLift_q(m,k)$ is cyclic.
\end{corollary}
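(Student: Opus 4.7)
The plan is simply to invoke Proposition~\ref{prop:quasi-cyclic} with the parameter $d = \gcd(n, q-1)$ specialised to $d=1$. First I would observe that the hypothesis "$n$ and $q-1$ are coprime" means exactly $d = \gcd(n, q-1) = 1$, so that in particular $n/d = n$ and the arithmetic condition "$n/d$ and $q-1$ are coprime" required by the proposition is automatically satisfied.

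Second, I would apply Proposition~\ref{prop:quasi-cyclic} directly: its conclusion is that $w^v \star \PLift_q(m,k)$ is quasi-cyclic of index $d$, with orbits $U_0, \dots, U_{d-1}$. In the case $d=1$, there is a single orbit $U_0$ of length $n/d = n$, so the permutation $\psi^d = \psi$ exhibited in the proposition is an $n$-cycle in $\Perm(w^v \star \PLift_q(m,k))$. By the definition of cyclicity (a quasi-cyclic code of index $1$), this is exactly the statement that $w^v \star \PLift_q(m,k)$ is cyclic.

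There is no real obstacle here: the corollary is a straightforward specialisation of the proposition. The only thing to double-check is that the construction of $w$ and of the embedding $\phi$ and primitive element $\omega$ used to build $U$ still makes sense when $d=1$ (and that one indeed obtains a valid ordering of $\PP^m$ as a single orbit under $\psi$), which is immediate from the corresponding constructions given before Proposition~\ref{prop:quasi-cyclic}.
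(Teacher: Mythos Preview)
Your proposal is correct and follows exactly the paper's approach: the paper simply states ``As an easy corollary, when $d = 1$ we obtain'' this result, which is precisely the specialisation of Proposition~\ref{prop:quasi-cyclic} you describe.
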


\begin{remark}
  A very similar approach was used by Berger and de Maximy in~\cite{BergerM01}, in order to prove the quasi-cyclicity of codes isomorphic to our definition of projective Reed-Muller codes.
\end{remark}

\subsection{Explicit information sets}
\label{subsec:information-sets}

In this section, we aim at giving explicit information sets for projective lifted codes. Such sets are useful in order to extend the local correctability of lifted codes to a local \emph{decodability} property (see~\cite{Yekhanin12}).

Our techniques are highly inspired by the work of Guo and Kopparty~\cite[Appendix A]{GuoK16}. We also prove a quite stronger result, being that a quite large family of affine evaluation codes presents the same information sets as affine lifted codes.

\paragraph{Monomiality of bounded degree affine evaluation codes.} Similarly to the previous section, let $\phi : \FF_{q^m} \to \FF_q^m$ be an $\FF_q$-isomorphism. We denote by $\FF_q[\mathbf{X}]^{\infty}_{q-1} \mydef \mathrm{Poly}(B^m_\infty(q-1))$ the space of $m$-variate polynomials of partial degree bounded by $q-1$. If $f \in \FF_q[\mathbf{X}]^{\infty}_{q-1}$ is seen as a function, then the map $f \circ \phi : \FF_{q^m} \to \FF_{q^m}$ can be interpolated uniquely as a univariate polynomial in $\FF_{q^m}[X]_{q^m-1}$. We denote by $\phi^*$ this process, which also appears to be an $\FF_q$-isomorphism:
\[
\begin{array}{rclc}
  \phi^* : &\FF_q[\mathbf{X}]^{\infty}_{q-1} &\to      &\FF_{q^m}[X]_{q^m-1}\\
           & f(\mathbf{X})                &\mapsto & (f \circ \phi)(X)
\end{array}
\]

We know that $\Omega \in \GL_m(\FF_q)$ acts on $m$-variate polynomials by $(\Omega, f(\mathbf{X})) \mapsto f(\Omega(\mathbf{X}))$. For some subspace $\calF$ of polynomials, we say that $\Omega \in \mathrm{Aut}(\calF)$ if $ \{ f \circ \Omega, f \in \calF \} \subseteq \calF$.

For a nonzero $a \in \FF_{q^m}^{\times}$, we denote by $\mu_a : \FF_{q^m} \to \FF_{q^m}$, $x \mapsto ax$. It is well-known that $\GL_1(\FF_{q^m}) = \{ \mu_a, a \in \FF_{q^m}^{\times} \}$. Every map $\mu_a$ being $\FF_q$-linear, we have $M_a \mydef \phi \circ \mu_a \circ \phi^{-1} \in \GL_m(\FF_q)$. Map $M_a$ is known as the \emph{$\FF_q$-homomorphism of the multiplication by $a \in \FF_{q^m}$}.

\begin{lemma}
  \label{lem:autom-phi*}
  Let $\calF$ be a subspace of $\FF_q[\mathbf{X}]_{q-1}^{\infty}$. If $\mathrm{GL}_m(\FF_q) \subseteq \mathrm{Aut}(\mathcal{F})$, then $\mathrm{GL}_1(\FF_{q^m}) \subseteq \mathrm{Aut}(\phi^*(\mathcal{F}))$.
\end{lemma}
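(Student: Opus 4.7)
The plan is straightforward: unwind the definitions and use the conjugation identity $\mu_a = \phi^{-1} \circ M_a \circ \phi$, which is just the definition of $M_a$ rewritten.

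First, I would fix $a \in \FF_{q^m}^\times$ and an arbitrary element $g \in \phi^*(\mathcal{F})$, and show that $g \circ \mu_a \in \phi^*(\mathcal{F})$. Since $\phi^*$ is surjective onto $\phi^*(\mathcal{F})$, write $g = \phi^*(f) = f \circ \phi$ for some $f \in \mathcal{F}$. By the very definition of $M_a$, we have $\phi \circ \mu_a = M_a \circ \phi$, so
\[
g \circ \mu_a \;=\; (f \circ \phi) \circ \mu_a \;=\; f \circ (M_a \circ \phi) \;=\; (f \circ M_a) \circ \phi \;=\; \phi^*(f \circ M_a).
\]
Because $M_a \in \GL_m(\FF_q)$ and $\GL_m(\FF_q) \subseteq \Aut(\mathcal{F})$, the polynomial $f \circ M_a$ lies in $\mathcal{F}$, and hence $g \circ \mu_a \in \phi^*(\mathcal{F})$.

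Since $\GL_1(\FF_{q^m}) = \{\mu_a : a \in \FF_{q^m}^\times\}$, this shows $\GL_1(\FF_{q^m}) \subseteq \Aut(\phi^*(\mathcal{F}))$.

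There is essentially no obstacle here; the only subtlety to watch is that $\phi^*$ is defined as an interpolation on $\FF_q[\mathbf{X}]_{q-1}^\infty$ (where the evaluation-as-function is a bijection with $\FF_{q^m}[X]_{q^m-1}$), so one should check that $f \circ M_a$ still lies in $\FF_q[\mathbf{X}]_{q-1}^\infty$ and that the equality $(f \circ M_a) \circ \phi = \phi^*(f \circ M_a)$ is valid. Both follow immediately from $\GL_m(\FF_q) \subseteq \Aut(\mathcal{F}) \subseteq \Aut(\FF_q[\mathbf{X}]_{q-1}^\infty)$ (composition with a linear map preserves partial-degree bounds modulo the relation $X_i^q = X_i$, but evaluation already factors through this reduction) and the definition of $\phi^*$ as the unique interpolation of the function $f \circ M_a \circ \phi : \FF_{q^m} \to \FF_{q^m}$.
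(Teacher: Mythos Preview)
Your proof is correct and follows essentially the same approach as the paper: both use the conjugation identity $\phi \circ \mu_a = M_a \circ \phi$ to rewrite $(f \circ \phi) \circ \mu_a$ as $(f \circ M_a) \circ \phi$, then invoke $M_a \in \GL_m(\FF_q) \subseteq \Aut(\mathcal{F})$ to conclude. Your additional remarks about the well-definedness of $\phi^*(f \circ M_a)$ are a nice touch, though the paper simply takes this for granted.
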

\begin{proof}
  Let $f \circ \phi \in \phi^*(\mathcal{F})$. For every $\mu_a \in \mathrm{GL}_1(\FF_{q^m})$, we have $f \circ \phi \circ \mu_a = f \circ M_a \circ \phi$ by definition of the matrix of the multiplication by $a$. But $M_a \in \mathrm{GL}_m(\FF_q)$, hence $f \circ M_a \in \mathcal{F}$ and we get $f \circ \phi \circ \mu_a \in \phi^*(\mathcal{F})$.
\end{proof}

Let us define the subgroup of diagonal isomorphisms 
\[
\Diag(\FF_q, m) \mydef \{ \mathrm{Diag}_{\mathbf{a}} : \mathbf{P} \mapsto (a_1P_1,\dots,a_mP_m),\, \mathbf{a} \in (\FF_q^{\times})^m \} \subseteq \GL_m(\FF_q)\,.
\]

\begin{proposition}
  \label{prop:monomial-group}
  Let $\mathcal{F}$ be a subspace of $m$-variate polynomials of partial degree bounded by $q-2$, that is $\calF \subseteq \mathrm{Poly}(B^m_\infty(q-2))$. If $\Diag(\FF_q, m) \subseteq \mathrm{Aut}(\mathcal{F})$, then $\mathcal{F}$ is generated by monomials.
\end{proposition}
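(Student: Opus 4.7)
The plan is to isolate each monomial in an arbitrary $f\in\calF$ by averaging $f$ against multiplicative characters of the diagonal group $\Diag(\FF_q,m)\cong(\FF_q^\times)^m$, exactly as in part~(i) of the proof of Theorem~\ref{thm:glm-invariant-implies-monomial}. The key observation is that the partial-degree bound $q-2$ (strictly less than $q-1$) is precisely what forces this averaging to single out one monomial rather than to mix monomials whose exponents are congruent modulo $q-1$.

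Concretely, fix $f = \sum_{\mathbf{d}} f_{\mathbf{d}}\mathbf{X}^{\mathbf{d}}\in\calF$ (with all $\mathbf{d}\in B_\infty^m(q-2)$) and let $\mathbf{j}$ be any exponent tuple with $f_{\mathbf{j}}\ne 0$. First I would introduce
\[
Q_{\mathbf{j}}(\mathbf{X})\mydef\sum_{\mathbf{a}\in(\FF_q^\times)^m}\Big(\prod_{i=1}^m a_i^{-j_i}\Big)\,f\!\circ\!\Diag_{\mathbf{a}}(\mathbf{X}).
\]
By linearity and the assumption $\Diag(\FF_q,m)\subseteq\Aut(\calF)$, each summand lies in $\calF$, hence $Q_{\mathbf{j}}\in\calF$. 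Expanding $f\circ\Diag_{\mathbf{a}}$ and swapping sums gives
\[
Q_{\mathbf{j}}(\mathbf{X})=\sum_{\mathbf{d}} f_{\mathbf{d}}\prod_{i=1}^m\Big(\sum_{a_i\in\FF_q^\times}a_i^{d_i-j_i}\Big)\mathbf{X}^{\mathbf{d}}.
\]
The inner character sum equals $-1$ when $d_i\equiv j_i\pmod{q-1}$ and $0$ otherwise.

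Here comes the crux: since $d_i,j_i\in\{0,\dots,q-2\}$, the congruence $d_i\equiv j_i\pmod{q-1}$ is equivalent to the equality $d_i=j_i$. (This is precisely where the hypothesis $\calF\subseteq\Poly(B_\infty^m(q-2))$, rather than the weaker $B_\infty^m(q-1)$, is used; without it, the tuples $\mathbf{0}$ and $(q-1)\mathbf{e}_i$ would be pooled together.) Consequently only the term $\mathbf{d}=\mathbf{j}$ survives, and
\[
Q_{\mathbf{j}}(\mathbf{X})=(-1)^m f_{\mathbf{j}}\,\mathbf{X}^{\mathbf{j}}\in\calF.
\]
Since $f_{\mathbf{j}}\in\FF_q^\times$ and $\calF$ is an $\FF_q$-subspace, we conclude $\mathbf{X}^{\mathbf{j}}\in\calF$. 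As this holds for every monomial appearing in every $f\in\calF$, the subspace $\calF$ is generated by monomials.

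The main (essentially only) subtlety is the correct identification of the degree bound: once one realizes that the character-averaging argument of Theorem~\ref{thm:glm-invariant-implies-monomial} only cleanly isolates monomials when the exponent ranges are representatives modulo $q-1$ of pairwise distinct residues, the bound $q-2$ becomes forced and the rest of the proof is a short direct computation.
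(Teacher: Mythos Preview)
Your proof is correct and essentially identical to the paper's: both define the averaged polynomial $Q_{\mathbf{j}}$ over $(\FF_q^\times)^m$ weighted by the character $\mathbf{a}\mapsto\prod_i a_i^{-j_i}$, use the standard character-sum identity, and exploit the bound $q-2$ to turn the congruence $d_i\equiv j_i\pmod{q-1}$ into the equality $d_i=j_i$. The only cosmetic differences are that the paper absorbs the $(-1)^m$ into the definition of $Q_{\mathbf{j}}$ and swaps the roles of the symbols $\mathbf{d}$ and $\mathbf{j}$.
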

\begin{proof}
  Let $f \in \mathcal{F}$, such that $f(\mathbf{X}) = \sum_{\mathbf{d} \in D} f_{\mathbf{d}} \mathbf{X^d}$ with $D = \{ \mathbf{d} \in \NN^m, f_{\mathbf{d}} \ne 0 \}$. It is sufficient to prove that for all $\mathbf{d} \in D$, $\mathbf{X^d}$ lies in $\mathcal{F}$.

  Let $\mathbf{d} \in D$. Similarly to the proof of Theorem~\ref{thm:glm-invariant-implies-monomial}, we define
  \[
  Q_{\mathbf{d}}(\mathbf{X}) \mydef (-1)^m \sum_{\mathbf{a} \in (\FF_q^\times)^m} \Big(\prod_{i=1}^m a_i^{-d_i}\Big) (f \circ \mathrm{Diag}_{\mathbf{a}})(\mathbf{X})
  \]
  Since $\mathcal{F}$ is a vector space and $\Diag(\FF_q, m) \subseteq \mathrm{Aut}(\mathcal{F})$, we see that $Q_{\mathbf{d}}(\mathbf{X}) \in \mathcal{F}$. 
  \[
  \begin{aligned}
    Q_{\mathbf{d}}(\mathbf{X}) &= \sum_{\mathbf{a} \in (\FF_q^\times)^m} \Big(\prod_{i=1}^m -a_i^{-d_i}\Big) \sum_{\mathbf{j}} f_{\mathbf{j}} \,a_1^{j_1} \dots a_m^{j_m} \,\mathbf{X^j}\\
    &= \sum_{\mathbf{j}} f_{\mathbf{j}} \sum_{\mathbf{a} \in (\FF_q^{\times})^m} \Big(\prod_{i=1}^m - a_i^{j_i-d_i}\Big) \mathbf{X}^{\mathbf{j}}\\
   &= \sum_{\mathbf{j}} \, f_{\mathbf{j}} \prod_{i=1}^m \Big( \underbrace{- \sum_{a_i \in \FF_q^{\times}} a_i^{j_i-d_i}}_{= 0 \text{ if } d_i \ne j_i, \, 1 \text{ otherwise }}  \Big) \mathbf{X}^{\mathbf{j}} \quad=\quad f_{\mathbf{d}}\mathbf{X^d}\,.
  \end{aligned}
  \]

  We know that $\mathbf{d} \in D$, hence $f_{\mathbf{d}} \ne 0$ and by linearity we obtain $\mathbf{X^d} = \frac{1}{f_{\mathbf{d}}} Q_{\mathbf{d}}(\mathbf{X}) \in \mathcal{F}$.
\end{proof}

\paragraph{Information sets of some affine evaluation codes.} We first recall the definition of an information set of a linear code.

\begin{definition}[information set]
  Let $\calC \subseteq \FF_q^X$ be a linear code of dimension $k$ and support $X$, where $|X| = n$. An information set for $\calC$ is a subset $S \subseteq X$, $|S| = k$ such that the restriction of $\calC$ to coordinates in $X$ is $\FF_q^k$. In other words, $S$ is such that the projection of $\calC$ on $\FF_q^S$ is injective.
\end{definition}

\begin{lemma}
  \label{lem:pullback-information-sets}
  Let $\calF \subseteq \Poly(B^m_{\infty}(q-1))$ and assume that $S \subset \AA^1(\FF_{q^m})$ is an information set for $\ev_{\AA^1}(\phi^*(\mathcal{F}))$. Then $\phi(S)$ is an information set for $\ev_{\AA^m}(\mathcal{F})$.
\end{lemma}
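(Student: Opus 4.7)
The plan is to transfer the information-set property through the $\FF_q$-linear isomorphism $\phi^*$, which by design intertwines the evaluation maps $\ev_{\AA^m}$ and $\ev_{\AA^1}$ via the bijection $\phi$.

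First I would observe that the restriction of $\ev_{\AA^m}$ to $\calF$ is injective: indeed, $\calF \subseteq \Poly(B^m_\infty(q-1))$, and since $B^m_\infty(q-1)$ is $A$-reduced, the first point of the Lemma on reduced degree sets ensures that $\ker(\ev_{\AA^m}) \cap \calF = \{0\}$. Symmetrically, $\phi^*(\calF) \subseteq \FF_{q^m}[X]_{q^m-1}$, and a univariate polynomial of degree at most $q^m-1$ over $\FF_{q^m}$ is determined by its values on $\AA^1(\FF_{q^m})$, so $\ev_{\AA^1}$ is injective on $\phi^*(\calF)$. Combined with the fact that $\phi^*$ is an $\FF_q$-isomorphism, this yields a well-defined $\FF_q$-linear isomorphism
\[
\Phi : \ev_{\AA^m}(\calF) \longrightarrow \ev_{\AA^1}(\phi^*(\calF)), \qquad \ev_{\AA^m}(f) \longmapsto \ev_{\AA^1}(\phi^*(f)).
\]

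Next I would establish the coordinatewise compatibility of $\Phi$. By the very definition of $\phi^*$, for every $f \in \calF$ and every $x \in \FF_{q^m}$ we have $(\phi^*(f))(x) = (f \circ \phi)(x) = f(\phi(x))$. Reading this the other way, the coordinate of $\ev_{\AA^m}(f)$ at the point $\phi(x) \in \AA^m$ coincides with the coordinate of $\ev_{\AA^1}(\phi^*(f))$ at the point $x \in \AA^1(\FF_{q^m})$. Since $\phi$ is a bijection between $\AA^1(\FF_{q^m})$ and $\AA^m$, this identifies the restriction $c \mapsto c_{|\phi(S)}$ on $\ev_{\AA^m}(\calF)$ with the restriction $g \mapsto g_{|S}$ on $\ev_{\AA^1}(\phi^*(\calF))$, through the isomorphism $\Phi$.

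Finally I would read off the conclusion. The hypothesis that $S$ is an information set for $\ev_{\AA^1}(\phi^*(\calF))$ means exactly that $g \mapsto g_{|S}$ is injective on $\ev_{\AA^1}(\phi^*(\calF))$ and that $|S| = \dim \ev_{\AA^1}(\phi^*(\calF))$. By the commuting diagram above, the map $c \mapsto c_{|\phi(S)}$ is then injective on $\ev_{\AA^m}(\calF)$; and since $\phi$ is a bijection and $\Phi$ is an isomorphism, $|\phi(S)| = |S| = \dim \ev_{\AA^m}(\calF)$. Hence $\phi(S)$ is an information set for $\ev_{\AA^m}(\calF)$. I do not expect any serious obstacle here: the statement is essentially a transport of structure, and the only care required is to justify injectivity of both evaluation maps so that $\Phi$ is well defined.
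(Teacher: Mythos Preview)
Your proof is correct and follows the same route as the paper, which simply states that the result follows from $\phi^*(\calF) = \{ f \circ \phi : f \in \calF \}$ and the fact that $\phi$ is an $\FF_q$-isomorphism. You have merely unpacked this one-line argument by explicitly verifying the injectivity of both evaluation maps and the coordinatewise compatibility $(\phi^*(f))(x) = f(\phi(x))$, which is exactly the transport of structure the paper alludes to.
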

\begin{proof}
  This follows from the fact that $\phi^*(\calF) = \{ f \circ \phi, f \in \calF \}$ and $\phi$ is an $\FF_q$-isomorphism.
\end{proof}

In the next proposition, we give a result that improves the theorem given by Guo and Kopparty in~\cite[Appendix A]{GuoK16}, in the specific case of codes evaluating polynomials with partial degree bounded by $q-2$ (which is the case for many interesting codes). Indeed, their result holds for affine-invariant codes while we only need codes invariant under $\GL_m(\FF_q)$.

\begin{proposition}
  Let $\mathcal{C} = \ev_{\AA^m}(\mathcal{F})$ be an affine evaluation code of dimension $k$ over $\FF_q$, and assume that $\calF \subseteq \Poly(B^m_\infty(q-2))$ and $\mathrm{GL}_m(\FF_q) \subseteq \mathrm{Aut}(\mathcal{F})$. Then, for every primitive element $\omega$ of $\FF_{q^m}$, and every isomorphism  $\phi : \FF_{q^m} \to \FF_q^m$, the set $\{ \phi(\omega), \dots, \phi(\omega^k) \}$ is an information set for $\calC$.
\end{proposition}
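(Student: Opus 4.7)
The strategy is to transport the problem to a univariate setting over $\FF_{q^m}$ via $\phi^*$ and then exploit a cyclic structure. By Lemma~\ref{lem:pullback-information-sets}, it suffices to prove that $S = \{\omega, \omega^2, \dots, \omega^k\}$ is an information set for the $\FF_q$-linear code $\calC' \mydef \ev_{\AA^1}(\phi^*(\calF))$ on $\AA^1(\FF_{q^m})$. Observe that $\dim_{\FF_q} \calC' = k$: the map $\phi^*$ is $\FF_q$-linear and injective on $\Poly(B^m_\infty(q-1))$, and univariate evaluation over $\FF_{q^m}$ is injective on $\FF_{q^m}[X]_{q^m-1}$.

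The key structural ingredient is Lemma~\ref{lem:autom-phi*}: from $\GL_m(\FF_q) \subseteq \Aut(\calF)$ it follows that $\phi^*(\calF)$ is stabilized by every substitution $X \mapsto aX$ with $a \in \FF_{q^m}^\times$. At the level of evaluation vectors, the permutation $\sigma_a \colon x \mapsto ax$ of $\FF_{q^m}$ (which fixes $0$ and, for $a = \omega$, acts as a single $(q^m-1)$-cycle on $\FF_{q^m}^\times$) lies in $\Perm(\calC')$. Hence the puncturing $\tilde\calC$ of $\calC'$ at the coordinate indexed by $0$ is a $\sigma_\omega$-invariant $\FF_q$-linear code of length $q^m-1$, that is, a cyclic code.

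I expect the main delicate step to be showing that this puncturing preserves the dimension, so that $\dim_{\FF_q}\tilde\calC = k$. This is precisely where the hypothesis $\calF \subseteq \Poly(B^m_\infty(q-2))$ (rather than the weaker $B^m_\infty(q-1)$) plays its role. If a nonzero $c \in \calC'$ vanished on all of $\FF_{q^m}^\times$, then its interpolant $g = \phi^*(f)$ would be divisible by $\prod_{a \in \FF_{q^m}^\times}(X-a) = X^{q^m-1} - 1$, so $g = c_0(X^{q^m-1} - 1)$ for some $c_0 \in \FF_{q^m}$, and $g(0) = f(\mathbf{0}) \in \FF_q$ forces $c_0 \in \FF_q$. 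Inverting $\phi^*$ on this polynomial identifies $f$ with a scalar multiple of $\mathbf{1}_{\{\mathbf{0}\}}$, whose unique reduced form $\prod_{i=1}^m(1 - X_i^{q-1})$ contains the monomial $\prod_i X_i^{q-1}$ of partial degree $q-1$, contradicting $f \in \Poly(B^m_\infty(q-2))$. Hence $c_0 = 0$, and the puncturing map $\calC' \twoheadrightarrow \tilde\calC$ is an isomorphism.

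To conclude, I would invoke the classical fact that every $\FF_q$-linear cyclic code of dimension $k$ admits any window of $k$ cyclically consecutive coordinates as an information set: the generator polynomial $g$ satisfies $g(0) \neq 0$ (since $g \mid X^n-1$), and the system expressing the first $k$ coordinates of a codeword $gh$ as a convolution in $h$ is lower triangular with nonzero diagonal, hence invertible. Applied to the labeling $\omega^0, \omega^1, \dots, \omega^{q^m-2}$ of $\FF_{q^m}^\times$, this shows that $\{\omega, \omega^2, \dots, \omega^k\}$ is an information set for $\tilde\calC$, hence also for $\calC'$ through the isomorphism $\calC' \simeq \tilde\calC$ established above. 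Lemma~\ref{lem:pullback-information-sets} then transports this information set back to $\{\phi(\omega), \dots, \phi(\omega^k)\}$ for $\calC$, concluding the proof.
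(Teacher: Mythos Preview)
Your proof is correct and takes a genuinely different route from the paper's. Both arguments share the opening reduction: invoke Lemma~\ref{lem:pullback-information-sets} to pass to the univariate code $\calC' = \ev_{\AA^1}(\phi^*(\calF))$, and use Lemma~\ref{lem:autom-phi*} to obtain invariance of $\phi^*(\calF)$ under all substitutions $X \mapsto aX$ with $a \in \FF_{q^m}^\times$. From there the paths diverge. The paper combines this invariance with Proposition~\ref{prop:monomial-group} (applied in one variable over $\FF_{q^m}$) to conclude that $\phi^*(\calF)$ lives in a space spanned by at most $k$ monomials $X^{i_1},\dots,X^{i_k}$; the evaluation map $g \mapsto (g(\omega),\dots,g(\omega^k))$ is then governed by a $k\times k$ Vandermonde-type matrix in the nodes $\omega^{i_\ell}$, whose invertibility (the $i_\ell$ being distinct modulo $q^m-1$, which uses the bound $q-2$) gives the information-set property directly. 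You instead puncture $\calC'$ at $0$, read the scaling invariance as cyclicity of the resulting length-$(q^m-1)$ code, and appeal to the classical fact that any window of $k$ consecutive positions of a cyclic code of dimension $k$ is an information set. Your use of the hypothesis $\calF \subseteq \Poly(B^m_\infty(q-2))$ --- to show that puncturing at $0$ is injective by ruling out the indicator of $\mathbf{0}$ --- is clean and makes the role of that hypothesis very transparent. The trade-off: your argument is more self-contained (it bypasses the monomiality machinery of Proposition~\ref{prop:monomial-group} entirely and rests on standard cyclic-code theory), while the paper's Vandermonde step is a touch more direct once monomiality has been established.
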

\begin{proof}
  The proof is highly inspired by~\cite[Appendix A]{GuoK16}. Thanks to Lemma~\ref{lem:pullback-information-sets}, it is sufficient to prove that $S = \{\omega, \dots, \omega^k\}$ is an information set for $\calC' = \ev_{\AA^1}(\phi^*(\calF))$. Moreover, since $\Diag(\FF_q, m) \subseteq \GL_m(\FF_q)$, the conjunction of Proposition~\ref{prop:monomial-group} and Lemma~\ref{lem:autom-phi*} ensures that $\calC'$ is monomial. Denote by $I = \Deg(\calC') = \{i_1,\dots,i_k\}$, and let $g(X) = \sum_{i \in I} a_i X^i \in \calF$. We need to prove:
\[
g \ne 0 \quad \implies \quad \ev_S(g) \ne 0\,.
\]
For this sake, we remark that
\[
\begin{pmatrix}
  \omega^{i_1}   & \omega^{i_2} & \dots  & \omega^{i_k} \\
  \omega^{2 i_1} & \omega^{2i_2} & \dots  & \omega^{2 i_k}\\
  \vdots       & \vdots       & \ddots & \vdots\\
  \omega^{k i_1} & \omega^{k i_2} & \dots & \omega^{k i_k}
\end{pmatrix}
\begin{pmatrix}
  a_1    \\
  a_2    \\
  \vdots \\
  a_k 
\end{pmatrix}
=
\begin{pmatrix}
  g(\omega)   \\
  g(\omega^2) \\
  \vdots       \\
  g(\omega^k)
\end{pmatrix}
= \ev_S(g)\,.
\]
Since the left-hand square matrix is a Vandermonde matrix and $\omega$ is primitive, it is invertible and the result is proved.
\end{proof}

As a corollary we recover Guo and Kopparty's result, since $\mathrm{GL}_m(\FF_q)$ is a subgroup of the group of affine transformations.
\begin{corollary}[given in~\cite{GuoK16}]
  \label{coro:information-set-lift}
Let $\mathcal{C} = \Lift_q(m, k)$ for $k \le q-2$. Then, for every $\omega$ primitive element of $\FF_{q^m}$, and every $\phi$ isomorphism $\FF_{q^m} \to \FF_q^m$, the set $\{ \phi(\omega), \dots, \phi(\omega^{\dim \calC}) \}$ is an information set for $\calC$.
\end{corollary}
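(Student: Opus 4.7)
The plan is to obtain the corollary as a direct instance of the previous proposition, applied to the natural polynomial space defining $\Lift_q(m,k)$. So I would set $\calF \mydef \Poly(\ADeg_q(m,k))$, so that $\calC = \ev_{\AA^m}(\calF)$ by definition of $\Lift_q(m,k)$, and then verify the two hypotheses of the proposition: (a) $\calF \subseteq \Poly(B^m_\infty(q-2))$, and (b) $\GL_m(\FF_q) \subseteq \Aut(\calF)$.

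For (a), I would recall from equation~\eqref{eq:degree-set-lift} that $\ADeg_q(m,k) \subseteq B^m_\infty(k)$, and the hypothesis $k \le q-2$ then gives $B^m_\infty(k) \subseteq B^m_\infty(q-2)$, so that $\calF \subseteq \Poly(B^m_\infty(q-2))$.

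For (b), I would show directly from the definition of affine lifting that $\GL_m(\FF_q)$ preserves $\calF$. Let $\Omega \in \GL_m(\FF_q)$ and $f \in \calF$. For any $L^* \in \Emb_\AA(m)$, one has the identity $(f \circ \Omega) \circ L^* = f \circ (\Omega L^*)$. Since $\Omega$ is invertible, $\rank(\Omega L^*) = \rank(L^*) \ge 1$, so $\Omega L^*$ is again non-zero and hence extendable to an element of $\Emb_\PP(m)$ by adjoining a suitable first coordinate; therefore $\Omega L^* \in \Emb_\AA(m)$. The defining property of $\Lift_q(m,k)$ applied to $f$ along $\Omega L^*$ yields $\ev_{\AA^1}(f\circ \Omega L^*) \in \RS_q(k)$, i.e.\ $f\circ\Omega \in \calF$, as desired. (Monomiality of $\calF$, which is built into the definition of the degree set, makes this calculation immediate on the basis of monomials, but the above argument avoids appealing to monomiality at all.)

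With (a) and (b) established, the previous proposition applies verbatim: for any primitive $\omega \in \FF_{q^m}$ and any $\FF_q$-isomorphism $\phi : \FF_{q^m} \to \FF_q^m$, the set $\{\phi(\omega), \dots, \phi(\omega^{\dim \calC})\}$ is an information set for $\calC$. I do not expect a real obstacle here; the only subtlety worth double-checking is step (b), specifically the claim that $\Omega L^*$ remains in $\Emb_\AA(m)$ for every $L^* \in \Emb_\AA(m)$, since $\Emb_\AA(m)$ is defined via an extendability condition rather than directly as a rank condition. This reduces to the observation that the extendability of a linear map $\FF_q^2 \to \FF_q^m$ to a rank-$2$ map $\FF_q^2 \to \FF_q^{m+1}$ is equivalent to being non-zero, which is preserved by invertible endomorphisms of $\FF_q^m$.
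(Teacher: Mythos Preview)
Your approach matches the paper's—apply the preceding proposition with $\calF = \Poly(\ADeg_q(m,k))$—but your verification of (b) contains a genuine gap. You correctly show that $f\circ\Omega$ satisfies the lifting condition along every $L^* \in \Emb_\AA(m)$, which yields $\ev_{\AA^m}(f\circ\Omega) \in \calC$. However, the proposition requires $f\circ\Omega \in \calF$ \emph{as a polynomial}, and your ``i.e.'' is a non-sequitur: the space $\calF = \Poly(\ADeg_q(m,k))$ is in general \emph{not} stable under $\GL_m(\FF_q)$. Concretely, take $q=4$, $m=2$, $k=2$. One checks $(2,2)\in\ADeg_4(2,2)$, so $f = X_1^2X_2^2 \in \calF$; composing with $\Omega:(x_1,x_2)\mapsto(x_1,x_1+x_2)$ gives $f\circ\Omega = X_1^4 + X_1^2X_2^2$ in characteristic~$2$, which lies outside $\Poly(B^2_\infty(q-2))$ and hence outside $\calF$. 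In fact no choice of $\calF$ can satisfy both hypotheses simultaneously: the conditions $\ev_{\AA^m}(\calF) = \calC$ and $\calF \subseteq \Poly(B^m_\infty(q-2))$, together with injectivity of $\ev_{\AA^m}$ on $\Poly(B^m_\infty(q-1))$, force $\calF = \Poly(\ADeg_q(m,k))$, which as just shown fails to be $\GL_m$-stable.

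The paper's own one-line justification (``$\GL_m(\FF_q)$ is a subgroup of the affine group'') makes the same leap. What salvages the argument is that the \emph{proof} of the proposition only really needs (i) the degree bound $\calF \subseteq \Poly(B^m_\infty(q-2))$, which guarantees $\phi^*(\calF) \subseteq \FF_{q^m}[X]_{q^m-2}$, and (ii) invariance of the \emph{code} $\calC' = \ev_{\AA^1}(\phi^*(\calF))$ under $\GL_1(\FF_{q^m})$—and the latter follows from $\GL_m(\FF_q)$-invariance of the code $\calC$, not of the polynomial space $\calF$. Your lifting argument does establish this code-level invariance, so the strategy can be repaired by either reworking the proposition's hypothesis to the code level, or by running its proof directly with these weaker inputs.
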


\paragraph{The case of projective evaluation codes.} We would like to prove a similar result for projective lifted codes. Unfortunately, one cannot define an isomorphism between $\PP^1(\FF_{q^m})$ and $\PP^m(\FF_q)$ since they do not have same cardinality. To solve this issue, our idea is to decompose $\PP^m(\FF_q)$ into affine parts, and to use recursively the links between projective and affine lifted codes we stated in previous sections.

Let $\PP^m(\FF_q) = \bigsqcup_{i=0}^m \AA^{m,i}(\FF_q)$, where
\[
\AA^{m,i}(\FF_q) \mydef \{ (0:\dots:0:1:x_1:\dots:x_i), (x_1, \dots, x_i) \in \AA^i(\FF_q) \}\,.
\]
Informally, $\AA^{m,i}$ is the affine part of the $i$-dimensional projective subspace at infinity of $\PP^m$.

\begin{theorem}
Let $\calC = \PLift_q(m,k)$ for $k \le q-1$. Then, for every $\omega_i$ primitive element of $\FF_{q^i}$, and every isomorphism $\phi_i : \FF_{q^i} \to \AA^{m,i}(\FF_q)$, the set
\[
S = \bigsqcup_{i=0}^m \{ \phi_i(\omega_i), \dots, \phi_i(\omega_i^{\dim \calC_i}) \}
\]
is an information set for $\calC$, where $\calC_i = \Lift_q(i,k-1)$ for $i>0$, and by convention, $\dim \calC_0 = 1$ and $\phi_0(\FF_{q^0}) \mydef \{ (0:\dots:0:1) \}$.
\end{theorem}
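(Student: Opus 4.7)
The plan is to proceed by induction on $m$, exploiting the exact sequence from Theorem~\ref{theo:puncturing-shortening}. Write $S = S_m \sqcup S'$, where $S_m \mydef \{ \phi_m(\omega_m), \dots, \phi_m(\omega_m^{\dim \calC_m}) \} \subseteq \AA^{m,m}$ collects the points coming from the top affine chart, and $S' \mydef \bigsqcup_{i=0}^{m-1} \{ \phi_i(\omega_i), \dots, \phi_i(\omega_i^{\dim \calC_i}) \} \subseteq \Pi_\infty$ lies in the hyperplane at infinity $\Pi_\infty = \{ P_0 = 0 \} \simeq \PP^{m-1}$. The decomposition matches the cardinality count: by Corollary~\ref{coro:recursive-dimension} and induction, $|S| = \dim \calC_m + \sum_{i=0}^{m-1} \dim \calC_i = \dim \Lift_q(m,k-1) + \dim \PLift_q(m-1,k) = \dim \PLift_q(m,k)$. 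Hence it suffices to prove the injectivity of the projection $\calC \to \FF_q^S$.

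For the induction step, let $c = \ev_{\PP^m}(f) \in \calC$ with $c|_S = 0$. By Theorem~\ref{theo:puncturing-shortening}, the puncturing map $\pi : c \mapsto c|_{\Pi_\infty}$ sends $\calC$ into $\PLift_q(m-1, k)$. Since $S' \subseteq \Pi_\infty$ and $\pi(c)|_{S'} = 0$, the induction hypothesis applied to $\PLift_q(m-1, k)$ yields $\pi(c) = 0$, so $c$ vanishes on all of $\Pi_\infty$. Applying again Theorem~\ref{theo:puncturing-shortening} (shortening at $\Pi_\infty$), the restriction $c|_{\AA^{m,m}}$ lies in $\Lift_q(m, k-1)$, under the natural identification $\AA^{m,m} \simeq \AA^m(\FF_q)$. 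This restriction vanishes on $S_m \subseteq \AA^{m,m}$, which by Corollary~\ref{coro:information-set-lift} is an information set for $\Lift_q(m, k-1)$. Therefore $c|_{\AA^{m,m}} = 0$, and combined with the vanishing on $\Pi_\infty$ we conclude $c = 0$.

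For the base case $m = 1$, we have $\calC = \PRS_q(k)$ and $S = \{(0:1)\} \sqcup \{\phi_1(\omega_1), \dots, \phi_1(\omega_1^k)\}$, where $\phi_1 : \FF_q \to \AA^{1,1}$, $x \mapsto (1:x)$. Writing $f(X,Y) = \sum_{i=0}^k a_i X^{k-i}Y^i$, vanishing at $(0:1)$ forces $a_k = 0$, and then $f(1,\omega_1^j) = 0$ for $j = 1,\dots,k$ yields $k$ vanishing conditions at the pairwise distinct points $\omega_1,\dots,\omega_1^k$ (distinct because $\omega_1$ has order $q-1 \ge k$) on the polynomial $g(Y) \mydef f(1,Y)$ of degree at most $k-1$, forcing $g = 0$ by Vandermonde; hence $f = 0$. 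The main subtle point is to check that the identification $\AA^{m,m} \simeq \AA^m(\FF_q)$ really intertwines the shortening given by Theorem~\ref{theo:puncturing-shortening} with the affine lifted code structure used in Corollary~\ref{coro:information-set-lift}, i.e.\ that the projective evaluation of a monomial divisible by $X_0$ reduces on $\AA^{m,m}$ to the usual affine evaluation of the dehomogenised monomial; this is immediate from the definitions because $X_0$ evaluates to $1$ on every point of $\AA^{m,m}$.
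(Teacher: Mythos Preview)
Your proof is correct and follows essentially the same inductive strategy as the paper: decompose $\PP^m$ into the affine chart $\AA^{m,m}$ and the hyperplane at infinity $\Pi_\infty \simeq \PP^{m-1}$, then combine Corollary~\ref{coro:information-set-lift} for the affine part with the induction hypothesis for the projective part. The only difference is packaging: the paper invokes Theorem~\ref{theo:bij-degree-sets} to exhibit a block upper-triangular generator matrix (with blocks $G_0$ for $\Lift_q(m,k-1)$ and $G_1$ for $\PLift_q(m-1,k)$) and reads off the information set from that, whereas you argue injectivity directly via the puncturing/shortening exact sequence of Theorem~\ref{theo:puncturing-shortening}; these are two phrasings of the same decomposition.
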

\begin{proof}
  We proceed by induction on $m$.

$\bullet$ \emph{Case $m = 1$}. Then $\calC = \PRS_q(k)$ which is an MDS code of dimension $k+1$, hence any $(k+1)$-subset of $\PP^1$ is an information set for $\calC$. In particular, $S = \{ (0 : 1) \} \cup \{ \phi_1(\omega_1), \dots, \phi_1(\omega_1^k) \}$ is one of them.

$\bullet$ \emph{Induction step.} Assume the result holds for step $m-1$. A basis of $\PLift_q(m, k)$ consists in evaluating monomials with exponents in $\PDeg_q(m, k)$. Thanks to Theorem~\ref{theo:bij-degree-sets}, we know that $\PDeg_q(m, k)$ is in bijection with $\ADeg_q(m, k-1) \sqcup \PDeg_q(m-1, k)$, where the bijection is given in the proof of the theorem. Hence, there exists a generator matrix of $\PLift_q(m, k)$ defined as follows:

\begin{tikzpicture}
  \matrix [matrix of math nodes,left delimiter=(,right delimiter=),row sep=0.5cm,column sep=0.5cm] (m) {
     ~ & ~ & ~ & ~ & ~ & ~ & ~ & ~ \\
     ~ & ~ & ~ & ~ & ~ & ~ & ~ & ~ \\
     ~ & ~ & ~ & ~ & ~ & ~ & ~ & ~ \\
     ~ & ~ & ~ & ~ & ~ & ~ & ~ & ~ \\
     ~ & ~ & ~ & ~ & ~ & ~ & ~ & ~ \\
   };
   \node[fit=(m-1-1)(m-3-5)]{\Large{$G_0$}};
   \node[fit=(m-1-6)(m-3-8)]{\Large{$0$}};
   \node[fit=(m-4-1)(m-5-5)]{\Large{$*$}};
   \node[fit=(m-4-6)(m-5-8)]{\Large{$G_1$}};
  
  \draw[dashed] ($0.5*(m-1-5.north east)+0.5*(m-1-6.north west)$) --
  ($0.5*(m-5-5.south east)+0.5*(m-5-6.south west)$);
  
  \draw[dashed] ($0.5*(m-3-1.south west)+0.5*(m-4-1.north west)$) --
  ($0.5*(m-3-8.south east)+0.5*(m-4-8.north east)$);
  
  \node[above=10pt of m-1-1] (top-1) {\scriptsize{$\cdots$}};
  \node[above=10pt of m-1-2] (top-2) {};
  \node[above=10pt of m-1-3] (top-3) {\scriptsize{$(1:x_1:\dots:x_m)$}};
  \node[above=10pt of m-1-4] (top-4) {};
  \node[above=10pt of m-1-5] (top-5) {\scriptsize{$\cdots$}};
  \node[above=10pt of m-1-6] (top-6) {};
  \node[above=10pt of m-1-7] (top-7) {\scriptsize{$(0:\dots)$}};
  \node[above=10pt of m-1-8] (top-8) {};
  
  \node[left=12pt of m-1-1] (left-1) {};
  \node[left=12pt of m-2-1] (left-2) {};
  \node[left=12pt of m-3-1] (left-3) {};
  \node[left=12pt of m-4-1] (left-4) {};
  \node[left=12pt of m-5-1] (left-5) {};
  
  \node[rectangle, above delimiter=\{] (del-top-1) at ($0.5*(top-1.north) + 0.5*(top-5.north)$){
    \tikz{\path (top-1.west) rectangle (top-5.east);}
  };
  \node[above=10pt] at (del-top-1.north) {$\AA^m(\FF_q)$};

  \node[rectangle,above delimiter=\{] (del-top-2) at ($0.5*(top-6.north) + 0.5*(top-8.north)$) {
    \tikz{\path (top-6.west) rectangle (top-8.east);}
  };
  \node[above=10pt] at (del-top-2.north) {$\simeq \PP^{m-1}(\FF_q)$};
  
  \node[rectangle,left delimiter=\{] (del-left-1) at ($0.5*(left-1.east) +0.5*(left-3.east)$) {\tikz{\path (left-1.north) rectangle (left-3.south);}};
  \node[left=10pt,text width=5cm] at (del-left-1.west) {\small{evaluation of monomials with degrees in ${\ADeg(m,k-1)}$}};
  \node[rectangle,left delimiter=\{] (del-left-2) at ($0.5*(left-4.east) +0.5*(left-5.east)$) {\tikz{\path (left-4.north) rectangle (left-5.south);}};
  \node[left=10pt,text width=5cm] at (del-left-2.west) {\small{evaluation of monomials with degrees in ${\PDeg(m-1,k)}$}};
\end{tikzpicture}\\
where $G_0$ and $G_1$ are generator matrices of $\Lift_q(m,k-1)$ and $\PLift_q(m-1,k)$ respectively.

Since $G_1$ and $G_0$ are full-rank, we know that the union of an information set $S_0$ of $\Lift_q(m,k-1)$ and an information set $S_1$ of $\PLift_q(m-1,k)$ gives an information set $S$ of $\PLift_q(m,k)$. Information sets of affine lifted codes are described in Corollary~\ref{coro:information-set-lift} (we just need to take care about the way we represent affine points in the projective space, whence the definition of the $\AA^{m,i}$, $1 \le i \le m$). Therefore we have $S_0 = \{ \phi_m(\omega_m),\dots,\phi_m(\omega_m^{\dim \calC_m}) \}$ with $\phi_m$, $\omega_m$ defined as in the statement of the theorem. Besides, the inductive step gives the information set of $\PLift_q(m-1,k)$: $S_1 = \bigsqcup_{i=0}^{m-1} \{ \phi_i(\omega_i), \dots, \phi_i(\omega_i^{\dim \calC_i}) \}$.

Therefore $S = S_0 \sqcup S_1$ leads to the result at step $m$.
\end{proof}

\subsection{Estimation of the minimum distance}
\label{subsec:distance}

We give bounds on the minimum distance of a projective lifted code, depending on the minimum distance of the underlying projective Reed-Solomon code. In this section, $\mathrm{wt}(c)$ denotes the Hamming weight of a vector $c$, and $\mathrm{nz}_S(f)$ denotes the number of zeroes of $f \in \FF_q[X_0, \dots, X_m]^H_v$  over the set $S \subseteq \PP^m$.

\begin{proposition}[upper bound]
Let $1 \le k \le q-1$ and $\PRS_q(k)$ be the projective Reed-Solomon code of dimension $k+1$ and distance $d = q+1-k$. Then the distance $D$ of $\PLift_q(m, k)$ satisfies:
\[
D \le \theta_{m,q} - q^{m-1}(q+1 - d)
\]
where $\theta_{m,q} = \frac{q^{m+1}-1}{q-1}$. As a corollary, the relative distance $\delta$ of $\PRS_q(k)$ and $\Delta$ of $\PLift_q(m, k)$ satisfy:
\[
\Delta \le (1-b) \delta + b,\quad\text{ where } 0 \le b \le q^{-2}\,.
\]
\end{proposition}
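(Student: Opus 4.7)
The strategy is to exhibit a single explicit codeword of $\PLift_q(m,k)$ whose weight matches the claimed bound, by starting from a minimum weight codeword of the underlying $\PRS_q(k)$ and transporting it to $\PLift_q(m,k)$ through the inclusion $\mathrm{PRM}_q(m,k) \subseteq \PLift_q(m,k)$ established at the end of Section~\ref{subsec:degree-sets}. Concretely, I fix $g(X_0,X_1) = \prod_{i=1}^{k} (X_1 - \alpha_i X_0) \in \FF_q[X_0,X_1]^H_k$ for distinct $\alpha_1,\dots,\alpha_k \in \FF_q$ (possible since $k \le q-1$). By the MDS property of $\PRS_q(k)$, this $g$ realises the minimum weight $d = q+1-k$ and has exactly $k$ projective roots in $\PP^1$. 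Viewing $g$ as an element of $\FF_q[X_0,\dots,X_m]^H_k$ independent of $X_2,\dots,X_m$, the vector $\ev_{\PP^m}(g)$ lies in $\mathrm{PRM}_q(m,k)$, and by the cited corollary it corresponds, up to a diagonal isomorphism, to a codeword of $\PLift_q(m,k)$ of identical Hamming weight (diagonal isomorphisms preserve weights).

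The key step is then the weight count of $\ev_{\PP^m}(g)$. A projective point $(a_0:a_1:\dots:a_m)$ in its standard representation is a zero of the codeword iff $g(a_0,a_1) = 0$, which splits into two disjoint contributions. First, the points with $a_0 = a_1 = 0$: these force $(a_2,\dots,a_m) \ne 0$ and form a copy of $\PP^{m-2}$, contributing $\theta_{m-2,q}$ points. Second, the points whose first two coordinates define one of the $k$ projective roots of $g$: for each such root, fixing a nonzero representative in $\FF_q^2$ and letting $(a_2,\dots,a_m)$ range freely over $\FF_q^{m-1}$ produces $(q-1)q^{m-1}$ affine tuples in $\FF_q^{m+1}\setminus\{0\}$, i.e.\ $q^{m-1}$ projective points after quotienting by scaling, contributing $k q^{m-1}$ points in total. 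Summing, the weight equals
\[
\theta_{m,q} - \theta_{m-2,q} - k q^{m-1} \;\le\; \theta_{m,q} - q^{m-1}(q+1-d)\,,
\]
since $\theta_{m-2,q} \ge 0$, establishing the first bound.

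For the corollary on relative distances, I divide by $n = \theta_{m,q}$ and substitute $\delta = d/(q+1)$: this gives $\Delta \le 1 - (1-\delta)(q+1)q^{m-1}/\theta_{m,q} = (1-b)\delta + b$, where $b \coloneqq 1 - (q+1)q^{m-1}/\theta_{m,q}$. A short algebraic simplification using $\theta_{m,q}(q-1) = q^{m+1}-1$ yields $b = (q^{m-1}-1)/(q^{m+1}-1)$; the bound $b \le q^{-2}$ then reduces to $q^{m+1} - q^2 \le q^{m+1} - 1$, i.e.\ $q^2 \ge 1$, which is immediate, and $b \ge 0$ holds for all $m \ge 1$. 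There is no serious obstacle in this proof; the only point demanding care is the projective bookkeeping in the zero count, in particular the two-case split between $(a_0,a_1)=(0,0)$ and $(a_0:a_1)$ being a nontrivial root of $g$.
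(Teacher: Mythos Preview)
Your proof is correct and follows essentially the same approach as the paper: both exhibit a low-weight codeword of $\PLift_q(m,k)$ arising from a minimum-weight codeword $g$ of $\PRS_q(k)$, and both derive the relative-distance corollary via the same algebra yielding $b = (q^{m-1}-1)/(q^{m+1}-1)$. The only cosmetic difference is that the paper explicitly writes down the degree-$v$ polynomial obtained by boosting the leftmost nonzero exponent of each monomial of $g$ by $(m-1)(q-1)$, whereas you invoke the inclusion $\PRM_q(m,k)\subseteq\PLift_q(m,k)$ from the corollary at the end of Section~\ref{subsec:degree-sets} (which is established precisely by that construction); your zero count is also slightly sharper in that you separate out the $\theta_{m-2,q}$ contribution from $(a_0,a_1)=(0,0)$, which the paper simply absorbs into the inequality $\mathrm{nz}_{\PP^m}(f)\ge \mathrm{nz}_{\PP^1}(g)\,q^{m-1}$.
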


\begin{proof}
  Let $c = \ev_{\PP^1}(g) \in \PRS_q(k)$ be a minimum-weight codeword, \emph{i.e.} $\mathrm{wt}(c) = d$. Assume that $g(X_0, X_1) = \sum_{i=0}^k g_i X_0^i X_1^{k-i}$, and let
\[
f(X_0,\dots,X_m) \mydef g_0 X_1^{(q-1)(m-1) + k} + \sum_{i=1}^k g_i X_0^{(q-1)(m-1) + i} X_1^{k-i} \in \FF_q[X_0,\dots,X_m]^H_v
\]
where $v = (q-1)(m-1) + k$. By studying the degrees of $f$, one can check that $c' \mydef \ev_{\PP^m}(f) \in \PLift_q(m, k)$. Moreover, for every $(x_0 : x_1) \in \PP^1$, we have:
\[
f(x_0, x_1, x_2, \dots,x_m) = g(x_0, x_1),\quad  \forall \mathbf{x} = (x_2,\dots,x_m) \in \FF_q^{m-1}\,. 
\]
Hence $c'$ is non-zero, and:
\[
  D \le \mathrm{wt}(c') = \theta_{m,q} - \mathrm{nz}_{\PP^m}(f) \le \theta_{m,q} - \mathrm{nz}_{\PP^1}(g)\,q^{m-1} =  \theta_{m,q} - q^{m-1} (q+1 - d)\,.
\]
For the bound on the relative distance, we divide both sides of the previous equation by $\theta_{m,q}$ and we use that $d = (q+1)\delta$ by definition. Then we get:
\[
\Delta \le 1 +  (\delta - 1) a\,,
\]
where $a = \frac{(q+1)q^{m-1}}{\theta_{m,q}} = 1 - \frac{q^{m-1}-1}{q^{m+1}-1}$ satisfies $1 - q^{-2} \le a \le 1$. Denoting $b = 1-a$ concludes the proof.
\end{proof}

\begin{proposition}[lower bound]
  Let $1 \le k \le q-1$ and $\PRS_q(k)$ be a projective Reed-Solomon code of dimension $k+1$ and distance $d = q+1-k$. Then the distance $D$ of $\PLift_q(m, k)$ satisfies:
\[
D \ge (d-1) \theta_{m-1,q} + 1\,
\]
where $\theta_{m,q} = \frac{q^{m+1}-1}{q-1}$. As a corollary, the respective relative distance $\delta$  of $\PRS_q(k)$ and $\Delta$ of $\PLift_q(m, k)$ satisfy:
\[
\Delta \ge (1-b') \delta - b',\quad  \text{ where } 0 \le b' \le q^{-1}\,.
\]
\end{proposition}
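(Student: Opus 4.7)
The plan is to prove the lower bound via a standard line-covering argument using the defining property of $\PLift_q(m,k)$, and then to derive the corollary on relative distances by straightforward algebra.

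For the lower bound, let $c = \ev_{\PP^m}(f) \in \PLift_q(m,k)$ be a nonzero codeword. Since $c \ne 0$, pick a point $\mathbf{P} \in \PP^m$ with $c_{\mathbf{P}} \ne 0$. For any projective line $\Lambda$ of $\PP^m$ passing through $\mathbf{P}$, choose an embedding $L \in \Emb_\PP(m)$ with $L(\PP^1) = \Lambda$. By definition of $\PLift_q(m,k)$, the word $\ev_{\PP^1}(f \circ L)$ lies in $\PRS_q(k)$. By Remark~\ref{rem:subword}, the restriction $c_{|\Lambda}$ equals $w^v_L \star \ev_{\PP^1}(f \circ L)$ up to reordering. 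As $w^v_L \in (\FF_q^\times)^{q+1}$, this componentwise scaling preserves the support, so the Hamming weight of $c_{|\Lambda}$ equals that of $\ev_{\PP^1}(f \circ L)$. The latter is a \emph{nonzero} element of $\PRS_q(k)$ (since $c_{\mathbf{P}} \ne 0$ corresponds to a nonzero coordinate), so its weight is at least $d = q+1-k$.

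Next I would observe that $\mathbf{P}$ itself accounts for one of the $d$ nonzero positions on $\Lambda$, so at least $d-1$ of the remaining $q$ points of $\Lambda$ are nonzero positions of $c$. Two distinct projective lines through $\mathbf{P}$ intersect only at $\mathbf{P}$, hence the sets of ``other nonzero points'' arising from different lines are pairwise disjoint. Since exactly $\theta_{m-1,q}$ projective lines pass through $\mathbf{P}$ (as recalled in Section~\ref{sec:notation}), summing up gives
\[
\mathrm{wt}(c) \;\ge\; 1 + (d-1)\,\theta_{m-1,q}\,,
\]
which is the announced lower bound on $D$.

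For the corollary, divide by $\theta_{m,q}$ and substitute $d = (q+1)\delta$ to obtain
\[
\Delta \;\ge\; A\,\delta - b'\,, \qquad A \coloneqq \frac{(q+1)\,\theta_{m-1,q}}{\theta_{m,q}}\,,\quad b' \coloneqq \frac{\theta_{m-1,q}-1}{\theta_{m,q}}\,.
\]
Using $\theta_{m,q} = q\,\theta_{m-1,q}+1$, a direct computation gives $\theta_{m,q}-(q+1)\theta_{m-1,q} = 1 - \theta_{m-1,q} \le 0$, so $A \ge 1 \ge 1 - b'$; hence $A\delta \ge (1-b')\delta$ and the bound $\Delta \ge (1-b')\delta - b'$ follows. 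Finally, $q(\theta_{m-1,q}-1) = q\,\theta_{m-1,q} - q < q\,\theta_{m-1,q}+1 = \theta_{m,q}$ shows that $b' < q^{-1}$, and trivially $b' \ge 0$.

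The proof is essentially routine; the only mild obstacle is to correctly account for the twist factor $w^v_L$ identified in Remark~\ref{rem:subword} when passing from the subword of $c$ on a projective line to a genuine $\PRS_q(k)$-codeword. Since this factor has entries in $\FF_q^\times$, it preserves Hamming weight and the argument goes through cleanly.
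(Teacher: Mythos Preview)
Your proof is correct and follows essentially the same approach as the paper: pick a point $\mathbf{P}$ where the codeword is nonzero, use that the $\theta_{m-1,q}$ lines through $\mathbf{P}$ partition $\PP^m \setminus \{\mathbf{P}\}$, and that each restricted word is a nonzero $\PRS_q(k)$ codeword of weight at least $d$. The paper phrases the final step via averaging (finding one line of small weight and bounding it below by $d$) rather than your direct summation of $d-1$ disjoint nonzero points per line, but these are equivalent; your handling of the twist factor $w^v_L$ is in fact more careful than the paper's, which simply writes $c_{|L_0} \in \PRS_q(k)$.
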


\begin{proof}
  Let $c = \ev_{\PP^m}(f) \in \PLift_q(m, k)$ be a minimum-weight codeword, meaning that $D = \mathrm{wt}(c) \ne 0$. Let also $\mathbf{a} \in \PP^m$ such that $c_{\mathbf{a}} = \ev_{\mathbf{a}}(f) \ne 0$. We denote by $\Lambda_{\mathbf{a}}$ the set of projective lines of $\PP^m$ passing through $\mathbf{a}$. It is clear that $( \bigcup_{L \in \Lambda_{\mathbf{a}}} (L \setminus \{\mathbf{a}\}) ) \cup \{\mathbf{a} \}$ is a partition of $\PP^m$, and $|\Lambda_{\mathbf{a}}| = \theta_{m-1,q} = \frac{q^m-1}{q-1}$. Besides we have:
\[
  \sum_{L \in \Lambda_{\mathbf{a}}} \mathrm{wt}(c_{|L}) = |\theta_{m-1,q}| +  \sum_{L \in \Lambda_{\mathbf{a}}} \mathrm{wt}(c_{|L \setminus \{\mathbf{a}\}}) = \theta_{m-1,q} + \mathrm{wt}(c_{|\PP^m \setminus \{\mathbf{a}\}}) = \theta_{m-1,q} + (D-1)\,.
\]
Therefore, there must exist a line $L_0$ such that
\[
\mathrm{wt}(c_{|L_0}) \le \frac{1}{|\Lambda_{\mathbf{a}}|} \sum_{L \in \Lambda_{\mathbf{a}}} \mathrm{wt}(c_{|L}) = 1 + \frac{D-1}{\theta_{m-1,q}}\,.
\]
Since $c_{|L_0} \in \PRS_q(k)$ and $c_{|L_0} \ne 0$, its weight is greater than $d$ and we get:
\[
D \ge (d-1) \theta_{m-1,q} + 1\,.
\]
Dividing both sides by $\theta_{m,q}$ and using $\theta_{m,q} = q\theta_{m-1,q} + 1$ finally leads to:
\[
\Delta \ge \frac{(q+1)\theta_{m-1,q}}{\theta_{m,q}} \delta - \frac{\theta_{m-1,q}-1}{\theta_{m,q}} \ge (1-b')\delta - b'\,,
\]
where $b' = \frac{\theta_{m-1,q}-1}{\theta_{m,q}} = qb$ and $b$ is defined in the previous proposition.
\end{proof}

\subsection{Connection with codes based on projective geometry designs}
\label{subsec:connection-design}

In this section, we simply point out a link between the construction of lifted codes and the codes coming from design theory --- we refer to~\cite{AssmusK92} as a good reference for links between codes and designs. We focus on projective lifted codes since they are the core of our work, but the upcoming facts also hold for affine lifted codes.

Let us consider the highest value of $k$ for which $\PLift_q(m,k)$ is non-trivial, that is $k = q-1$. It is well-known that dual codes of projective Reed-Solomon codes are also projective Reed-Solomon codes, and in the setting $k = q-1$ we have:
\begin{lemma}
  The dual code of $\PRS_q(q-1)$ is the repetition code $\langle (1,\dots,1) \rangle = \PRS_q(0)$ of length $q+1$ over $\FF_q$.
\end{lemma}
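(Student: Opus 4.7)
The plan is to show directly that the all-ones vector $\mathbf{1} = (1,\dots,1)$ lies in $\PRS_q(q-1)^\perp$, and then to conclude by a dimension argument. Since $\{(q-1,0),(q-2,1),\dots,(0,q-1)\}$ is a $P$-reduced subset of $S^2(q-1)$, the earlier Lemma on reduced degree sets gives $\dim \PRS_q(q-1) = q$, hence $\dim \PRS_q(q-1)^\perp = (q+1) - q = 1$. So it suffices to exhibit any nonzero element of the dual, and once $\mathbf{1}$ is verified to be such an element we are done since $\langle(1,\dots,1)\rangle = \PRS_q(0)$.

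The main step is the verification that $\mathbf{1}$ is orthogonal to every codeword. By linearity it is enough to check this on the monomial basis, i.e.\ to verify
\[
\sum_{\mathbf{P}\in\PP^1} \ev_{\mathbf{P}}(X^i Y^{q-1-i}) = 0, \qquad 0 \le i \le q-1.
\]
Using the partition $\PP^1 = \{(1:a) : a \in \FF_q\} \sqcup \{(0:1)\}$ (which matches the standard representatives), this rewrites as $\sum_{a\in\FF_q} a^{q-1-i} + [i=0]$, with the convention $0^0 = 1$. I would then invoke the classical identity
\[
\sum_{a \in \FF_q} a^{j} = \begin{cases} -1 & \text{if } j > 0 \text{ and } (q-1)\mid j,\\ 0 & \text{otherwise,} \end{cases}
\]
and split into three cases: (a) $i = q-1$, where the affine part gives $\sum_a 1 = q = 0$ and the infinity contribution is $0$; (b) $i = 0$, where the affine part gives $-1$ and the infinity contribution is $1$; (c) $0 < i < q-1$, where $0 < q-1-i < q-1$ forces the affine sum to vanish and the infinity contribution is $0^i = 0$. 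Each case yields $0$, so $\mathbf{1} \in \PRS_q(q-1)^\perp$.

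There is no real obstacle here; the only thing to be careful about is the evaluation convention at the boundary monomials $X^{q-1}$ and $Y^{q-1}$ (i.e.\ the handling of $0^0$ at the point $(1:0)$ for $Y^{q-1}$ and at $(0:1)$ for monomials of the form $X^i Y^{q-1-i}$ with $i>0$), which is precisely what the paper's standard-representative evaluation convention takes care of. Combining the orthogonality with the dimension count finishes the proof.
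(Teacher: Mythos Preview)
Your argument is correct. The dimension count is right (the degree set $S^2(q-1)$ is $P$-reduced, so $\dim\PRS_q(q-1)=q$ and the dual is one-dimensional), and your case-by-case verification that $\mathbf{1}$ is orthogonal to each monomial $X^iY^{q-1-i}$ is clean and accurate, including the handling of the $0^0$ convention at the boundary monomials.

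As for comparison: the paper does not actually prove this lemma. It introduces it with the remark that ``it is well-known that dual codes of projective Reed-Solomon codes are also projective Reed-Solomon codes'' and states the $k=q-1$ case without further justification. Your proof is therefore strictly more than what the paper provides: you give a self-contained elementary verification using only the evaluation conventions and the standard power-sum identity over $\FF_q$, rather than appealing to the general (and deeper) duality $\PRS_q(k)^\perp \simeq \PRS_q(q-1-k)$. This makes your argument independent of that background fact, at the modest cost of being specific to the single parameter $k=q-1$.
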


Hence, a (non full-rank) parity-check matrix $H$ for $\PLift_q(m,k)$ can be written by listing in rows the incidence vectors of lines and points of the projective space $\PP^m$. More formally,
\[
H = \begin{pmatrix} ~ & \mathds{1}_{L_1} & ~ \\ ~ &  \vdots  & ~  \\  ~ &  \mathds{1}_{L_N} & ~  \end{pmatrix}\,
\]
where $\{L_1,\dots,L_N\}$ denotes the set of all the projective lines of $\PP^m$, and $\mathds{1}_X$ is the $\{0,1\}$-vector of length $\theta_{m,q} = |\PP^m|$ which is $1$ at coordinate $i$ if and only if $i \in X$ (for any $X \subset \PP^m$).

In fact, matrix $H$ is exactly the incidence matrix of the \emph{projective geometry design} $\mathrm{PG}_1(m, q)$, the block design of points and lines in the projective space $\PP^m$. Moreover, the vector space over $\FF_q$ spanned by this matrix gives rise to a linear code, which has been thoroughly studied and whose significant properties are given in~\cite{AssmusK92}. This code is known as the \emph{code spanned by the design} $\mathrm{PG}_1(m, q)$, and is denoted by $\calC(\mathrm{PG}_1(m, q))$. To sum up we have:

\begin{lemma}
  For every prime power $q$ and every $m \ge 2$, the projective lifted code $\PLift_q(m, q-1)$ and the code $\calC (\mathrm{PG}_1(m, q))$ spanned by the projective geometry design $\mathrm{PG}_1(m, q)$ are dual codes.
\end{lemma}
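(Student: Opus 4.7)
The plan is to establish that the line-point incidence matrix $H$ of $\mathrm{PG}_1(m,q)$, whose rows are the indicator vectors $\mathds{1}_L$ of projective lines, serves as a parity-check matrix for $\PLift_q(m,q-1)$. The duality statement will then follow tautologically, since $\calC(\mathrm{PG}_1(m,q))$ is the row span of $H$ while $\PLift_q(m,q-1)^\perp$ coincides with the row span of any parity-check matrix of the code.

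Set $v = k + (m-1)(q-1) = m(q-1)$. The first key step is the observation that $(q-1) \mid v$, so by Fermat's little theorem each entry $(\lambda_{L,\mathbf{x}})^v$ of the twist vector $w_L^v$ from Remark~\ref{rem:subword} equals $1$. Hence $w_L^v = (1,\ldots,1)$ for every $L \in \Emb_\PP(m)$, and for any $c = \ev_{\PP^m}(f) \in \PLift_q(m,q-1)$, the line restriction $\ev_{\PP^1}(f \circ L)$ coincides \emph{exactly} with the subword $c_{|L(\PP^1)}$, with no reweighting. Combining this with the preceding lemma $\PRS_q(q-1)^\perp = \PRS_q(0) = \langle (1,\ldots,1)\rangle$, the defining condition of projective lifted codes collapses into the scalar condition $\sum_{\mathbf{P} \in L(\PP^1)} c_{\mathbf{P}} = 0$, i.e.\ $\langle c, \mathds{1}_L\rangle = 0$, for every projective line $L$. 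This immediately yields the inclusion $\PLift_q(m,q-1) \subseteq \ker H$, equivalently $\calC(\mathrm{PG}_1(m,q)) \subseteq \PLift_q(m,q-1)^\perp$.

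The hard part will be the reverse inclusion, i.e.\ showing that $H$ is a \emph{full} (not merely partial) parity-check matrix. This amounts to proving that every $c \in \FF_q^{\PP^m}$ with $\sum_{\mathbf{P} \in L} c_{\mathbf{P}} = 0$ for all lines $L$ can be written as $\ev_{\PP^m}(f)$ for some $f \in \FF_q[\mathbf{X}]^H_{m(q-1)}$; the line-restriction property would then be automatic from the first step. I would attack this by dimension counting, combining the recursive formula of Corollary~\ref{coro:recursive-dimension} for $\dim \PLift_q(m,q-1)$ with the classical dimension formula of Hamada/Assmus-Key for $\dim \calC(\mathrm{PG}_1(m,q))$, and verifying that $\dim \PLift_q(m,q-1) + \dim \calC(\mathrm{PG}_1(m,q)) = \theta_{m,q}$. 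Alternatively, one reduces the problem to the inclusion $\PRM_q(m,m(q-1))^\perp \subseteq \langle \mathds{1}_L\rangle_L$, a classical codimension-one statement about the top-degree projective Reed-Muller code, where an explicit generator of $\PRM_q(m,m(q-1))^\perp$ (typically proportional to the all-ones vector) is exhibited as an $\FF_q$-combination of line indicators by counting lines through each point. Once this rank equality is established, $\ker H = \PLift_q(m,q-1)$, and therefore $\PLift_q(m,q-1)^\perp$ is the row span of $H$, which is exactly $\calC(\mathrm{PG}_1(m,q))$, yielding the claimed duality.
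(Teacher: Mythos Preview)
Your proposal is correct and follows the same route as the paper: show that the incidence matrix $H$ is a parity-check matrix for $\PLift_q(m,q-1)$, so that its row span $\calC(\mathrm{PG}_1(m,q))$ is the dual. The paper presents the lemma simply as a summary (``To sum up we have'') of the preceding paragraph, which is exactly your first step: $\PRS_q(q-1)^\perp=\langle\mathbf{1}\rangle$, combined with the observation $w_L^v=\mathbf{1}$ (since $(q-1)\mid v$), collapses the defining line-constraints of $\PLift_q(m,q-1)$ into the single parity check $\langle c,\mathds{1}_L\rangle=0$ for every line $L$.

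You are in fact more careful than the paper on the reverse inclusion $\ker H\subseteq\PLift_q(m,q-1)$, which the paper does not address explicitly. Your analysis is right: by the definition of $\PLift$ one needs $\ker H\subseteq\PRM_q(m,m(q-1))$, i.e.\ that every $c\in\ker H$ is the evaluation of \emph{some} homogeneous $f$ of degree $m(q-1)$. Your second suggested approach is the clean one and works directly: $\PRM_q(m,m(q-1))$ has codimension one with dual $\langle\mathbf{1}\rangle$, and $\sum_L\mathds{1}_L=\theta_{m-1,q}\,\mathbf{1}$ with $\theta_{m-1,q}\equiv 1\pmod p$, so $\mathbf{1}$ lies in the row span of $H$; dually, $\ker H\subseteq\PRM_q(m,m(q-1))$. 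Your first route via Hamada's rank formula would also succeed but is much heavier machinery for what is, once isolated, a one-line argument.
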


This characterisation allows us to obtain the dimension of projective lifted codes, for which the rank of matrices $H$ has been computed. For instance, it is proved (\emph{e.g.} in \cite{Smith69}) that the rank over $\FF_{p^t}$ of the design of points and lines in $\PP^2(\FF_{p^t})$ is $\binom{1+p}{2}^t + 1$. Therefore,

\begin{corollary}
For any $t\ge 1$ and any prime $p$, we have:
\[
  \dim \left( \PLift_{p^t}(2, p^t-1) \right) = p^{2t} + p^t - \Big( \frac{p(p+1)}{2} \Big)^t\,.
\]
\end{corollary}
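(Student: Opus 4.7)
The plan is to directly apply the preceding lemma together with Smith's formula for the rank of the incidence matrix; the result will then follow from a single dimension-counting argument for the dual code, so there is essentially no conceptual obstacle, only a short calculation to set up.

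First I would recall that by the lemma immediately preceding the corollary, the projective lifted code $\PLift_{p^t}(2, p^t-1)$ is the dual of $\calC(\mathrm{PG}_1(2, p^t))$, the code spanned over $\FF_{p^t}$ by the incidence matrix of points and lines in $\PP^2(\FF_{p^t})$. Both codes therefore share the same length, namely
\[
n = |\PP^2(\FF_{p^t})| = \theta_{2,p^t} = \frac{(p^t)^3 - 1}{p^t - 1} = p^{2t} + p^t + 1\,.
\]

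Next I would invoke Smith's theorem (cited in the excerpt as \cite{Smith69}), which states that the $\FF_{p^t}$-rank of the incidence matrix of points and lines of $\PP^2(\FF_{p^t})$ equals $\binom{1+p}{2}^t + 1 = \bigl(\tfrac{p(p+1)}{2}\bigr)^t + 1$. Since this rank is by definition the dimension of the code $\calC(\mathrm{PG}_1(2, p^t))$, we obtain
\[
\dim \calC(\mathrm{PG}_1(2, p^t)) = \Big( \tfrac{p(p+1)}{2} \Big)^t + 1\,.
\]

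Finally, the elementary identity $\dim \calD + \dim \calD^{\perp} = n$ for any linear code $\calD \subseteq \FF_{p^t}^n$ gives
\[
\dim \PLift_{p^t}(2, p^t-1) = n - \dim \calC(\mathrm{PG}_1(2, p^t)) = (p^{2t} + p^t + 1) - \Big( \Big( \tfrac{p(p+1)}{2} \Big)^t + 1 \Big)\,,
\]
which simplifies to $p^{2t} + p^t - \bigl(\tfrac{p(p+1)}{2}\bigr)^t$, as required. The only step requiring any external input is Smith's rank formula; everything else is bookkeeping.
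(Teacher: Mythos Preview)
Your proposal is correct and follows essentially the same approach as the paper: the paper does not give a separate proof but simply states the corollary as an immediate consequence of the preceding duality lemma combined with Smith's rank formula, which is exactly what you do.
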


\section{Conclusion}

In this work we introduced lifted projective Reed-Solomon codes as an analogue of the lifting of Reed-Solomon codes studied by Guo, Kopparty and Sudan in~\cite{GuoKS13}. We presented local correcting algorithms for these codes, and proved their practicality through explicit bases, information sets and automorphisms. However, similarly to the affine setting, we still lack closed formulae for the dimension of the codes. Future works may then consist in keeping studying the lifting process, for a better understanding of the structure of lifted codes. A generalisation of our work to the lifting of projective Reed-Muller codes or other codes invariant under $\Proj(\FF_q, t)$ would also be of interest.

\subsection*{Acknowledgements}
This work is partially funded by  French ANR-15-CE39-0013-01 \enquote{Manta}.
The author would like to thank Françoise Levy-dit-Vehel and Daniel Augot for their valuable comments and advice concerning the presentation of the results.

\bibliographystyle{alpha}
\bibliography{projective-lift}

\appendix

\section{Useful results}
\label{app:useful-results}

\subsection{Combinatorial \emph{Nullstellensatz}}

We recall the Combinatorial \emph{Nullstellensatz} proved by Alon in~\cite{Alon99}.
\begin{theorem}[Combinatorial \emph{Nullstellensatz}~\cite{Alon99}]
  \label{thm:combinatorial-nullstellensatz}
  Let $\FF$ be a field and $f \in \FF[X_1, \dots, X_r]$. Assume that $\deg(f) = \sum_{i=1}^r t_i$ and the coefficient of the monomial $\mathbf{X^t} = \prod_{i=1}^r X_i^{t_i}$ in $f$ is non-zero (in other words, assume that $\mathbf{t} = (t_1, \dots, t_r)$ is a degree of $f$). Let finally $W_1, \dots, W_r \subseteq \FF$ such that $|W_i| > t_i$ for every $1 \le i \le r$.

Then, there exists $\mathbf{w} \in W_1 \times \dots \times W_r$ such that $f(\mathbf{w}) \ne 0$.
\end{theorem}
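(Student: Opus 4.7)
The strategy is the classical one due to Alon: first prove by induction on $r$ an auxiliary vanishing lemma, then reduce $f$ to a polynomial of small partial degrees and apply the lemma.

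\textbf{Step 1 (auxiliary lemma).} I would first prove the following: if $h \in \FF[X_1,\dots,X_r]$ has partial degree $\deg_{X_i}(h) \le t_i$ for every $i$, and $h$ vanishes on the grid $W_1 \times \dots \times W_r$ with $|W_i| > t_i$, then $h \equiv 0$. The proof is by induction on $r$. For $r=1$, this is the elementary fact that a nonzero univariate polynomial of degree $\le t_1$ has at most $t_1$ roots, but vanishes on $W_1$ with $|W_1|>t_1$. For the induction step, write $h = \sum_{j=0}^{t_r} h_j(X_1,\dots,X_{r-1}) X_r^j$. For each fixed $(w_1,\dots,w_{r-1}) \in W_1 \times \dots \times W_{r-1}$, the univariate polynomial $\sum_j h_j(w_1,\dots,w_{r-1}) X_r^j$ has degree $\le t_r$ and vanishes on $W_r$, hence is identically zero. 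Therefore each $h_j$ vanishes on $W_1 \times \dots \times W_{r-1}$, and the induction hypothesis gives $h_j=0$ for all $j$, so $h = 0$.

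\textbf{Step 2 (reduction of $f$).} For each $i$, set $g_i(X_i) \mydef \prod_{w \in W_i}(X_i - w) \in \FF[X_i]$. Since $g_i$ is monic of degree $|W_i|$, one can iteratively replace each occurrence of $X_i^{|W_i|}$ appearing in $f$ by $X_i^{|W_i|} - g_i(X_i)$, which is a polynomial in $X_i$ of degree $< |W_i|$. Carrying this out for every variable yields a polynomial $h \in \FF[X_1,\dots,X_r]$ satisfying $\deg_{X_i}(h) \le |W_i|-1$, together with a representation
\[
f = h + \sum_{i=1}^r g_i(X_i)\, q_i(X_1,\dots,X_r)
\]
for some $q_i \in \FF[X_1,\dots,X_r]$. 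Because $g_i$ vanishes identically on $W_i$, the identity above shows that $f$ and $h$ take the same values at every point of $W_1 \times \dots \times W_r$.

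\textbf{Step 3 (coefficient preservation).} The crucial point, and the one I expect to be the main obstacle in the write-up, is to check that the coefficient of $\mathbf{X^t}$ in $h$ equals the one in $f$, i.e.\ remains nonzero. This uses the sharp hypothesis $\deg(f) = \sum_i t_i$ together with $t_i < |W_i|$. Indeed, each term $g_i(X_i) q_i$ has degree $\ge |W_i|$ in the variable $X_i$, and its total degree is at most $\deg(f)$. If such a term produced a monomial $\mathbf{X^t}$, one would need an exponent of $X_i$ equal to $t_i$ in that monomial, but the factor $g_i$ already contributes degree $|W_i| > t_i$ in $X_i$, contradicting the bound $\deg \le \sum_j t_j$ when combined with the other coordinates. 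Hence the contributions of the $g_i q_i$ to the coefficient of $\mathbf{X^t}$ cancel out, and $h$ retains the nonzero coefficient of $\mathbf{X^t}$.

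\textbf{Step 4 (conclusion).} Suppose, for contradiction, that $f$ vanishes on all of $W_1 \times \dots \times W_r$. Then so does $h$, by Step 2. But $h$ has $\deg_{X_i}(h) \le |W_i|-1$, so by Step~1 we would get $h \equiv 0$, contradicting the fact from Step~3 that the coefficient of $\mathbf{X^t}$ in $h$ is nonzero. Therefore there exists $\mathbf{w} \in W_1 \times \dots \times W_r$ with $f(\mathbf{w}) \ne 0$, as required.
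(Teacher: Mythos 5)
The paper does not contain a proof of Theorem~\ref{thm:combinatorial-nullstellensatz}: it is recalled verbatim from Alon's paper and used as a black box, so there is no internal argument to compare yours against. Your write-up is the standard proof of the Combinatorial \emph{Nullstellensatz} (grid-vanishing lemma by induction on the number of variables, reduction of $f$ modulo the polynomials $g_i(X_i)=\prod_{w\in W_i}(X_i-w)$, preservation of the coefficient of $\mathbf{X^t}$), and it is correct in substance. One point in Step~3 should be stated more carefully: it is not true that every monomial of $g_i q_i$ has $X_i$-exponent at least $|W_i|$, because $g_i$ also has lower-order terms. The clean justification is that the reduction never increases total degree, so $\deg(g_i q_i)\le \deg f=\sum_j t_j$ and hence $\deg q_i \le \sum_j t_j - |W_i|$; consequently any monomial of $g_i q_i$ of total degree exactly $\sum_j t_j$ must arise from the leading term $X_i^{|W_i|}$ of $g_i$, and therefore has $X_i$-exponent at least $|W_i|>t_i$, so it cannot equal $\mathbf{X^t}$. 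Thus the coefficient of $\mathbf{X^t}$ in each $g_i q_i$ is zero individually (no cancellation between the terms is needed), which is exactly what Step~3 requires; with that rephrasing the proof is complete.
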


\subsection{Technical results}

\begin{lemma}
  \label{lem:technical-sum}
  The following equality over bivariate polynomials holds:
  \[
  \sum_{\beta \in \FF_q} (\beta X + Y)^{q-1} = -X^{q-1}\,.
  \]
\end{lemma}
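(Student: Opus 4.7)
The plan is to expand $(\beta X + Y)^{q-1}$ via the binomial theorem, swap the order of summation, and then exploit the standard evaluation of power sums over $\FF_q$. Specifically, I would write
\[
\sum_{\beta \in \FF_q} (\beta X + Y)^{q-1} = \sum_{k=0}^{q-1} \binom{q-1}{k} X^k Y^{q-1-k} \Bigl( \sum_{\beta \in \FF_q} \beta^k \Bigr),
\]
so that the claim reduces to analyzing the inner sum $S_k \mydef \sum_{\beta \in \FF_q} \beta^k$ for $0 \le k \le q-1$.

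The key fact to invoke is that for $k \ge 1$, the sum $\sum_{\beta \in \FF_q^{\times}} \beta^k$ equals $-1$ if $(q-1) \mid k$ and $0$ otherwise, while the $k = 0$ contribution from $\FF_q^{\times}$ is $q-1 = -1$; including $\beta = 0$ adds $0^k = 0$ when $k \ge 1$ and $1$ when $k = 0$ (using the convention $0^0 = 1$). Hence $S_0 = q = 0$ in $\FF_q$, $S_k = 0$ for $1 \le k \le q-2$, and $S_{q-1} = -1$.

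Substituting back, only the $k = q-1$ term survives, yielding $\binom{q-1}{q-1} X^{q-1} Y^0 \cdot (-1) = -X^{q-1}$, which is the claimed identity. There is no real obstacle here — the argument is a direct computation — so the proof is essentially a three-line verification once the expansion and power-sum facts are in place.
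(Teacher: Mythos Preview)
Your proof is correct and follows essentially the same approach as the paper: expand by the binomial theorem, swap summations, and evaluate the resulting power sums over $\FF_q$. The only cosmetic difference is that the paper separates the term $\beta=0$ (contributing $Y^{q-1}$) before summing over $\FF_q^\times$ via a primitive element, whereas you sum over all of $\FF_q$ at once using the convention $0^0=1$; the two computations are equivalent.
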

\begin{proof}
  Let $\alpha$ be a primitive element of $\FF_q$.
  \[
  \begin{aligned}
  \sum_{\beta \in \FF_q} (\beta X + Y)^{q-1} &= Y^{q-1} + \sum_{i=0}^{q-2} (\alpha^i X + Y)^{q-1} = Y^{q-1} + \sum_{i=0}^{q-2} \sum_{j=0}^{q-1} \binom{q-1}{j} \alpha^{ij} X^j Y^{q-1-j}\\
  &= Y^{q-1} + \sum_{j=0}^{q-1} \binom{q-1}{j} \left( \sum_{i=0}^{q-2} (\alpha^j)^i \right) X^j Y^{q-1-j}\\
  &= Y^{q-1} + \left( \binom{q-1}{0} \times (-1) \times X^0 Y^{q-1} \right) + \left( \binom{q-1}{q-1} \times (-1) \times X^{q-1} Y^0 \right) \\
  &= -X^{q-1}
  \end{aligned}
  \]
\end{proof}

\subsection{Automorphism groups of (projective) Reed-Muller codes}
\label{app:automorphism-groups-codes}

The automorphism group of affine Reed-Muller codes has been thoroughly studied by Berger and Charpin in~\cite{BergerC93} with group algebra techniques. For our needs, we recall below that this group contains the subgroup of affine transformations. 

\begin{proposition}[Reed-Muller code]
  Let $0 \le k \le m(q-1)$. The automorphism group of the Reed-Muller code $\calC = \RM_q(m, k)$ contains the affine permutations $\Aff(\FF_q, m)$.
\end{proposition}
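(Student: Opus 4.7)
The plan is to show that each affine permutation $T_{M,\mathbf{b}} \in \Aff(\FF_q,m)$, acting by coordinate permutation on $\FF_q^{\AA^m}$, maps any codeword $c = \ev_{\AA^m}(f) \in \RM_q(m,k)$ to another codeword of the same form. The key reduction is to exhibit an explicit polynomial $g$ of total degree $\le k$ such that the image of $c$ equals $\ev_{\AA^m}(g)$.

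First I would unwind the definition: for $T = T_{M,\mathbf{b}}$ and $c = \ev_{\AA^m}(f)$, the action on coordinates gives, for every $\mathbf{x} \in \AA^m$,
\[
(T \cdot c)_{\mathbf{x}} = c_{T^{-1}(\mathbf{x})} = f\bigl(T^{-1}(\mathbf{x})\bigr).
\]
Since $T$ is affine and invertible, its inverse $T^{-1}(\mathbf{y}) = M^{-1}\mathbf{y} - M^{-1}\mathbf{b}$ is also affine, i.e.\ lies in $\Aff(\FF_q,m)$. Setting $g \mydef f \circ T^{-1} \in \FF_q[\mathbf{X}]$, one obtains $T \cdot c = \ev_{\AA^m}(g)$.

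The second step is the degree bound. Each coordinate of $T^{-1}(\mathbf{x})$ is an $\FF_q$-affine form in $X_1,\dots,X_m$, hence a polynomial of total degree at most $1$. Substituting these into a monomial $\mathbf{X}^{\mathbf{d}}$ with $|\mathbf{d}| \le k$ produces a polynomial of total degree at most $|\mathbf{d}| \le k$, and extending by $\FF_q$-linearity yields $\deg(g) \le \deg(f) \le k$. Therefore $g \in \FF_q[\mathbf{X}]_k$, so $T \cdot c \in \RM_q(m,k)$, proving $\Aff(\FF_q, m) \subseteq \mathrm{Perm}(\RM_q(m,k))$.

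There is no real obstacle here; the only thing to check carefully is the total degree inequality under affine substitution, which is the content of the second step. Since $\Aff(\FF_q,m)$ is generated by linear automorphisms in $\GL_m(\FF_q)$ and translations $\mathbf{x} \mapsto \mathbf{x} + \mathbf{b}$, one could equivalently verify invariance under these two classes of generators separately, but the unified argument above is cleaner and avoids any case distinction.
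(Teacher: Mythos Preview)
Your proof is correct and follows essentially the same approach as the paper: identify $T_{M,\mathbf{b}}(c)$ with $\ev_{\AA^m}(f \circ T_{M,\mathbf{b}}^{-1})$ and observe that composing with an affine map does not increase the total degree. The paper's proof is slightly terser (it notes the total degree is in fact preserved, since $T^{-1}$ is invertible), but your inequality $\deg(g)\le\deg(f)$ already suffices.
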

\begin{proof}
  Let $c = \ev_{\AA^m}(f) \in \calC$, $M \in \GL_m(\FF_q)$ and $\mathbf{b} \in \FF_q^m$. Denote by $T_{M, \mathbf{b}}(\mathbf{x}) = M\mathbf{x} + \mathbf{b}$ for every $\mathbf{x} \in \AA^m$. Let us prove that $T_{M, \mathbf{b}}(c) \in \calC$.

We remark that $T_{M, \mathbf{b}}(c) = \ev_{\AA^m}(f \circ T_{M, \mathbf{b}}^{-1})$. Since $T_{M, \mathbf{b}}$ is affine, so is $T_{M, \mathbf{b}}^{-1}$, and the total degree of $f \circ T_{M, \mathbf{b}}^{-1}$ is the same that the total degree of $f$. Hence $\ev_{\AA^m}(f \circ T_{M, \mathbf{b}}^{-1}) \in \RM_q(m, k)$ and the proof is completed.
\end{proof}

A few years later, Berger also studied the automorphism group of projective Reed-Muller codes~\cite{Berger02}.

\begin{proposition}[projective Reed-Muller code]
  \label{prop:automorphism-group-PRM}
  Let $0 \le v \le m(q-1)$. The automorphism group of the projective Reed-Muller code $\calC = \PRM_q(m, v)$ contains the projective automorphisms $\Proj(\FF_q, m)$.
\end{proposition}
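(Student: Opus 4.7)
The plan is to invoke directly the identity derived at the end of Subsection~\ref{subsec:automorphims}, namely that for every $f \in \FF_q[\mathbf{X}]_v^H$ and every $M \in \GL_{m+1}(\FF_q)$,
\[
\ev_{\PP^m}(f \circ M) \;=\; w_M^v \,\star\, \sigma_{M^{-1}}(\ev_{\PP^m}(f)).
\]
This identity is precisely what was engineered to compensate for the fact that $M$ need not preserve the standard representatives of projective points; all the nontrivial work has therefore already been done in the introduction of the weights $w_M^v$.

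First, I would pick an arbitrary codeword $c \in \calC = \PRM_q(m, v)$, say $c = \ev_{\PP^m}(f)$ with $f \in \FF_q[\mathbf{X}]_v^H$, and an arbitrary element $(w_M^v, \sigma_{M^{-1}}) \in \Proj(\FF_q, m)$ with $M \in \GL_{m+1}(\FF_q)$. By the definition of the action of $\Iso(\PP^m)$ on $\FF_q^{\PP^m}$ recalled in Subsection~\ref{subsec:automorphims}, one has
\[
(w_M^v, \sigma_{M^{-1}}) \cdot c \;=\; w_M^v \,\star\, \sigma_{M^{-1}}(c),
\]
and by the identity above this equals $\ev_{\PP^m}(f \circ M)$.

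It then remains to observe that $f \circ M$ is again a homogeneous polynomial of degree $v$: indeed, $M$ substitutes each variable $X_i$ by a homogeneous linear form $\sum_j M_{ij} X_j$, so every monomial of $f$ of total degree $v$ is sent to a sum of monomials each of total degree exactly $v$, and homogeneity is preserved. Consequently $f \circ M \in \FF_q[\mathbf{X}]_v^H$, which gives $\ev_{\PP^m}(f \circ M) \in \PRM_q(m, v) = \calC$. Since this holds for every codeword $c \in \calC$ and every $M \in \GL_{m+1}(\FF_q)$, we conclude $\Proj(\FF_q, m) \subseteq \Aut(\calC)$, as desired.

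There is essentially no obstacle here: the only subtlety is conceptual, namely that the naive permutation $\sigma_M$ alone does \emph{not} preserve $\calC$ (because standard representatives are not preserved by $M$), and the correction by the diagonal weight $w_M^v$ is exactly what makes the action land back in the code. Once the identity relating $\ev_{\PP^m}(f \circ M)$ to $w_M^v \star \sigma_{M^{-1}}(\ev_{\PP^m}(f))$ is in hand, the proof reduces to the trivial remark that homogeneity of degree $v$ is invariant under linear substitution.
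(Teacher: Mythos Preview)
Your proof is correct and follows essentially the same approach as the paper's: both invoke the identity $\ev_{\PP^m}(f \circ M) = w_M^v \star \sigma_{M^{-1}}(\ev_{\PP^m}(f))$ from Subsection~\ref{subsec:automorphims} and then observe that linear substitution preserves homogeneity of degree $v$. The paper's proof is simply terser, referring back to the analogous argument for affine Reed-Muller codes, while you spell out the details explicitly.
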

\begin{proof}
  Using that
  \[
  \ev_{\PP^m}(f \circ M) = w_M^v \star \sigma_{M^{-1}}(\ev_{\PP^m}(f))
  \]
  for every $(w_M^v, \sigma_{M^{-1}}) \in \Proj(\FF_q, m)$, the proof is very similar to the previous one.
\end{proof}

\section{Building the query generator $\calR_s$}
\label{app:query-generator}

We recall that in our local correction algorithm (Section~\ref{subsec:local-correcting-algorithms}) we need a randomized query generator $\calR_s$ which, given a point $\mathbf{P} \in \PP^m$ and an embedding $L \in \Emb_\PP(m, \mathbf{P})$, returns $s$ random points of $L(\PP^1)$ such that:
\begin{equation}
  \label{eq:uniform-probability}
\forall \mathbf{Q} \in \PP^m, \mathrm{Pr}_{\substack{L \leftarrow \Emb_\PP(m, \mathbf{P})\\S \leftarrow \calR_s(\mathbf{P}, L)}} [ \mathbf{Q} \in S ] = s/n\,.
\end{equation}

The tricky point is that, for a fixed $L \in \Emb_\PP(m, \mathbf{P})$, we cannot pick the $s$ points \emph{uniformly} at random on $L(\PP^1)$, otherwise the point $\mathbf{P}$ will have a larger probability to be chosen than the other points. We provide a solution to this issue in Algorithm~\ref{algo:query-generator}.

\begin{algorithm}[h!]
  \KwData{a point $\mathbf{P}$, an embedding $L \in \Emb_\PP(m, \mathbf{P})$}
  \KwResult{$s$ points $S = \{\mathbf{Q}_1, \dots, \mathbf{Q}_s\} \subseteq L(\PP^1)$ satisfying~\eqref{eq:uniform-probability}}
  Set $S = \varnothing$.\\
  Add $\mathbf{P}$ to $S$ with probability $s/n$.\\
  \eIf{$\mathbf{P} \in S$}{
    Pick uniformly at $U$ random an $(s-1)$-subset of $L(\PP^1) \setminus \{\mathbf{P}\}$.\\
    $S := S \cup U$.\\
  }{
    
    Pick uniformly at $U$ random an $s$-subset of $L(\PP^1) \setminus \{\mathbf{P}\}$.\\
    $S := U$.\\
  }
  Return $S$.
  \caption{\label{algo:query-generator}Query generator $\calR_s$}
\end{algorithm}

\section{Computation of the dimension of lifted codes}
\label{app:computation}

In the following tables are presented some parameters of affine lifted codes, projective lifted codes and projective Reed-Muller codes. We denote respectively by $n_A$, $\dim(A)$ and $R_A$ the length, dimension and rate of $A = \Lift_q(m, k-1)$ the value of $k$ given in the first row ($q$ and $m$ being fixed in each table). Similarly, $n_P$, $\dim(P)$ and $R_P$ represent the length, dimension and rate of $P = \PLift_q(m, k)$, while $\dim(\PRM)$ and $R_{\PRM}$ denote the dimension and the rate of $\PRM_q(m, k)$ (its length being $n_P$).

We choose to compare these codes because, in the local correcting algorithm, they admit approximately the same error-correction capability and locality. Our goal is to show how lifting leads to higher rates, and that projective and affine lifted codes behave similarly.

\subsection{Parameters of the kind $m=2$, $q=2^t$}

\begin{table}[!h]
\centering
\begin{tabular}{c|ccc|ccc|cc}
$k$ & $n_A$ & $\dim(A)$ & $R_A$ & $n_P$ & $\dim(P)$ & $R_P$ & $\dim(\PRM)$ & $R_{\PRM}$ \\ \hline
$1$ & $16$ & $1$ & $0.0625$ & $21$ & $3$ & $0.143$ & $3$ & $0.143$ \\
$2$ & $16$ & $3$ & $0.188$ & $21$ & $6$ & $0.286$ & $6$ & $0.286$ \\
$3$ & $16$ & $7$ & $0.438$ & $21$ & $11$ & $0.524$ & $10$ & $0.476$ \\
\end{tabular}
\caption{Parameters for $m=2$ and $q=4$.}
\end{table}

\begin{table}[!h]
\centering
\begin{tabular}{c|ccc|ccc|cc}
$k$ & $n_A$ & $\dim(A)$ & $R_A$ & $n_P$ & $\dim(P)$ & $R_P$ & $\dim(\PRM)$ & $R_{\PRM}$ \\ \hline
$1$ & $64$ & $1$ & $0.0156$ & $73$ & $3$ & $0.0411$ & $3$ & $0.0411$ \\
$2$ & $64$ & $3$ & $0.0469$ & $73$ & $6$ & $0.0822$ & $6$ & $0.0822$ \\
$3$ & $64$ & $6$ & $0.0938$ & $73$ & $10$ & $0.137$ & $10$ & $0.137$ \\
$4$ & $64$ & $10$ & $0.156$ & $73$ & $15$ & $0.205$ & $15$ & $0.205$ \\
$5$ & $64$ & $16$ & $0.25$ & $73$ & $22$ & $0.301$ & $21$ & $0.288$ \\
$6$ & $64$ & $24$ & $0.375$ & $73$ & $31$ & $0.425$ & $28$ & $0.384$ \\
$7$ & $64$ & $37$ & $0.578$ & $73$ & $45$ & $0.616$ & $36$ & $0.493$ \\
\end{tabular}
\caption{Parameters for $m=2$ and $q=8$.}
\end{table}

\begin{table}[!h]
\centering
\begin{tabular}{c|ccc|ccc|cc}
$k$ & $n_A$ & $\dim(A)$ & $R_A$ & $n_P$ & $\dim(P)$ & $R_P$ & $\dim(\PRM)$ & $R_{\PRM}$ \\ \hline
$8$ & $256$ & $36$ & $0.141$ & $273$ & $45$ & $0.165$ & $45$ & $0.165$ \\
$9$ & $256$ & $46$ & $0.18$ & $273$ & $56$ & $0.205$ & $55$ & $0.201$ \\
$10$ & $256$ & $58$ & $0.227$ & $273$ & $69$ & $0.253$ & $66$ & $0.242$ \\
$11$ & $256$ & $72$ & $0.281$ & $273$ & $84$ & $0.308$ & $78$ & $0.286$ \\
$12$ & $256$ & $88$ & $0.344$ & $273$ & $101$ & $0.37$ & $91$ & $0.333$ \\
$13$ & $256$ & $109$ & $0.426$ & $273$ & $123$ & $0.451$ & $105$ & $0.385$ \\
$14$ & $256$ & $135$ & $0.527$ & $273$ & $150$ & $0.549$ & $120$ & $0.44$ \\
$15$ & $256$ & $175$ & $0.684$ & $273$ & $191$ & $0.7$ & $136$ & $0.498$ \\
\end{tabular}
\caption{Parameters for $m=2$ and $q=16$.}
\end{table}

\begin{table}[!h]
\centering
\begin{tabular}{c|ccc|ccc|cc}
$k$ & $n_A$ & $\dim(A)$ & $R_A$ & $n_P$ & $\dim(P)$ & $R_P$ & $\dim(\PRM)$ & $R_{\PRM}$ \\ \hline
$24$ & $1024$ & $336$ & $0.328$ & $1057$ & $361$ & $0.342$ & $325$ & $0.307$ \\
$25$ & $1024$ & $373$ & $0.364$ & $1057$ & $399$ & $0.377$ & $351$ & $0.332$ \\
$26$ & $1024$ & $415$ & $0.405$ & $1057$ & $442$ & $0.418$ & $378$ & $0.358$ \\
$27$ & $1024$ & $462$ & $0.451$ & $1057$ & $490$ & $0.464$ & $406$ & $0.384$ \\
$28$ & $1024$ & $514$ & $0.502$ & $1057$ & $543$ & $0.514$ & $435$ & $0.412$ \\
$29$ & $1024$ & $580$ & $0.566$ & $1057$ & $610$ & $0.577$ & $465$ & $0.44$ \\
$30$ & $1024$ & $660$ & $0.645$ & $1057$ & $691$ & $0.654$ & $496$ & $0.469$ \\
$31$ & $1024$ & $781$ & $0.763$ & $1057$ & $813$ & $0.769$ & $528$ & $0.5$ \\
\end{tabular}
\caption{Parameters for $m=2$ and $q=32$.}
\end{table}

\begin{table}[!h]
\centering
\begin{tabular}{c|ccc|ccc|cc}
$k$ & $n_A$ & $\dim(A)$ & $R_A$ & $n_P$ & $\dim(P)$ & $R_P$ & $\dim(\PRM)$ & $R_{\PRM}$ \\ \hline
$56$ & $4096$ & $2004$ & $0.489$ & $4161$ & $2061$ & $0.495$ & $1653$ & $0.397$ \\
$57$ & $4096$ & $2122$ & $0.518$ & $4161$ & $2180$ & $0.524$ & $1711$ & $0.411$ \\
$58$ & $4096$ & $2254$ & $0.55$ & $4161$ & $2313$ & $0.556$ & $1770$ & $0.425$ \\
$59$ & $4096$ & $2400$ & $0.586$ & $4161$ & $2460$ & $0.591$ & $1830$ & $0.44$ \\
$60$ & $4096$ & $2560$ & $0.625$ & $4161$ & $2621$ & $0.63$ & $1891$ & $0.454$ \\
$61$ & $4096$ & $2761$ & $0.674$ & $4161$ & $2823$ & $0.678$ & $1953$ & $0.469$ \\
$62$ & $4096$ & $3003$ & $0.733$ & $4161$ & $3066$ & $0.737$ & $2016$ & $0.484$ \\
$63$ & $4096$ & $3367$ & $0.822$ & $4161$ & $3431$ & $0.825$ & $2080$ & $0.5$ \\
\end{tabular}
\caption{Parameters for $m=2$ and $q=64$.}
\end{table}

\newpage
\subsection{Parameters of the kind $m=3$, $q=2^t$}

\begin{table}[!h]
\centering
\begin{tabular}{c|ccc|ccc|cc}
$k$ & $n_A$ & $\dim(A)$ & $R_A$ & $n_P$ & $\dim(P)$ & $R_P$ & $\dim(\PRM)$ & $R_{\PRM}$ \\ \hline
$1$ & $64$ & $1$ & $0.0156$ & $85$ & $4$ & $0.0471$ & $4$ & $0.0471$ \\
$2$ & $64$ & $4$ & $0.0625$ & $85$ & $10$ & $0.118$ & $10$ & $0.118$ \\
$3$ & $64$ & $13$ & $0.203$ & $85$ & $24$ & $0.282$ & $20$ & $0.235$ \\
\end{tabular}
\caption{Parameters for $m=3$ and $q=4$.}
\end{table}

\begin{table}[!h]
\centering
\begin{tabular}{c|ccc|ccc|cc}
$k$ & $n_A$ & $\dim(A)$ & $R_A$ & $n_P$ & $\dim(P)$ & $R_P$ & $\dim(\PRM)$ & $R_{\PRM}$ \\ \hline
$1$ & $512$ & $1$ & $0.00195$ & $585$ & $4$ & $0.00684$ & $4$ & $0.00684$ \\
$2$ & $512$ & $4$ & $0.00781$ & $585$ & $10$ & $0.0171$ & $10$ & $0.0171$ \\
$3$ & $512$ & $10$ & $0.0195$ & $585$ & $20$ & $0.0342$ & $20$ & $0.0342$ \\
$4$ & $512$ & $20$ & $0.0391$ & $585$ & $35$ & $0.0598$ & $35$ & $0.0598$ \\
$5$ & $512$ & $38$ & $0.0742$ & $585$ & $60$ & $0.103$ & $56$ & $0.0957$ \\
$6$ & $512$ & $69$ & $0.135$ & $585$ & $100$ & $0.171$ & $84$ & $0.144$ \\
$7$ & $512$ & $139$ & $0.271$ & $585$ & $184$ & $0.315$ & $120$ & $0.205$ \\
\end{tabular}
\caption{Parameters for $m=3$ and $q=8$.}
\end{table}

\begin{table}[!h]
\centering
\begin{tabular}{c|ccc|ccc|cc}
$k$ & $n_A$ & $\dim(A)$ & $R_A$ & $n_P$ & $\dim(P)$ & $R_P$ & $\dim(\PRM)$ & $R_{\PRM}$ \\ \hline
$8$ & $4096$ & $120$ & $0.0293$ & $4369$ & $165$ & $0.0378$ & $165$ & $0.0378$ \\
$9$ & $4096$ & $168$ & $0.041$ & $4369$ & $224$ & $0.0513$ & $220$ & $0.0504$ \\
$10$ & $4096$ & $233$ & $0.0569$ & $4369$ & $302$ & $0.0691$ & $286$ & $0.0655$ \\
$11$ & $4096$ & $320$ & $0.0781$ & $4369$ & $404$ & $0.0925$ & $364$ & $0.0833$ \\
$12$ & $4096$ & $434$ & $0.106$ & $4369$ & $535$ & $0.122$ & $455$ & $0.104$ \\
$13$ & $4096$ & $601$ & $0.147$ & $4369$ & $724$ & $0.166$ & $560$ & $0.128$ \\
$14$ & $4096$ & $854$ & $0.208$ & $4369$ & $1004$ & $0.23$ & $680$ & $0.156$ \\
$15$ & $4096$ & $1377$ & $0.336$ & $4369$ & $1568$ & $0.359$ & $816$ & $0.187$ \\
\end{tabular}
\caption{Parameters for $m=3$ and $q=16$.}
\end{table}

\begin{table}[!h]
\centering
\begin{tabular}{c|ccc|ccc|cc}
$k$ & $n_A$ & $\dim(A)$ & $R_A$ & $n_P$ & $\dim(P)$ & $R_P$ & $\dim(\PRM)$ & $R_{\PRM}$ \\ \hline
$24$ & $32768$ & $3044$ & $0.0929$ & $33825$ & $3405$ & $0.101$ & $2925$ & $0.0865$ \\
$25$ & $32768$ & $3561$ & $0.109$ & $33825$ & $3960$ & $0.117$ & $3276$ & $0.0969$ \\
$26$ & $32768$ & $4192$ & $0.128$ & $33825$ & $4634$ & $0.137$ & $3654$ & $0.108$ \\
$27$ & $32768$ & $4970$ & $0.152$ & $33825$ & $5460$ & $0.161$ & $4060$ & $0.12$ \\
$28$ & $32768$ & $5928$ & $0.181$ & $33825$ & $6471$ & $0.191$ & $4495$ & $0.133$ \\
$29$ & $32768$ & $7250$ & $0.221$ & $33825$ & $7860$ & $0.232$ & $4960$ & $0.147$ \\
$30$ & $32768$ & $9169$ & $0.28$ & $33825$ & $9860$ & $0.292$ & $5456$ & $0.161$ \\
$31$ & $32768$ & $13011$ & $0.397$ & $33825$ & $13824$ & $0.409$ & $5984$ & $0.177$ \\
\end{tabular}
\caption{Parameters for $m=3$ and $q=32$.}
\end{table}

\begin{table}[!h]
\centering
\begin{tabular}{c|ccc|ccc|cc}
$k$ & $n_A$ & $\dim(A)$ & $R_A$ & $n_P$ & $\dim(P)$ & $R_P$ & $\dim(\PRM)$ & $R_{\PRM}$ \\ \hline
$56$ & $262144$ & $44064$ & $0.168$ & $266305$ & $46125$ & $0.173$ & $32509$ & $0.122$ \\
$57$ & $262144$ & $48340$ & $0.184$ & $266305$ & $50520$ & $0.19$ & $34220$ & $0.128$ \\
$58$ & $262144$ & $53401$ & $0.204$ & $266305$ & $55714$ & $0.209$ & $35990$ & $0.135$ \\
$59$ & $262144$ & $59480$ & $0.227$ & $266305$ & $61940$ & $0.233$ & $37820$ & $0.142$ \\
$60$ & $262144$ & $66810$ & $0.255$ & $266305$ & $69431$ & $0.261$ & $39711$ & $0.149$ \\
$61$ & $262144$ & $76717$ & $0.293$ & $266305$ & $79540$ & $0.299$ & $41664$ & $0.156$ \\
$62$ & $262144$ & $90874$ & $0.347$ & $266305$ & $93940$ & $0.353$ & $43680$ & $0.164$ \\
$63$ & $262144$ & $118873$ & $0.453$ & $266305$ & $122304$ & $0.459$ & $45760$ & $0.172$ \\
\end{tabular}
\caption{Parameters for $m=3$ and $q=64$.}
\end{table}

\end{document}